\PassOptionsToPackage{unicode}{hyperref}
\PassOptionsToPackage{hyphens}{url}
\PassOptionsToPackage{dvipsnames,svgnames,x11names}{xcolor}
\documentclass[
  12pt]{article}

\usepackage{amsmath,amssymb,bm}
\usepackage{iftex}
\usepackage{algpseudocode}
\usepackage{algorithmicx}
\usepackage{adjustbox} 
\usepackage{algorithm}
\ifPDFTeX
  \usepackage[T1]{fontenc}
  \usepackage[utf8]{inputenc}
  \usepackage{textcomp} 
\else 
  \usepackage{unicode-math}
  \defaultfontfeatures{Scale=MatchLowercase}
  \defaultfontfeatures[\rmfamily]{Ligatures=TeX,Scale=1}
\fi
\usepackage{lmodern}
\ifPDFTeX\else  
\fi
\IfFileExists{upquote.sty}{\usepackage{upquote}}{}
\IfFileExists{microtype.sty}{
  \usepackage[]{microtype}
  \UseMicrotypeSet[protrusion]{basicmath} 
}{}
\makeatletter
\@ifundefined{KOMAClassName}{
  \setlength{\parindent}{15pt}
  \setlength{\parskip}{0pt}
}{
  \KOMAoptions{parskip=never}}
\makeatother
\usepackage{xcolor}
\makeatletter
\ifx\paragraph\undefined\else
  \let\oldparagraph\paragraph
  \renewcommand{\paragraph}{
    \@ifstar
      \xxxParagraphStar
      \xxxParagraphNoStar
  }
  \newcommand{\xxxParagraphStar}[1]{\oldparagraph*{#1}\mbox{}}
  \newcommand{\xxxParagraphNoStar}[1]{\oldparagraph{#1}\mbox{}}
\fi
\ifx\subparagraph\undefined\else
  \let\oldsubparagraph\subparagraph
  \renewcommand{\subparagraph}{
    \@ifstar
      \xxxSubParagraphStar
      \xxxSubParagraphNoStar
  }
  \newcommand{\xxxSubParagraphStar}[1]{\oldsubparagraph*{#1}\mbox{}}
  \newcommand{\xxxSubParagraphNoStar}[1]{\oldsubparagraph{#1}\mbox{}}
\fi
\makeatother

\usepackage{longtable,booktabs,array}
\usepackage{calc} 
\usepackage{etoolbox}
\makeatletter
\patchcmd\longtable{\par}{\if@noskipsec\mbox{}\fi\par}{}{}
\makeatother
\IfFileExists{footnotehyper.sty}{\usepackage{footnotehyper}}{\usepackage{footnote}}
\makesavenoteenv{longtable}
\usepackage{graphicx}
\makeatletter
\def\maxwidth{\ifdim\Gin@nat@width>\linewidth\linewidth\else\Gin@nat@width\fi}
\def\maxheight{\ifdim\Gin@nat@height>\textheight\textheight\else\Gin@nat@height\fi}
\makeatother
\setkeys{Gin}{width=\maxwidth,height=\maxheight,keepaspectratio}
\makeatletter
\def\fps@figure{htbp}
\makeatother

\makeatletter
\@ifpackageloaded{caption}{}{\usepackage{caption}}
\AtBeginDocument{%
\ifdefined\contentsname
  \renewcommand*\contentsname{Table of contents}
\else
  \newcommand\contentsname{Table of contents}
\fi
\ifdefined\listfigurename
  \renewcommand*\listfigurename{List of Figures}
\else
  \newcommand\listfigurename{List of Figures}
\fi
\ifdefined\listtablename
  \renewcommand*\listtablename{List of Tables}
\else
  \newcommand\listtablename{List of Tables}
\fi
\ifdefined\figurename
  \renewcommand*\figurename{Figure}
\else
  \newcommand\figurename{Figure}
\fi
\ifdefined\tablename
  \renewcommand*\tablename{Table}
\else
  \newcommand\tablename{Table}
\fi
}
\@ifpackageloaded{float}{}{\usepackage{float}}
\floatstyle{ruled}
\@ifundefined{c@chapter}{\newfloat{codelisting}{h}{lop}}{\newfloat{codelisting}{h}{lop}[chapter]}
\floatname{codelisting}{Listing}

\makeatother
\makeatletter
\@ifpackageloaded{caption}{}{\usepackage{caption}}
\@ifpackageloaded{subcaption}{}{\usepackage{subcaption}}
\makeatother

\ifLuaTeX
  \usepackage{selnolig}  
\fi
\usepackage[]{natbib}
\usepackage{bookmark}
\usepackage{amsthm}
\newtheorem{proposition}{Proposition}
\IfFileExists{xurl.sty}{\usepackage{xurl}}{} 
\hypersetup{
  pdftitle={Title},
  pdfauthor={Author 1; Author 2},
  pdfkeywords={3 to 6 keywords, that do not appear in the title},
  colorlinks=true,
  linkcolor={blue},
  filecolor={black},
  citecolor={Blue},
  urlcolor={Blue},
  pdfcreator={LaTeX via pandoc}}

\newcommand{\anon}{1}

\usepackage{xr-hyper}  
\usepackage{geometry}
\usepackage{tikz}
\usepackage{adjustbox}
\usepackage{xcolor}
\usepackage{caption}
\usepackage{setspace}

\usetikzlibrary{arrows.meta, calc, positioning}

\newcommand{\boxtitle}[1]{{\sffamily\bfseries #1}}
\newcommand{\boxnote}[1]{{\sffamily\itshape\textcolor{blue}{#1}}}

\makeatletter
\renewcommand{\maketitle}{
	\begin{center}
		\vspace*{-1cm} 
		{\LARGE\@title\par}
		\vskip 0.5em
		{\large\@author\par}
		\vskip 0.5em
		{\@date\par}
	\end{center}
}
\makeatletter
\date{}

\begin{document}

\def\spacingset#1{\renewcommand{\baselinestretch}%
{#1}\small\normalsize} \spacingset{1}




\if1\anon
{
  \title{\bf Bayesian Analysis Using a Constrained Mixture of Normal-Inverse-Gamma Models}
  \author{Madelyn Clinch\\
    Department of Statistics, Florida State University\\
    Jonathan R Bradley\\
    Department of Statistics and Data Science, University of Missouri\\
    Andr\'{e}s F Barrientos \\
    Department of Statistics, Florida State University\\
    Garritt L Page \\
    Department of Statistics, Brigham Young University}
  \maketitle
} \fi

\if0\anon
{
  \bigskip
  \bigskip
  \bigskip
  \begin{center}
    {\LARGE\bf Bayesian analysis using a constrained mixture of normal-inverse-gamma models}
  \end{center}
} \fi

\begin{abstract}
Gaussian mixtures of regressions are commonly implemented via a Gibbs sampler.  This  Markov chain Monte Carlo (MCMC) algorithm can be computationally burdensome because of the need to update discrete-valued latent component allocation parameters  whose dimension increases as the sample size increases. In this article, we propose applying the method of composition to a Gaussian finite mixture model with a Normal-Inverse-Gamma (NIG) prior which allows one to write the posterior distribution as the product of conditional distributions. Namely, the conditional distribution of parameters given the data and mixture labels, times the marginal posterior of the mixture labels. The conditional distribution of parameters given the data and mixture labels, can be sampled from directly, instead of using MCMC. The expression of the marginal posterior of the mixture labels is known up to a proportionality constant and we adapt existing approaches in Bayesian selective inference to constrain the space of component labels to those arising from preliminary estimators,  which alleviates a commonly encountered bottleneck.  In simulation studies, we consider several settings and compare several versions of our constrained mixture of NIG models to two different MCMC-based strategies and demonstrate their use on natality data from the CDC.

\end{abstract}

\noindent%
{\it Keywords:} Selective inference, Method of composition, Direct posterior sampling
\vfill
\thispagestyle{empty}  

\newpage

\setcounter{page}{1}
\spacingset{1.8} 

\section{Introduction}\label{gmm:s:intro}
Bayesian finite mixture models (FMMs) have become popular as they provide a general framework for density estimation and model-based clustering. Posterior sampling from the FMM is typically done via Markov chain Monte Carlo (MCMC) and is commonly carried out by augmenting the model with latent component variables \citep{diebolt1994estimation, richardson1997bayesian,  celeux2000computational, stephens2000bayesian}. This augmented model leads to a Gibbs sampler that can easily implemented when using conditionally conjugate priors \citep{diebolt1994estimation, viele2002modeling}. More generally, a range of MCMC methods have been developed for finite mixtures, including reversible jump MCMC \citep{richardson1997bayesian}, birth-and-death MCMC \citep{stephens2000bayesian}, split-merge MCMC \citep{jain2004split}, and the allocation sampler \citep{nobile2007bayesian}, among others.

Although Gibbs samplers are straightforward to implement, it is well known that posterior distributions for mixture models can exhibit multimodality and are invariant under permutations of component labels. That is, if the labels of the mixture components are swapped, the data model does not change. This gives rise to the label switching problem and complicates inferential summaries for component specific parameters \citep{jasra2005markov, celeux2000computational, stephens2000dealing}. Additional difficulties arise when the sample size, number of covariates, and number of potential components (i.e., the maximum \(K\)) are large, as standard MCMC approaches may mix slowly and yield highly autocorrelated chains. This increases computational costs and leads to nontrivial convergence assessments  \citep{celeux2000computational, jain2004split}.

Because of this, there is a need for computationally efficient Bayesian approaches for Gaussian FMMs and mixture regressions. Using the method of composition \citep{press2009subjective} to decompose a Gaussian FMM with a Normal-Inverse-Gamma (NIG) prior, we can write the posterior distribution as a product of conditional distributions that can be sampled from directly. In particular, this product consists of (i) the conditional distributions of parameters given the data and mixture labels, and (ii) the marginal posterior distribution of the mixture labels available up to a proportionality constant. {In some cases we allow for uncertainty in the number of components as well.} This structure enables posterior sampling schemes that do not require MCMC which in turn alleviates the mixing and convergence concerns. {To our knowledge, Bayesian finite mixture model implementations generally rely on some form of MCMC or variational inference strategies \citep{mcgrory2007variational, fan2012variational}, neither of which are needed using this direct sampler.}

NIG priors have been used in recent work to enable exact Bayesian inference for Gaussian regression models by providing closed form posteriors that allow for MCMC-free sampling. \citet{banerjee2020modeling} considers point-referenced Gaussian spatial regression and shows that, conditional on fixing a small set of spatial covariance/process parameters, the remaining model reduces to a conjugate Gaussian regression. This conditional conjugacy yields closed form posteriors and enables direct sampling for the regression and variance parameters. \citet{finley2019efficient} use a related strategy for nearest-neighbor Gaussian process (NNGP) models where conditioning on fixed spatial process parameters, conjugacy allows for posterior sampling of regression and variance parameters without computationally intensive MCMC. \citet{zhang2019practical} extend this conjugate NNGP framework by enabling scalable, MCMC-free inference for the latent spatial process as well. We adopt a similar parametric strategy in the finite mixture regression setting where we obtain a closed from expression of the posterior distribution by decomposing the joint posterior distribution into the product of conditional distributions that we can sample from directly. 

For FMMs, the NIG model specification yields exact updates for the continuous parameters given the data and the mixture labels. Although the marginal posterior of the labels can be derived, it belongs to a very large discrete parameter space, making it computationally expensive to sample from. We consider several strategies to solve this issue. First, a Gibbs sampler can be used within the method of composition to sample from the marginal posterior of the mixture labels. {We call this approach the method of composition with MCMC where needed, or MC-MCMC.}  A second strategy, uses a truncated (or constrained) support for mixture labels, which is a general strategy common in the selective inference literature.  That is, rather than considering all possible label configurations, we constrain the space of the cluster labels to a smaller set of plausible labels, which we call a ``candidate set'' of cluster labels.

One of the primary challenges of selective inference in the Bayesian setting is that the implied likelihood has an intractable normalizing constant leading to a doubly intractable posterior distribution \citep{park2020function}. \citet{bradley2021empirical} and \citet{zong2023criterion} consider constraining the joint distribution of the data, random effects, and parameters instead of the likelihood. They show several benefits of incorporating a constraint in this way, including improved predictive performance relative to the unconstrained model, conditions for a type of posterior consistency, and other desirable properties. Constraining the joint distribution of the data, random effects, and parameters instead of the likelihood also avoids the introduction of an intractable normalizing constant, which leads to computation gains. We adopt this approach in the present work.

The selective inference literature develops procedures that properly account for uncertainty introduced by a data-dependent selection step by conditioning on the selection event \citep{taylor2015statistical, fithian2014optimal}. Bayesian selective inference first introduced in \citet{yekutieli2012adjusted}, involves conditioning the joint model of the parameters and the data on the event that a specified selection rule is satisfied, which leads to a ``selection-adjusted'' likelihood and a corresponding posterior distribution that conditions on both the data and the selection rule. Selective inference ideas have been developed for clustering \citep{gao2024selective, chen2023selective}, though existing work is primarily frequentest and does not address the use of selective inference in Bayesian finite mixture modeling. In our setting, we utilize selective inference primarily to reduce the support of the mixture labels so that one can sample directly from their marginal posterior distribution. 

For small sample sizes selective inference is not required as we can enumerate all possible label orientations. However, for large sample sizes, we construct a candidate set of mixture labels that are used within the Bayesian selective inference framework. We consider {three} strategies to construct a candidate set of mixture labels. {First, we implement a Gibbs sampler for the unconstrained mixture of NIG models using a subset of the data. We then use the resulting posterior draws from this subset fit to predict allocation labels for all observations, and the unique full data allocations form the candidate set. {We call this strategy ``DS-Const,'' where DS is an acronym for ``direct sampler.''} The second strategy uses several existing machine learning (ML) clustering algorithms {with random initializations} to generate candidate labels. Each algorithm produces allocation labels, and the collection of the predicted allocations across the selected algorithms forms the candidate set. {Consequently, the size of the candidate set corresponds to the number of ML algorithms used and specifications of the finite mixture model (e.g., number of components, choice of covariates, etc.).} {We call this strategy DS-ML. The third strategy uses the posterior mode of the DS-Const approach as the initialization for an ML method to create the candidate set for direct sampling.} {We refer to this third strategy as DS-Const-MAP.} {Figure~\ref{fig:method_compare} summarizes the contributions of this article, which includes several constrained and unconstrained models and their corresponding sampling strategies.}

\begin{figure}[!h]
\captionsetup{font={stretch=1}}
\centering
\scalebox{0.85}{\begin{tikzpicture}[
    scale=0.6,
    transform shape,
    font=\sffamily,
    execute at begin node={\hyphenpenalty=10000\exhyphenpenalty=10000\relax},
    >=Stealth,
    line/.style={->, very thick, draw=black!75},
    box/.style={
        draw=black!70,
        thick,
        rounded corners=8pt,
        align=center,
        inner sep=7pt,
        text width=#1
    },
    root/.style={box=2.6cm, fill=black!5, minimum height=1.35cm},
    branch/.style={box=3.7cm, fill=blue!6, minimum height=2.15cm},
    method/.style={box=9.8cm, fill=white, minimum height=1.85cm},
    output/.style={box=6.4cm, fill=green!7, minimum height=2.45cm},
    note/.style={font=\sffamily\small\itshape, text=blue!70!black}
]

\node[root] (fmm) at (0,0) {
    \boxtitle{Gaussian}\\
    \boxtitle{FMM}
};

\node[branch] (uncon) at (5.0,3.0) {
    \boxtitle{Unconstrained}\\
    \boxtitle{Model}\\[2pt]
    Sample labels from full
    label space
};

\node[branch] (con) at (5.0,-3.0) {
    \boxtitle{Constrained Model}\\[2pt]
    Restrict space of
    labels to candidate set
};

\node[method] (direct) at (14.2,6.8) {
    \boxtitle{Direct Sampler}\\[2pt]
    Enumerate full label space\par
    \boxnote{Feasible for $N$~$\leq$~$13$ only}
};

\node[method] (mcmc) at (14.2,3.8) {
    \boxtitle{MC-MCMC}\\[2pt]
    Gibbs sampler to sample from marginal posterior of
    labels and method of composition for continuous
    parameters\par
    \boxnote{Slow for large $N$}
};

\node[method] (gibbs) at (14.2,0.9) {
    \boxtitle{Traditional Gibbs Sampler}\\[2pt]
    Full MCMC baseline\par
    \boxnote{Slow for large $N$}
};

\node[method] (dsconst) at (14.2,-1.9) {
    \boxtitle{DS-Const}\\[2pt]
    Gibbs sampler for unconstrained Gaussian FMM
    using a subset of the data then predict labels for
    entire data
};

\node[method] (dsml) at (14.2,-4.6) {
    \boxtitle{DS-ML}\\[2pt]
    Build candidate set via ML clustering on full dataset
};

\node[method] (dsconstmap) at (14.2,-7.3) {
    \boxtitle{DS-Const-MAP}\\[2pt]
    Use the marginal posterior mode of DS-Const as
    the initialization for ML clustering
};

\node[output] (final) at (25.0,-4.6) {
    (1) Sample from reduced label
    space\par
    (2) Sample continuous parameters
    via method of composition\par
    \boxnote{Scalable for large $N$}
};

\draw[line] (fmm.north east) -- (uncon.west);
\draw[line] (fmm.south east) -- (con.west);

\draw[line] (uncon.east) -- (direct.west);
\draw[line] (uncon.east) -- (mcmc.west);
\draw[line] (uncon.east) -- (gibbs.west);

\draw[line] (con.east) -- (dsconst.west);
\draw[line] (con.east) -- (dsml.west);
\draw[line] (con.east) -- (dsconstmap.west);

\draw[line] (dsconst.east) -- ($(final.west)+(0,0.75)$);
\draw[line] (dsml.east) -- (final.west);
\draw[line] (dsconstmap.east) -- ($(final.west)+(0,-0.75)$);

\end{tikzpicture}}
\caption{Overview of the unconstrained model, which samples labels over the full label space, and the constrained model, which restricts inference to a candidate set of plausible label configurations. Let \(N\) denote the sample size}
\label{fig:method_compare}
\end{figure}

The remainder of this paper is organized as follows. Section \ref{gmm:s:method} introduces the constrained NIG finite mixture regression model and uses the method of composition to derive the posterior distribution. The details for the  various strategies to constrain the model are also presented. Section \ref{gmm:s:simstudy} presents simulation studies comparing our methods with two different Gibbs sampler strategies across a variety of scenarios. A discussion comparing the techniques for constructing the candidate set of labels is also provided in this section. Section \ref{gmm:s:birthweight} presents data illustration using a birth weight dataset obtained for the Centers for Disease Control and Prevention (CDC). Finally, Section \ref{gmm:s:discuss} concludes with a discussion.

\section{Methodology}\label{gmm:s:method}
In this section, we define the NIG model use the method of composition to derive an expression for the posterior distribution. We then present the constrained model and strategies to define the constraint.

\subsection{Normal Inverse Gamma Model}\label{gmm:ss:nig_hm}
Let \(\mathbf{y}=(Y_1, \ldots, Y_N)^{\prime}\) be the \(N\)-dimensional vector of observed data from a mixture of \(K\) Gaussian regressions. Let \(f\) denote a generic probability density function (pdf) or probability mass function (pmf), \(\mathbf{x}_i\) be a generic known \(p\)-dimensional covariate vector for \(i = 1, \dots, N\), \(\mathbf{z}_i = (Z_{i1}, \dots, Z_{iK})'\) be the allocation vector with \(Z_{ik} = 1\) indicating that observation \(i\) is assigned to component \(k\), and \(\boldsymbol{\pi} = (\pi_1, \dots, \pi_K)'\) are the mixing proportions. Consider the following hierarchical model:
\begin{equation}\label{gmm:eq:hm}
\begin{aligned}
    f(\mathbf{y}\mid \{\boldsymbol{\beta}_k\}, \{\sigma_k^2\}, \{Z_{ik}\}, K) &= \prod_{i = 1}^N \prod_{k = 1}^K \left\{\frac{(\sigma_k^2)^{-1/2}}{\sqrt{2\pi}}\exp\left(-\frac{(Y_i - \mathbf{x}_i^{\prime}\boldsymbol{\beta}_k)^2}{2\sigma_k^2}\right) \right\}^{Z_{ik}} \\ 
    f(\mathbf{z}_{i}\mid{\boldsymbol{\pi}},
    K) &= \prod_{k = 1}^{K}\pi_k^{Z_{ik}}; \quad Z_{ik}\in \{0,1\}, \sum_{k = 1}^KZ_{ik} = 1  \\
    f(\boldsymbol{\pi}\mid \alpha_1,\dots, \alpha_K, K) &= \frac{\Gamma(\sum_{k =1}^K\alpha_k)}{\prod_{k =1}^K\Gamma(\alpha_k)}\prod_{k = 1}^K \pi_k^{\alpha_k -1}  \\
   f({\{}\boldsymbol{\beta}_k{\}}\mid {\{}\sigma_k^2{\}}, \sigma_{\beta}^2{, K})
&= {\prod_{k=1}^K}
\frac{(\sigma_k^2\sigma_{\beta}^2)^{-p/2}}{(2\pi)^{p/2}}
\exp\left(-\frac{\boldsymbol{\beta}_k^{\prime}\boldsymbol{\beta}_k}{2\sigma_k^2\sigma_{\beta}^2}\right)  \\
f({\{}\sigma_k^2{\}}\mid \omega, \kappa{, K})
&= {\prod_{k=1}^K}
\frac{\kappa^{\omega}}{\Gamma(\omega)}
(\sigma_k^2)^{-\omega-1}
\exp\left(-\frac{\kappa}{\sigma_k^2}\right) \\
     K & { \sim} \text{Categorical}(\rho_1, \dots, \rho_M),
\end{aligned}
\end{equation}
where \(\boldsymbol{\pi} = (\pi_1, \dots, \pi_K)\) are the mixing proportions with $\sum_{k=1}^K\pi_k = 1$. The Dirichlet prior on \(\boldsymbol{\pi}\) has hyperparameters \((\alpha_1, \dots, \alpha_K)\) with \(\alpha_k > 0\). Let $\mathbf{x}_i = (x_{1i}, \ldots, x_{pi})$ denote a $p$-dimensional vector of explanatory variables, \(\boldsymbol{\beta}_k\) is a \(p\)-dimensional vector of regression coefficients for component \(k\), and \(\sigma_k^2 > 0\) be the corresponding variance given an inverse gamma prior. { Here and throughout, when a quantity indexed by $i$ and/or $k$ is enclosed in braces, for example $\{\boldsymbol{\beta}_k\}$, $\{\sigma_k^2\}$, or $\{Z_{ik}\}$, it denotes the collection of that quantity over the relevant indices, with $i=1,\dots,N$ and $k=1,\dots,K$.} {The parameter \(K\) denotes the unknown number of mixture components. We assume a prior with support \(\{1, \dots, M\}\), where \(M\) is a pre-specified upper bound on the number of components. The hyperparameter \(\rho_1,\dots,\rho_M\) are prior probabilities with \(\rho_m=\Pr(K=m)\) for \(m=1,\dots,M\) and \(\sum_{m=1}^M \rho_m=1\). 



\subsection{Method of Composition}\label{gmm:ss:comp}
Using the method of composition, we can decompose the posterior distribution of \(\{\boldsymbol{\beta}_k\}, \boldsymbol{\pi}, \{\sigma_k^2\}, \{Z_{ik}\}, K \mid \mathbf{y}\) from Equation \eqref{gmm:eq:hm} into a product of conditional distributions that we can generate independent samples from. The following proposition gives the known conditional distributions.

\begin{proposition}
    The joint posterior distribution of \(\{\boldsymbol{\beta}_k\}, \boldsymbol{\pi}, \{\sigma_k^2\}, \{Z_{ik}\}, K \mid \mathbf{y}\) can be decomposed as  
\begin{align}
    f(\{\boldsymbol{\beta}_k\},& \boldsymbol{\pi}, \{\sigma_k^2\}, \{Z_{ik}\}, K \mid \mathbf{y}) = f(\{\boldsymbol{\beta}_k\}\mid \{\sigma_k^2\}, \boldsymbol{\pi}, \{Z_{ik}\}, K,\mathbf{y})f( \{\sigma_k^2\}\mid \boldsymbol{\pi}, \{Z_{ik}\}, K, \mathbf{y})\notag\\ &\times f(\boldsymbol{\pi}\mid \{Z_{ik}\}, K, \mathbf{y})f(\{Z_{ik}\}\mid K, \mathbf{y}) f( K\mid \mathbf{y}).
    \label{gmm:eq:joint}
\end{align}
For \(k = 1, \dots, K\), let \(n_k = \sum_{i = 1}^NZ_{ik}\) be the number of observations assigned to component \(k\), let \(\mathbf{X}_k = (\mathbf{x}_i^{\prime}: Z_{ik} = 1, i=1,\dots, N)\) be the \(n_k \times p\) matrix of covariates, and let \(\mathbf{y}_k = (Y_i: Z_{ik} = 1, i = 1,\dots, N)\) be the \(n_k\)-dimensional response vector. Then the conditional distribution of the regression coefficients is
\begin{align}
    \boldsymbol{\beta}_k \mid \sigma_k^2,\{Z_{ik}\},K,\mathbf{y}
\sim
\begin{cases}
\text{MVN}\!\left(\boldsymbol{\mu}_{P,k},\,\boldsymbol{\Sigma}_{P,k}\right), 
& \text{if } n_k > 0,\\[6pt]
\text{MVN}\!\left(\mathbf{0}_p,\,\sigma_k^2\sigma_{\beta}^2\mathbf{I}_p\right),
& \text{if } n_k = 0~,
\end{cases}
\label{gmm:eq:beta_post}
\end{align}
where \(\mathbf{0}_p\) is a \(p\)-dimensional vector of zeros, \(\mathbf{I}_p\) is a \(p \times p\) identity matrix, \(\boldsymbol{\mu}_{P,k} = \left(\mathbf{X}_k^{\prime}\mathbf{X}_k + \frac{1}{\sigma_{\beta}^2}\mathbf{I}_p\right)^{-1}\mathbf{X}_k^{\prime}\mathbf{y}_k\) and \(\boldsymbol{\Sigma}_{P,k} = \sigma_k^2\left(\mathbf{X}_k^{\prime}\mathbf{X}_k + \frac{1}{\sigma_{\beta}^2}\mathbf{I}_p \right)^{-1}\). The conditional distribution of the component variances is
\begin{align}
\sigma_k^2 \mid \boldsymbol{\pi}, \{Z_{ik}\},K,\mathbf{y}
\sim
\begin{cases}
\text{IG}\!\left(\omega_{P,k},\,\kappa_{P,k}\right),
& \text{if } n_k>0,\\[6pt]
\text{IG}\!\left(\omega,\,\kappa\right),
& \text{if } n_k=0~,
\end{cases}
\label{gmm:eq:sig_post}
\end{align}
where \(\omega_{P,k} = \omega + \frac{n_k}{2}\) and \(\kappa_{P,k} = {\mathbf{y}_k^{\prime}(\sigma_{\beta}^2\mathbf{X}_k\mathbf{X}_k^{\prime} + \mathbf{I}_{n_k})^{-1}\mathbf{y}_k}/{2} + \kappa\). The conditional distribution of the mixture weights is  
\begin{align}
    f(\boldsymbol{\pi}\mid \{Z_{ik}\}, K, \mathbf{y}) \propto  \text{Dirichlet}\left( n_1+  \alpha_1, \dots, n_K + \alpha_K \right).\label{gmm:eq:pi_post}
\end{align}
Now let \(\Delta_K\) be the set of all possible values of \(\{Z_{ik}\}\) for a given \(K\). The joint distribution of \(\{Z_{ik}\}, K, \mathbf{y}\)  is given by
\begin{align}
    f(\{Z_{ik}\}, &K, \mathbf{y}) = f(K) \frac{\Gamma(\sum_{k = 1}^K\alpha_k)}{\Gamma(N + \sum_{k = 1}^K\alpha_k)} \frac{\prod_{k = 1}^K\Gamma(\alpha_k + \sum_{i = 1}^N Z_{ik})}{\prod_{k = 1}^K\Gamma(\alpha_k)} \notag \\
    &\times\prod_{k = 1,\dots, K, n_k > 0}(2\pi)^{-\frac{n_k}{2}} \vert \mathbf{I}_{n_k} + \sigma_{\beta}^2\mathbf{X}_k\mathbf{X}_k^{\prime} \vert^{-\frac{1}{2}}\frac{\kappa^{\omega}}{\Gamma(\omega)} \frac{\Gamma(\omega + \frac{n_k}{2})}{\left(\kappa + \frac{1}{2}\mathbf{y}_k^{\prime}(\sigma_{\beta}^2\mathbf{X}_k\mathbf{X}_k^{\prime} + \mathbf{I}_{n_k})^{-1}\mathbf{y}_k \right)^{\omega + \frac{n_k}{2}}}.
    \label{eq:gmm:zXy}
\end{align}
Then we have, 
\begin{align}
    f(\{Z_{ik}\}\mid K,\mathbf{y})
&= \frac{\sum_{\{Z_{ik}\}\in \Delta_K} f(\{Z_{ik}\},K\mid \mathbf{y})}
            {\sum_{m=1}^M \sum_{\{Z_{ik}\}\in \Delta_m} f(\{Z_{ik}\},m\mid \mathbf{y})}.
    \label{gmm:eq:z_post}
\end{align}
\end{proposition}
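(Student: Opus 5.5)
The decomposition \eqref{gmm:eq:joint} is just the chain rule of probability applied to the joint posterior in the order $K$, then $\{Z_{ik}\}$, then $\boldsymbol{\pi}$, then $\{\sigma_k^2\}$, then $\{\boldsymbol{\beta}_k\}$. The real content is identifying each factor. To do this I will write the full joint density $f(\mathbf{y},\{\boldsymbol{\beta}_k\},\{\sigma_k^2\},\{Z_{ik}\},\boldsymbol{\pi},K)$ as the product of the lines of \eqref{gmm:eq:hm}. I will then integrate out the continuous parameters one block at a time, in the reverse order: $\boldsymbol{\beta}_k$, then $\sigma_k^2$, then $\boldsymbol{\pi}$. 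Each conditional will be read off as the normalized integrand at the corresponding stage.

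\textbf{Integrating out $\boldsymbol{\beta}_k$.} Given $\{Z_{ik}\}$ and $K$, the likelihood and the priors on $(\boldsymbol{\beta}_k,\sigma_k^2)$ factor over $k$, and only observations with $Z_{ik}=1$ enter component $k$.

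For a component with $n_k>0$, I will complete the square in $\boldsymbol{\beta}_k$:
\[
\|\mathbf{y}_k-\mathbf{X}_k\boldsymbol{\beta}_k\|^2+\boldsymbol{\beta}_k'\boldsymbol{\beta}_k/\sigma_\beta^2
=(\boldsymbol{\beta}_k-\boldsymbol{\mu}_{P,k})'\mathbf{A}_k(\boldsymbol{\beta}_k-\boldsymbol{\mu}_{P,k})+\mathbf{y}_k'(\mathbf{I}-\mathbf{X}_k\mathbf{A}_k^{-1}\mathbf{X}_k')\mathbf{y}_k,
\]
where $\mathbf{A}_k=\mathbf{X}_k'\mathbf{X}_k+\sigma_\beta^{-2}\mathbf{I}_p$. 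This gives \eqref{gmm:eq:beta_post}. Note that $\boldsymbol{\beta}_k$ is conditionally independent of $\boldsymbol{\pi}$ given the labels, which justifies the conditioning set in \eqref{gmm:eq:joint}.

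By the Woodbury identity, $\mathbf{I}-\mathbf{X}_k\mathbf{A}_k^{-1}\mathbf{X}_k'=(\mathbf{I}_{n_k}+\sigma_\beta^2\mathbf{X}_k\mathbf{X}_k')^{-1}$. Integrating the Gaussian kernel leaves the factor $(\sigma_k^2)^{-n_k/2}(2\pi)^{-n_k/2}(\sigma_\beta^2)^{-p/2}|\mathbf{A}_k|^{-1/2}$. By Sylvester's determinant identity, $(\sigma_\beta^2)^{-p/2}|\mathbf{A}_k|^{-1/2}=|\mathbf{I}_{n_k}+\sigma_\beta^2\mathbf{X}_k\mathbf{X}_k'|^{-1/2}$.

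An equivalent and quicker route is to note directly that $\mathbf{y}_k\mid\sigma_k^2\sim\text{MVN}(\mathbf{0},\sigma_k^2(\mathbf{I}+\sigma_\beta^2\mathbf{X}_k\mathbf{X}_k'))$.

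For empty components ($n_k=0$), no data enter, so the conditionals of $\boldsymbol{\beta}_k$ and $\sigma_k^2$ equal their priors. These components contribute factor $1$ after integration, which explains the restriction $n_k>0$ in the product in \eqref{eq:gmm:zXy}.

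\textbf{Integrating out $\sigma_k^2$.} What remains in $\sigma_k^2$ is
\[
(\sigma_k^2)^{-\omega-n_k/2-1}\exp\{-(\kappa+\tfrac12\mathbf{y}_k'(\mathbf{I}+\sigma_\beta^2\mathbf{X}_k\mathbf{X}_k')^{-1}\mathbf{y}_k)/\sigma_k^2\}.
\]
This is an inverse-gamma kernel with the stated parameters $\omega_{P,k}$ and $\kappa_{P,k}$, which gives \eqref{gmm:eq:sig_post}. Integrating it out yields $\Gamma(\omega_{P,k})\kappa_{P,k}^{-\omega_{P,k}}$, which together with the prior constant $\kappa^\omega/\Gamma(\omega)$ produces the bracketed terms of \eqref{eq:gmm:zXy}.

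\textbf{Integrating out $\boldsymbol{\pi}$.} The factors involving $\boldsymbol{\pi}$ are $\prod_k\pi_k^{n_k+\alpha_k-1}$, since the likelihood does not depend on $\boldsymbol{\pi}$. This gives the Dirichlet conditional \eqref{gmm:eq:pi_post}. Its normalizing constant is $\prod_k\Gamma(n_k+\alpha_k)/\Gamma(N+\sum_k\alpha_k)$, using $\sum_k n_k=N$. Multiplying by the Dirichlet prior constant and by $f(K)$ yields \eqref{eq:gmm:zXy}.

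\textbf{The label and $K$ factors.} These follow from Bayes' rule. Normalizing \eqref{eq:gmm:zXy} over $\Delta_K$ gives $f(\{Z_{ik}\}\mid K,\mathbf{y})$. Normalizing over $\bigcup_m\Delta_m$ gives the joint $f(\{Z_{ik}\},K\mid\mathbf{y})$, and summing over labels gives $f(K\mid\mathbf{y})$.

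Here I will be careful with \eqref{gmm:eq:z_post} as written. Its displayed numerator sums over $\Delta_K$, so read literally it is actually $f(K\mid\mathbf{y})$. I would present the correct version: $f(\{Z_{ik}\}\mid K,\mathbf{y})=f(\{Z_{ik}\},K,\mathbf{y})/\sum_{\Delta_K}f(\cdot,K,\mathbf{y})$, with $f(K\mid\mathbf{y})$ given by the displayed ratio.

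\textbf{Main obstacle.} The only nontrivial step is the matrix algebra in the $\boldsymbol{\beta}_k$ integration: reconciling the $p\times p$ posterior precision with the $n_k\times n_k$ forms appearing in $\kappa_{P,k}$ and the determinant. I will handle it with Woodbury and Sylvester, or more simply by using the marginal Gaussian form of $\mathbf{y}_k$ given $\sigma_k^2$. The rest is careful bookkeeping of constants and of the empty-component case.
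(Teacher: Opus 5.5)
Your proof is correct and follows essentially the same route as the paper's Appendix A: completing the square for $\boldsymbol{\beta}_k$, marginalizing it via Woodbury and the determinant identity to get $\mathbf{y}_k\mid\sigma_k^2\sim\text{MVN}(\mathbf{0}_{n_k},\sigma_k^2(\mathbf{I}_{n_k}+\sigma_\beta^2\mathbf{X}_k\mathbf{X}_k'))$, and then using inverse-gamma and Dirichlet conjugacy (the paper gets each marginal by dividing the joint by the conditional just derived rather than integrating explicitly, which is the same computation). Your reading of \eqref{gmm:eq:z_post} is also right: its right-hand side is $f(K\mid\mathbf{y})$, and the appendix in fact derives exactly your corrected pair $f(\{Z_{ik}\}\mid K,\mathbf{y})=f(\{Z_{ik}\},K,\mathbf{y})/f(K,\mathbf{y})$ and $f(K\mid\mathbf{y})=f(K,\mathbf{y})/f(\mathbf{y})$.
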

\begin{proof}
See Appendix \ref{appen:proof_prop} for the proof.
\end{proof}

The expressions in Equation (\ref{gmm:eq:z_post}) define discrete posterior distributions that can be sampled from directly in low-dimensional unconstrained settings. Under the prior specification \(K \in \{1,\dots,M\}\) with \(\Pr(K=m)=\rho_m\), the posterior distribution \(f(\{Z_{ik}\}\mid K,\mathbf{y})\) is categorical over the set of partitions \(\Delta_K\), and \(f(K\mid \mathbf{y})\) is categorical over \(\{1,\dots,M\}\). In practice, it is difficult to enumerate the set \(\Delta_K\), which is on the order of \(K^N\) number of elements. For this reason, in our simulation study we implement the direct sampler only for \(N \leq 13\) to illustrate the ability to perform MCMC-free posterior inference. 

{Alternatively, it is common to fix the number of available components at a sufficiently large value (e.g., \(K = 25\)) and perform inference conditional on this choice (\citealt{RousseauMengersen:2011}). The overspecification of \(K\) is handled by allowing empty components, so that \(n_k = 0\) for some \(k\). The distinction between the number of mixture components ($K$) and the number of clusters or occupied components (\(\sum_{k = 1}^K\mathbb{I}(n_k >0)\), where \(\mathbb{I}(\cdot)\) is the indicator function) has appeared in the literature (\citealt{argiento:2022, pageetal:2025}).  In this setting, care must be taken to not underspecify \(K\) as doing so negatively affects estimating the number of clusters.  It turns out that fixing \(K=25\) corresponds to a degenerate prior \(\Pr(K=25)=1\) (setting \(M=25\) and \(\rho_{25}=1\)), and side-steps the need of sampling \(K\). 
In the non-direct sampling approaches considered in this paper, we assume a fixed value of \(K\).}

One option to sample from \(f(\{Z_{ik}\}\mid K,\mathbf y)\) is to devise a Gibbs sampler that avoids enumerating \(\Delta_K\). Carrying this out is one of the contributions we aim to make.  Let \(\mathbf{z}_i = (Z_{i1}, \dots, Z_{iK})^{\prime}\) and let \(e_b\) denote the \(b\)-th elemental vector in \(\mathbb{R}^K\). Using the joint distribution \(f(\{Z_{ik}\}, K, \mathbf{y})\) in Equation (\ref{eq:gmm:zXy}), the full conditional for \(\mathbf{z}_i\) is 
\begin{align}
    f(\mathbf{z}_i = e_b \mid \mathbf{z}_{-i}, K, \mathbf{y}) = \frac{f(\mathbf{z}_{-i}, \mathbf{z}_i = e_b, K , \mathbf{y})}{\sum_{h =1}^Kf(\mathbf{z}_{-i}, \mathbf{z}_i = e_h, K, \mathbf{y})},
    \label{gmm:eq:dsmcmc}
\end{align}
where \(\mathbf{z}_{-i}\) denotes all allocations except the \(i\)-th. We refer to this as a method of composition with MCMC where necessary (MC-MCMC).  To implement MC-MCMC first run the Gibbs sampler in Equation (\ref{gmm:eq:dsmcmc}) to obtain draws \(t = 1, \dots, n_{\text{samp}}\) from \(f(\{Z_{ik}\}\mid K, \mathbf{y})\) and then conditional on each \(\mathbf{z}^{[t]}\) we sample the continuous parameters. That is, \(\boldsymbol\pi^{[t]} \sim f(\boldsymbol\pi\mid \mathbf z^{[t]},K,\mathbf y)\) from \eqref{gmm:eq:pi_post},
\(\{\sigma_k^{2\,[t]}\}\sim f(\{\sigma_k^2\}\mid \mathbf z^{[t]},K,\mathbf y)\) from \eqref{gmm:eq:sig_post}, and
\(\{\boldsymbol\beta_k^{[t]}\}\sim f(\{\boldsymbol\beta_k\}\mid \{\sigma_k^{2\,[t]}\},\mathbf z^{[t]},K,\mathbf y)\) from \eqref{gmm:eq:beta_post}. While this avoids summing over $\Delta_K$, it remains an MCMC procedure with computational cost that grows with \(N\) and \(K\) and the number of iterations required for adequate mixing. For this reason, we introduce a constrained model in the next section.

\subsection{Constrained Model}\label{gmm:ss:constrained}
In general, the set \(\Delta_K\) consists of \(K^N\) distinct elements which makes the expression of the marginal posterior distribution of \(Z\) in Equation (\ref{gmm:eq:z_post}) infeasible even with relatively small $N$ and/or $K$. Motivated by the Bayesian selective inference literature (e.g., see \citet{bradley2021empirical, yekutieli2012adjusted}, among others), we constrain \(\Delta_K\) to a smaller subset of plausible candidates to achieve computational feasibility even for large $N$ and/or $K$.

We use a strategy similar to \citet{bradley2021empirical} where the joint model of the parameters \(\boldsymbol{\theta}  =  \{\boldsymbol{\pi}, \{\boldsymbol{\beta}_k\}, \{\sigma_k^2\}\}\), the data \(\mathbf{y}\), and the labels \(\{Z_{ik}\}\) are constrained based on a selection event \(\{(\{Z_{ik}\},K) \in \mathcal{C}(\textbf{y})\}\) where \(\mathcal{C}(\textbf{y}) = \{(\{Z_{ik}^{[1]}\},K^{[1]}), \dots, (\{Z_{ik}^{[W]}\},K^{[W]})\}\) are \(W\) candidates from \({\Delta = \cup_{K = 1}^{M}}\Delta_K\). The set $\mathcal{C}$ can be a function of the data $\mathbf{y}$, and the selective inference strategy addresses the sampling variability of $\mathcal{C}$ by defining $\mathcal{C}$ in the support of the model. To construct \(\mathcal{C}(\textbf{y})\) we use either (i) a Gibbs sampler fit using a subset of the data and a prediction step to produce candidate allocations for all \(N\) observations, or (ii) multiple ML clustering algorithms fit to the full dataset, each producing a candidate allocation (see additional details in Section \ref{gmm:ss:labels}).

Let \(f(\mathbf{y},\boldsymbol{\theta}, \{Z_{ik}\},K)\) denote the joint distribution under the unconstrained mixture model. Following the construction in \citet{bradley2021empirical}, we define the constrained joint model by truncating the joint support of \(\{Z_{ik}\}\) and $K$ to \(\mathcal{C}(\textbf{y})\). That is,
\begin{align}
g(\mathbf{y},\boldsymbol{\theta},\{Z_{ik}\},K \mid \mathcal{C}(\textbf{y}))=
\frac{f(\mathbf{y},\boldsymbol{\theta},\{Z_{ik}\},K)\,\mathbb{I}\{(\{Z_{ik}\},K)\in\mathcal{C}(\textbf{y})\}}
{\mathcal{N}},
\label{eq:gmm:constrained_joint}
\end{align}
where \(\mathcal{N} = \int \int \sum_{(\{Z_{ik}\},K) \in \mathcal{C}(\textbf{y})} f(\mathbf{y}, \boldsymbol{\theta}, \{Z_{ik}\}, K) d\mathbf{y}d\boldsymbol{\theta}\) is a normalizing constant and \(\mathbb{I}(\cdot)\) an indicator function. The implied data generating mechanism for $\mathbf{y}$ is the marginal distribution $g(\mathbf{y}) = \sum_{(\{Z_{ik}\},K) \in \mathcal{C}(\textbf{y})} \int f(\mathbf{y}, \boldsymbol{\theta}, \{Z_{ik}\}, K) d\boldsymbol{\theta}/\mathcal{N}$, which is consistent with the interpretation of the data generating mechanism in the nonparametric Bayesian literature \citep[e.g., see][Chp. 1]{schervish2012theory}.

Now, conditioning on the observed data \(\mathbf{y}\) yields the constrained posterior
\begin{align}
g(\boldsymbol{\theta},\{Z_{ik}\},K \mid \mathbf{y},\mathcal{C}(\textbf{y}))=
\frac{f(\boldsymbol{\theta},\{Z_{ik}\},K\mid \mathbf{y})\,\mathbb{I}\{(\{Z_{ik}\},K)\in\mathcal{C}(\textbf{y})\}}
{\Pr_f((\{Z_{ik}\},K)\in\mathcal{C}(\textbf{y})\mid \mathbf{y})},
\label{eq:gmm:constrained_post}
\end{align}
 where \(\Pr_f((\{Z_{ik}\},K)\in\mathcal{C}(\textbf{y})\mid \mathbf{y})\) is the posterior probability of the selection event computed under the unconstrained model.
That is,  
\begin{align}
\Pr_f((\{Z_{ik}\},K)\in\mathcal{C}(\textbf{y})\mid \mathbf{y})
=\sum_{w=1}^W f(\{Z_{ik}^{[w]}\},K^{[w]}\mid \mathbf{y}).
\label{eq:gmm:constrained_norm}
\end{align}
\noindent
\noindent
{ If one normalizes the likelihood as done in \citet{yekutieli2012adjusted}, instead of the joint distribution stated in (\ref{eq:gmm:constrained_joint}), one can still obtain the posterior distribution in (\ref{eq:gmm:constrained_post}). Namely, consider the hierarchical model whose data model is given by
\begin{equation*}
\begin{aligned}
&g(\mathbf{y}\mid \{\boldsymbol{\beta}_k\}, \{\sigma_k^2\}, \{Z_{ik}\}, K)\\
    &= \frac{1}{\mathcal{L}(\bm{\theta},\{Z_{ik}\},K)}\prod_{i = 1}^N \prod_{k = 1}^K \left\{\frac{(\sigma_k^2)^{-1/2}}{\sqrt{2\pi}}\exp\left(-\frac{(Y_i - \mathbf{x}_i^{\prime}\boldsymbol{\beta}_k)^2}{2\sigma_k^2}\right) \right\}^{Z_{ik}} \mathbb{I}\{(\{Z_{ik}\},K)\in\mathcal{C}(\textbf{y})\},
\end{aligned}
\end{equation*}
where $\mathcal{L}(\bm{\theta},\{Z_{ik}\},K) = \int f(\mathbf{y}\mid \bm{\theta},\{Z_{ik}\},K) \mathbb{I}\{(\{Z_{ik}\},K)\in\mathcal{C}(\textbf{y})\} d\textbf{y}$ being the normalizing constant for a truncated normal distribution. The prior distribution for this hierarchical model is determine by $f(\bm{\theta},\{Z_{ik}\},K)\mathcal{L}(\bm{\theta},\{Z_{ik}\},K)$, which is not necessarily proper, where $f(\bm{\theta},\{Z_{ik}\},K)$ is equal to the product of all the prior distributions listed in (\ref{gmm:eq:hm}). This adjusted prior distribution up-weights values of the parameters that lead to a higher conditional probability of the selection event. The joint density implied by the adjusted likelihood and adjusted prior is proportional to our model in (\ref{eq:gmm:constrained_joint}), which leads to the same expression for the posterior distribution in (\ref{eq:gmm:constrained_post}). There are other ways to augment the marginal distribution of the data $g(\textbf{y})$ to reproduce (\ref{eq:gmm:constrained_post}); see Section 3.4 in \citet{zong2023criterion} for one such alternative expression/interpretation of similar truncated Bayesian hierarchical models.} 

\newpage
The constraint forces \(\{Z_{ik}\}\) to lie in a candidate set of size \(W << K^N\), which is scalable in our setting. The above model simplifies because conditional independence arises in each conditional distribution except for the allocation distribution. In particular, given \(\{Z_{ik}\}\), the continuous parameters \(\{\boldsymbol{\beta}_k\}\), \(\{\sigma_k^2\}\), and \(\boldsymbol{\pi}\) are conditionally independent of the candidate set \(\mathcal{C}(\textbf{y})\). The candidate set only enters through the constrained joint support of \(\{Z_{ik}\}\), $K$, and $\mathbf{y}$.

\color{black}

\noindent

From the results in Section \ref{gmm:ss:comp} { and Equation \ref{eq:gmm:constrained_post}}, the posterior distribution can be decomposed as 
\begin{align*}
    g(\{\boldsymbol{\beta}_k\}, \boldsymbol{\pi}, \{\sigma_k^2\}, \{Z_{ik}\}, K \mid \mathbf{y}, \mathcal{C}(\textbf{y})) = f&(\{\boldsymbol{\beta}_k\}\mid \{\sigma_k^2\}, \boldsymbol{\pi}, \{Z_{ik}\}, K,\mathbf{y})f( \{\sigma_k^2\}\mid \boldsymbol{\pi}, \{Z_{ik}\}, K, \mathbf{y})\notag\\ &\times f(\boldsymbol{\pi}\mid \{Z_{ik}\}, K, \mathbf{y})g(\{Z_{ik}\}, K\mid \mathbf{y}, \mathcal{C}(\textbf{y})),
\end{align*}
where the first three distributions have the same known expressions as in the unconstrained model. The conditional distribution of \(\{Z_{ik}\} \mid K, \mathbf{y},\mathcal{C}(\textbf{y})\) is
\begin{align}
\nonumber
    g(\{Z_{ik}\},K\mid\mathbf{y}, \mathcal{C}(\textbf{y}))= \frac{f(\{Z_{ik}\},K,\mathbf{y})/\mathcal{N}}
    {\sum_{w=1}^W f(\{Z_{ik}^{[w]}\},K,\mathbf{y})/\mathcal{N}}\,
    \mathbb{I}\!\left((\{Z_{ik}\},K)\in\mathcal{C}(\textbf{y})\right)\\
    = \frac{f(\{Z_{ik}\},K,\mathbf{y})}
    {\sum_{w=1}^W f(\{Z_{ik}^{[w]}\},K,\mathbf{y})}\,
    \mathbb{I}\!\left((\{Z_{ik}\},K)\in\mathcal{C}(\textbf{y})\right),
    \label{eq:Z_constrained_cat}
\end{align}
so that for \(w = 1, \dots, W\), $\Pr(\{Z_{ik}\}=\{Z_{ik}^{[w]}\},K = K^{[w]}\mid \mathbf{y}, \mathcal{C}(\textbf{y}))
    = \frac{f(\{Z_{ik}^{[w]}\},K,\mathbf{y})}{\sum_{u=1}^W f(\{Z_{ik}^{[u]}\},K,\mathbf{y})}$, which we can compute for small \(W\).


\subsection{Constructing the Candidate Set}\label{gmm:ss:labels}

\subsubsection{{The DS-Const Approach}}
Our first approach for constructing \(\mathcal{C}(\textbf{y})\){, which we refer to as ``DS-Const,'' implements} the unconstrained {NIG} model with $K$ large (e.g., $K = 25$) with a Gibbs sampler using a random subset of the data and then uses the resulting posterior draws to generate candidate partitions for the full dataset. Let \(\mathbf{y}_{\text{sub}}\)  denote the \(n_{\text{sub}}\)-dimensional vector of observed data obtained by sampling \(n_{\text{sub}} << N\) observations. In our simulation study we consider \(n_{\text{sub}} \in \{0.1N, 0.25N, 0.5N\}\). For details on the full-conditional distributions, and detailed step-by-step instructions for this strategy see Appendix \ref{gmm:subgibbs}.

\subsubsection{Machine Learning Algorithms}\label{ss:gmm:ml}
For our second approach to constructing a candidate set of labels, {which we refer to as ``DS-ML,'' }we use various machine learning clustering algorithms fit over a range of \(\{2,.\dots,K\}\). In the intercept only setting we use $k$-means \citep{hartigan1979algorithm, ahmed2020k}, $k$-medoids \citep{park2009simple}, hierarchical clustering \citep{ward1963hierarchical, murtagh2014ward}, and the expectation maximization (EM) algorithm \citep{dempster1977maximum}. In the regression setting, we use $k$-means, the EM algorithm, and spectral clustering \citep{ng2001spectral}. Here, we note that many other clustering algorithms could be used in either setting. These methods are used primarily as a computationally efficient approach for constructing \(\mathcal{C}(\textbf{y})\), allowing us to quickly obtain a candidate set and obtain comparable or improved inference with improved computational speed. An overview of each algorithm along with detailed pseudocode for each algorithm is provided in Appendix \ref{gmm:appen:ml}. For each ML method, we fit the algorithm for \(k=2,\dots,K\) to obtain \((K-1)\times\) (number of ML algorithms) candidate allocations. We then use \eqref{eq:Z_constrained_cat} to sample from the constrained set of labels $\mathcal{C}(\textbf{y})$.

 {The DS-ML approach uses ML algorithms to constrain the parameter space of the NIG model. The ML algorithms can be implemented in several ways. For example, in the simulations, we use the posterior mode of DS-Const to initialize $k$-means for each $K=2,\ldots,M$ to create the candidate set for direct sampling in the intercept-only setting. We refer to this strategy as DS-Const-MAP. Because some ML clustering algorithms are sensitive to initialization information from DS-Constr may provide a more stable and informed starting point.}

\section{Simulation Study}\label{gmm:s:simstudy}
\subsection{Set Up}\label{gmm:ss:sim_setup}
In our simulation study, we consider two settings. The first is an intercept only Gaussian mixture model and the second is a Gaussian mixture of regressions with one covariate. In both settings, we simulated from a \(K=3\) component mixture with homogeneous variance across components, setting \(\sigma_k^2=0.25\) for \(k=1,2,3\), and  consider an equal mixing proportions setting \(\boldsymbol{\pi} = (\frac{1}{3}, \frac{1}{3}, \frac{1}{3})\).

In the intercept only setting, we generate $\textbf{z}_i \sim \text{Multinomial}(1 \mid \bm{\pi})$ and outcomes are generated from \(Y_i \mid Z_{ik} = 1 \sim \text{Normal}(\mu_k, \sigma_k^2)\). To vary the degree of cluster separation, we consider three specifications for the component means \(\boldsymbol{\mu} = (\mu_1, \mu_2, \mu_3)\). The large overlap cluster separation has \(\boldsymbol{\mu} = (-\frac{1}{3}, 0, \frac{1}{3})\), the moderate overlap has \(\boldsymbol{\mu} = (-1, 0,1)\), and the no overlap has \(\boldsymbol{\mu} = (-2, 0, 2)\).

In the regression setting, we generate covariates \(\mathbf{x}_i = (1, x_{2,i})^{\prime}\) with \(x_{2,i}\sim \text{Normal}(0,1)\), $\textbf{z}_i \sim \text{Multinomial}(1 \mid \bm{\pi})$, and outcomes are simulated according to \(Y_i \mid Z_{ik} = 1 \sim \text{Normal}(\beta_{k,0}+x_{2,i}\beta_{k,1},\sigma_k^2)\). To control separation between regression components, we vary the intercept vector \(\boldsymbol{\beta}_0 = (\beta_{1,0}, \beta_{2,0}, \beta_{3,0})\). For the large overlap scenario \(\boldsymbol{\beta}_0 = (-\tfrac{1}{3},0,\tfrac{1}{3})\), for the moderate overlap scenario \(\boldsymbol{\beta}_0 = (-1,0,1)\), and for the no overlap scenario \(\boldsymbol{\beta}_0 = (-2,0,2)\). Appendix \ref{gmm:sim_data} provides an illustration of the overlap levels for both simulation settings. 

We use a common set of hyperparameters across all methods in this simulation study. For the mixing proportions we use a symmetric Dirichlet prior with \(\alpha_1 = \dots = \alpha_K = 1\). For the regression coefficients, we fix \(\sigma_{\beta}^2 = 100\). For the component variances, we assume an inverse gamma prior with shape \(\omega = 2\) and scale \(\kappa = 1\). For the direct sampler, we place a discrete uniform prior on \(K \in \{1,\dots,M\}\), so that \(\rho_m = \Pr(K=m)=1/M\) for \(m=1,\dots,M\). We take \(M=10\) when \(N=10\) and \(M=25\) in the larger-\(N\) settings. For the non-direct-sampling approaches, we fix the number of available components at \(K=25\).

\subsection{{Illustrating the Constrained Statistical Models}}
In this section, we illustrate the differences in our constrained statistical models using an example from the intercept only simulation. {Part of the goal of this section is to demonstrate why considering ML algorithms to define the constraint is useful.} Figure \ref{gmm:fig:compare_constr} compares the true partition to the top five candidates { as ranked via the marginal posterior probability} in the reduced space \(\mathcal{C}(\textbf{y})\). Under the Gibbs/subset construction, the sampling probabilities are highly sensitive to small changes in the candidate partitions which leads to one (or a few) candidates typically receiving most of the mass, while the remaining candidates have probabilities near zero. {Thus}, posterior draws of the labels concentrate tightly around {a few} candidate{s}, which in many cases collapses into a single partition. Importantly, this behavior is not necessarily bad. The dominant partition is the sample marginal posterior mode obtained from fitting the unconstrained model using a subset of the data, so it typically represents a sensible clustering, even if it reflects less uncertainty than {that of DS-ML}. 
 
When \(\mathcal{C}(\textbf{y})\) is formed using machine learning clustering algorithms, this {pattern that a single candidate dominates the posterior mass is} less pronounced. Multiple algorithms produce partitions that are close to the truth, and these candidates tend to receive similar probabilities (e.g., approximately 0.3 in this example). So, posterior label draws are distributed over a small set of high quality candidates rather than being dominated by a single {candidate}. The ML construction also provides a way of selecting the number of clusters which is typically unknown in practice. We fit each algorithm over \({K} \in \{2, \dots, {25}\}\) {with 5 ML algorithms,} and the candidates with the highest probability correspond to \(K = 3\), which matches the true number of clusters in this example.

\begin{figure}[!h]
    \centering
    \captionsetup{font={stretch=1}}
    \begin{subfigure}[b]{0.45\linewidth}
        \centering
        \includegraphics[width=\linewidth]{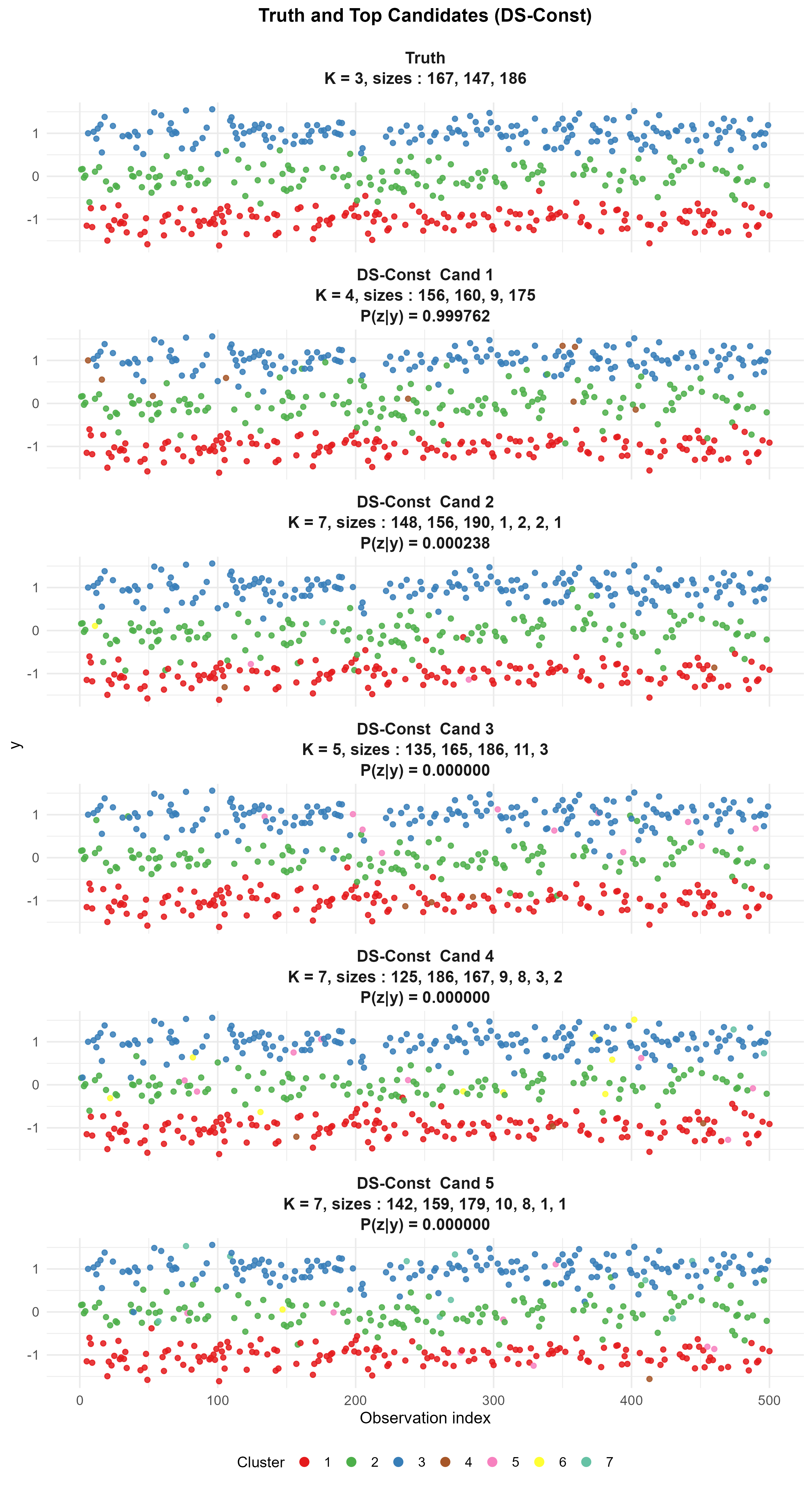}
    \end{subfigure}
    \hfill
    \begin{subfigure}[b]{0.45\linewidth}
        \centering
        \includegraphics[width=\linewidth]{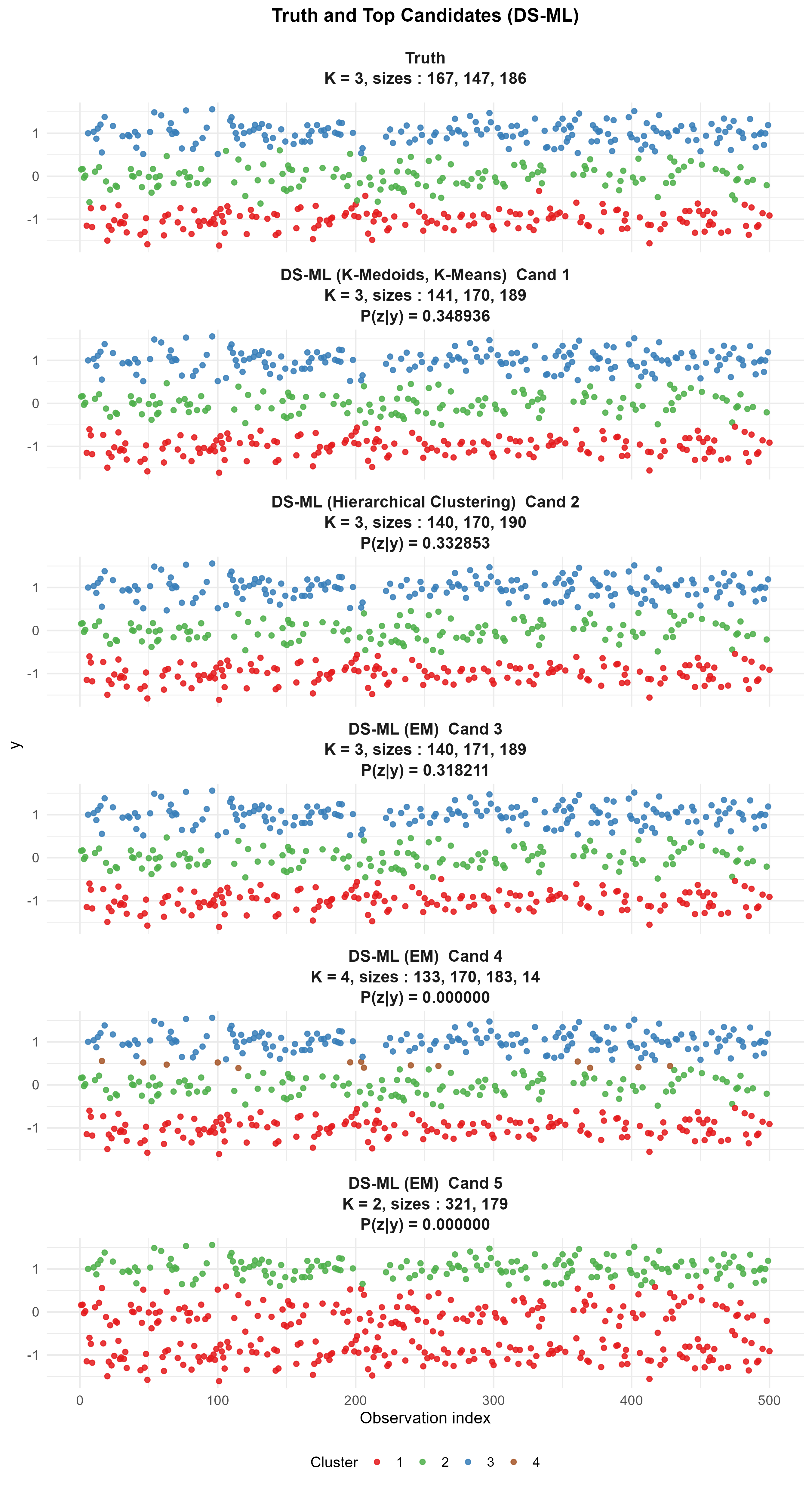}
    \end{subfigure}
    \caption{Comparison of the true partition (top panel) to the top five candidates in the reduced label space \(\mathcal{C}(\textbf{y})\). The left column shows candidates obtained from the Gibbs/subset construction, and the right column is the candidates from the ML construction. For each candidate, the subtitle reports the number of occupied components and corresponding cluster sizes, along with the sampling probability calculated using  Equation (\ref{eq:Z_constrained_cat}).}
    \label{gmm:fig:compare_constr}
\end{figure}

\subsection{Intercept Only Simulations}\label{gmm:ss:y_only}
To evaluate the different approaches in the intercept only scenario, we compare the mean  of the adjusted Rand index (ARI), the effective sample size (ESS) for the first moment of the continuous parameters, the Kolmogorov--Smirnov (KS) statistic, and central processing unit (CPU) time measured in seconds. {All MCMC methodologies have chains of length 1,000.} For each posterior replicate, we compute the ARI \citep{hubert1985comparing} using the \texttt{ARI()} function in the \texttt{salso} R package \citep{dahl2022search}, and summarize it across replicates by its mean. For the KS statistic, assume we have \(j = 1,\dots,n_{\text{samp}}\) posterior draws, \(k = 1,\dots,K\) mixture components, and a uniform grid \(\{t_g\}_{g=1}^{G}\). For draw \(j\), we have replicates of the component specific parameters \(\pi_{k}^{[j]}, \mu_{k}^{[j]}, \sigma_k^{[j]}\). The mixture cumulative distribution function (CDF) for draw \(j\) is \(F^{[j]}(t_g) = \sum_{k = 1}^K \pi^{[j]}_k\Phi\!\left(\frac{t_g - \mu_{k}^{[j]}}{\sigma_{k}^{[j]}} \right)\), \(g=1,\dots,G\), and the posterior mean CDF is \(\widehat{F}(t_g) = \frac{1}{n_{\text{samp}}}\sum_{j=1}^{n_{\text{samp}}}F^{[j]}(t_g)\). We compute the empirical CDF of the data as \( F_{\text{emp}}(t_g) = \frac{1}{N}\sum_{i=1}^N \mathbb{I}(Y_i \le t_g)\), so the KS distance between the fitted and empirical CDFs is \(KS = \max_g\left|\widehat{F}(t_g) - F_{\text{emp}}(t_g)\right|\). 

\begin{table}[h]
\captionsetup{font={stretch=1}}
\caption{Mean over 50 replicates of mean ARI, ESS of the first moment, KS statistic, and CPU time (seconds) for each sampling strategy in the \(N = 10{,}000\) scenario. The DS-Const and DS-Const-MAP approaches fit the initial Gibbs sampler with a subset size \(n_{\text{sub}} = 0.25N\).}
\label{tab:int_only_sim}
\centering
{\renewcommand{\baselinestretch}{1}\selectfont
\begin{tabular}{ccccc}
\toprule
Method & Mean ARI & ESS & KS & CPU Time (s) \\
\midrule

\multicolumn{5}{c}{\textbf{Large overlap}} \\
DS-ML & 0.522 & {1000} & 0.032 & 355.9 \\
DS-Const & 0.425 & {1000} & 0.006 & 232.0 \\
DS-Const-MAP & 0.652 & {1000} & 0.019 & 230.7 \\
MC-MCMC & 0.330 & 1016 & 0.005 & 1087.1 \\
MCMC & 0.328 & 994 & 0.005 & 879.7 \\
\midrule

\multicolumn{5}{c}{\textbf{Moderate overlap}} \\
DS-ML & 0.909 & {1000} & 0.007 & 350.9 \\
DS-Const & 0.824 & {1000} & 0.005 & 232.4 \\
DS-Const-MAP & 0.910 & {1000} & 0.007  & 231.1 \\
MC-MCMC & 0.722 & 990 & 0.004 & 1091.1 \\
MCMC & 0.772 & 1002 & 0.004 & 881.1 \\
\midrule

\multicolumn{5}{c}{\textbf{No overlap}} \\
DS-ML & 1.000 & {1000} & 0.004 & 320.5 \\
DS-Const & 0.993 & {1000} & 0.004 & 235.3 \\
DS-Const-MAP & 1.000 & {1000} & 0.004 & 234.6 \\
MC-MCMC & 0.974 & 1008 & 0.004 & 1104.6 \\
MCMC & 0.971 & 1017 & 0.004 & 880.6 \\

\bottomrule
\end{tabular}
}
\end{table}

Table \ref{tab:int_only_sim} summarizes the evaluation metrics, averaged over \(50\) independent replicates, for the moderate, large, and no overlap settings for sample size \(N =10{,}000\). Simulation results for the smaller \(N\) settings and additional evaluation metrics such as the ESS of the second moment, the ARI standard deviation, and the \(L_2\) norm are presented in Appendix \ref{apppen:y_only}. In all overlap scenarios, the DS-constrained approaches presented in the table use \(n_{\text{sub}} = 0.25N\). Results for other subset sizes can be found in Appendix \ref{apppen:y_only}, and they show that increasing the subsample size used for the initial Gibbs sampler generally improves performance metrics at the cost of slightly increased CPU time. 

Across all methods, ARI values are lowest in the large overlap scenario and then improve as the separation between components increases, which is expected because when the true clusters are closer together, observations near cluster boundaries are more difficult to label. In the moderate overlap scenario, all constrained approaches (DS-ML, DS-Const, and DS-Const-MAP) achieve higher ARI values than the MCMC-based implementations. This is because the constrained approaches restrict the extremely large discrete parameter space to a much smaller set of plausible labelings, whereas the MCMC-based implementations must still search through the full space. The no overlap scenario is the easiest for clustering because the components are well separated, so all methods achieve near perfect clustering performance, with ARI values close to \(1\).

Density estimation performance is also very similar across methods in the moderate and no overlap scenarios. In the large overlap scenario, the constrained methods show slightly worse density estimation performance than the MCMC methods. Since these approaches restrict the label space, it is possible that the candidate set does not always include the best labeling when there is large overlap. One way to address this issue is to expand the candidate set, for example, by incorporating additional ML algorithms or additional initializations to construct a richer set of candidate partitions.

The effective sample size is near \(1000\) for all methods and overlap settings. From the CPU times, we see that this ESS is achieved much faster with the constrained models than with the unconstrained models implemented via MCMC. The MCMC methods required more iterations, including burn in/thinning, whereas for the constrained models, once the candidate set is constructed, only \(1000\) iterations are needed to produce \(1000\) independent replicates from the closed form conditional distributions. Both the DS-constrained approaches (with the Gibbs subsample step) and the DS-ML approach offer clear computational advantages relative to the MCMC-based implementations. 

\subsection{Simulations with Covariates}\label{gmm:ss:xy_sims}
To evaluate the different approaches in the regression scenario, we compare similar metrics as in the intercept only scenario. In this simulation we compute the KS statistic in a different way than the previous section because the grid based computation scales with \(N \times G \times K\times n_{samp}\), becoming very slow for large \(N\). Instead, we compute the KS statistic using an observation based calculation instead of a grid based calculation, similar to \citet{skhosana2024modified}. 

Let \(\{(\mathbf{x}_i,Y_i)\}_{i=1}^{N}\) denote the observed sample. For posterior draw \(j=1,\dots,n_{\text{samp}}\) and component \(k=1,\dots,K\), we have \(\pi_k^{[j]}, \boldsymbol{\beta}_k^{[j]}, \sigma_k^{[j]}\), with component mean
\(\mu_{ik}^{[j]} = \mathbf{x}_i^{\prime}\boldsymbol{\beta}_k^{[j]}\). The conditional mixture CDF for draw \(j\) evaluated at \((\mathbf{x}_i,Y_i)\) is \(F^{[j]}(Y_i\mid \mathbf{x}_i) = \sum_{k=1}^K \pi_k^{[j]}\, \Phi\!\left(\frac{Y_i -\mathbf{x}_i^{\prime}\boldsymbol{\beta}_k^{[j]}}{\sigma_k^{[j]}}\right)\) where \(\Phi\) is the standard Normal CDF. Define the posterior mean CDF \( \widehat F(Y_i\mid \mathbf{x}_i)= \frac{1}{n_{\text{samp}}}\sum_{j=1}^{n_{\text{samp}}} F^{[j]}(Y_i\mid \mathbf{x}_i)\).  The true conditional CDF is \(F_{\text{true}}(Y_i\mid \mathbf{x}_i) = \sum_{k=1}^K \pi_{k,\text{true}}\, \Phi\!\left(\frac{Y_i - \mathbf{x}_i^{\prime}\boldsymbol{\beta}_{k,\text{true}}}{\sigma_{k,\text{true}}}\right).\)   The KS statistic is computed as a maximum over observations \( KS= \max_{i=1,\dots,N}\left| F_{\text{true}}(Y_i\mid \mathbf{x}_i) - \widehat F(Y_i\mid \mathbf{x}_i)\right|.\) 

\begin{table}[!h]
\captionsetup{font={stretch=1}}
\caption{Mean over 50 replicates of mean ARI, ESS of the first moment, KS statistic, and CPU time (seconds) for each sampling strategy in the \(N = 10{,}000\) scenario.}
\label{tab:xy_sim}
\centering
{\renewcommand{\baselinestretch}{1}\selectfont
\begin{tabular}{ccccc}
\toprule
Method & Mean ARI & ESS & KS & CPU Time (s)\\
\midrule

\multicolumn{5}{c}{\textbf{Large overlap}} \\
DS-ML & 0.669 & {1000} & 0.023 & 325.0 \\
DS-Const & 0.458 & {1000} & 0.016 & 353.6 \\
MC-MCMC & 0.4397 & 1013 & 0.009 & 8031.7\\
MCMC & 0.402 & 1031 & 0.009 & 2927.4 \\
\midrule

\multicolumn{5}{c}{\textbf{Moderate overlap}} \\
DS-ML & 0.910 & {1000} & 0.011 & 268.2 \\
DS-Const & 0.781 & {1000} & 0.012 & 373.3 \\
MC-MCMC & 0.770 & 996 & 0.008 & 8102.3 \\
MCMC & 0.767 & 1034 & 0.008 & 2962.9 \\
\midrule

\multicolumn{5}{c}{\textbf{No overlap}} \\
DS-ML & 1.000 & {1000} & 0.007 & 218.1 \\
DS-Const & 0.986 & {1000} & 0.008 & 394.5 \\
MC-MCMC & 0.987 & 1014 & 0.008 & 8103.5 \\
MCMC & 0.988 & 1002 & 0.008 & 2942.5 \\

\bottomrule
\end{tabular}
}
\end{table}

Table \ref{tab:xy_sim} summarizes the evaluation metrics, averaged over 50 independent replicates, for each overlap setting with \(N = 10{,}000\). Simulation results for \(N \in \{10, 500,1000\}\) and additional evaluation metrics can be found in Appendix \ref{appen:xy_sims}. {Recall in the regression case we include an intercept and a single normal generated covariate.} Overall, these results tell a similar story to the intercept-only scenario. In this simulation, we omit the DS-Const-MAP approach because the $k$-means clustering with MAP initialization performs poorly in the regression setting. 

Across all overlap scenarios, DS-ML {(enumerated over $K = 2,\ldots, 25$ and 3 ML algorithms leading to 72 candidates)} achieves the best clustering performance as measured by the ARI. All methods perform well in the no overlap scenario, and clustering performance declines as the amount of overlap increases. The MCMC-based implementations consistently yield the lowest ARI values. The constrained methods restrict the posterior label draws to a smaller candidate set. When this candidate set contains partitions close to the truth, posterior draws from the reduced space tend to yield consistently high ARI values. The unconstrained MCMC sampler must explore a much larger label space, so some posterior draws correspond to good partitions (high ARI) while others are farther from the truth (low ARI). This leads to a lower mean ARI for MCMC relative to the constrained samplers.

Density estimation is similar across the constrained and unconstrained methods in the moderate and no overlap settings. Similar to the intercept only scenario, the constrained approaches have slightly worse density estimation performance in the large overlap setting. This may be due to the restriction of the label space.  However, the constrained approaches perform better in terms of clustering, so in the large overlap setting there is a trade-off depending on whether clustering performance or density estimation is of greater interest. 

The constrained approaches also offer a clear computational advantage over both MCMC-based implementations, which require more iterations to achieve a comparable ESS for the first and second moments. MC-MCMC is particularly slow relative to the standard Gibbs sampler because it updates labels one observation at a time using the full conditional in \eqref{gmm:eq:dsmcmc}, which requires repeated evaluation of the joint distribution in \eqref{eq:gmm:zXy} for each possible reassignment. Appendix \ref{gmm:crab_analysis} presents a real data illustration for the regression setting with \(N = 200\) that closely resembles this simulation setup.

\section{Birth Weight Data Application}\label{gmm:s:birthweight}
In this application, we analyze 2024 U.S. natality data consisting of \(N = 3{,}408{,}680\) observations obtained from \url{https://www.cdc.gov/nchs/data_access/vitalstatsonline.htm}. Gaussian FMMs have been used previously in the birth weight literature. Several studies use a {G}aussian mixture model with two components to identify implausible birth weights or gestational ages, or to correct gestational age miss classification in preterm birth data \citep{tentoni2004birthweight, platt2001detecting, urquia2012mixture}. Other work has used FMMs to study the joint distribution of gestational age and birth weight \citep{schwartz2010joint} or to determine the number of mixture components from the data \citep{charnigo2010thinking}. The scale of the existing birth weight mixture modeling analyses varies, but the largest samples sizes considered are only on the scale of hundreds of thousands \citep{schwartz2010joint}.  As mentioned, the data set we consider is much larger. 

We first consider an intercept only analysis in which we use birth weight, measured in grams, as the response variable, with the goal of determining whether there are multiple latent subgroups. We also consider an analysis in which we include gestational age, measured in weeks, as a covariate. Prior studies have shown that birth weight distributions are heterogeneous and can be modeled using Gaussian FMMs \citep{oja1991fitting, charnigo2010thinking, gage2002modeling, slaughter2009bayesian}. We fit the NIG mixture model using the DS-ML approach and compare these results with those from a traditional MCMC implementation. 

To summarize clustering performance, we report the ARI using preterm versus non-preterm status as a clinically motivated reference classification, since prior work has shown that gestational age is closely tied to heterogeneity in birth weight distributions and that lower weight components are associated with preterm births \citep{wilcox1983birthweight}. We emphasize that preterm status is used here as a clinically meaningful proxy for comparison, rather than as the literal true latent class structure. We define preterm births as those occurring before 37 completed weeks of gestation, and non-preterm births as those occurring at 37 or more completed weeks \citep{goldenberg2008epidemiology}. For  DS-ML we set the maximum number of components \(M = 25\) and for MCMC we consider the maximum number of components \(M \in \{2, 5, 25\}\). 

\begin{table}[H]
\caption{ARI, ESS, KS, and CPU time for birth weight data.}
\label{table:data_anal_int_only}
\centering
{\renewcommand{\baselinestretch}{1}\selectfont
\begin{tabular}{cccccc}
\toprule
Method & Mean ARI & ESS & KS & M & CPU Time \\ \midrule
DS-ML & 0.429 & 1000 & 0.016 & 25 & 1.29 hrs \\
MCMC & 0.0001 & 904.5 & 0.004 & 25 & 4.21 days \\
MCMC & 0.021 & 1000 & 0.005 & 5 & 3.41 days \\
MCMC & 0.256 & 1000 & 0.008 & 2 & 3.27 days \\
\bottomrule
\end{tabular}
}
\end{table}

The results in Table \ref{table:data_anal_int_only} show that MCMC yields smaller KS values (using preterm versus non-preterm as the reference classification) than the DS-ML, but worse agreement with the clinically motivated preterm/non-preterm reference and significantly longer computation times. The smaller KS values indicate that the overfitted MCMC model provides a closer approximation to the empirical CDF of the birth weight data. The corresponding ARI values suggest that these fits align poorly with the clinical reference, particularly when many components are allowed, because observations in the non-preterm group are spread across several overlapping mixture components. The DS-ML approach produces more clinically interpretable partitions and has a significantly faster computation time than the MCMC implementation. 
\begin{figure}[!b]
    \centering
    \captionsetup{font={stretch=1}}
    
    \begin{subfigure}[b]{0.48\textwidth}
        \centering
        \includegraphics[width=\textwidth]{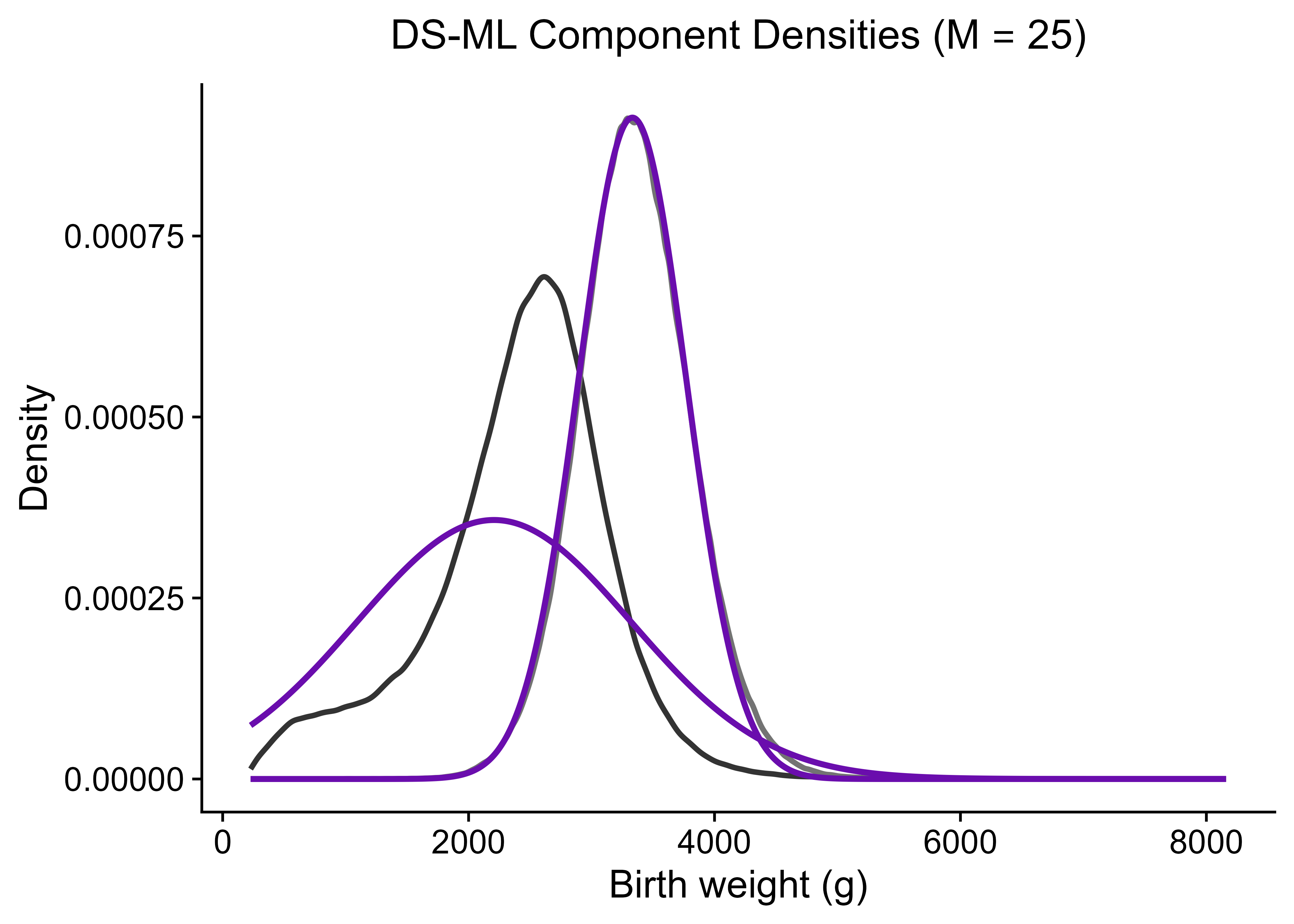}
    \end{subfigure}
    \hfill
    \begin{subfigure}[b]{0.48\textwidth}
        \centering
        \includegraphics[width=\textwidth]{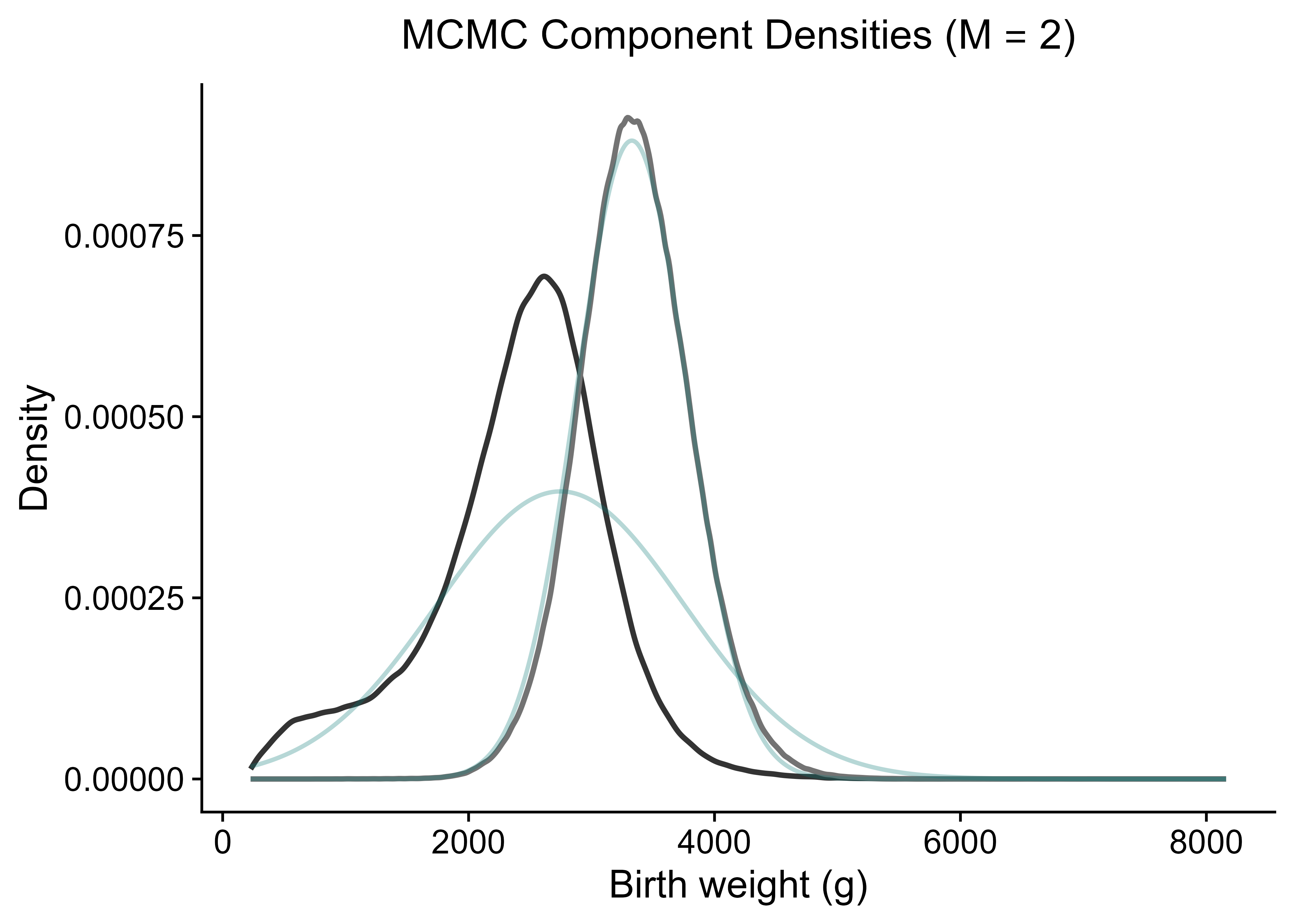}
    \end{subfigure}
    
    \vspace{0.5cm}
    
    \begin{subfigure}[b]{0.48\textwidth}
        \centering
        \includegraphics[width=\textwidth]{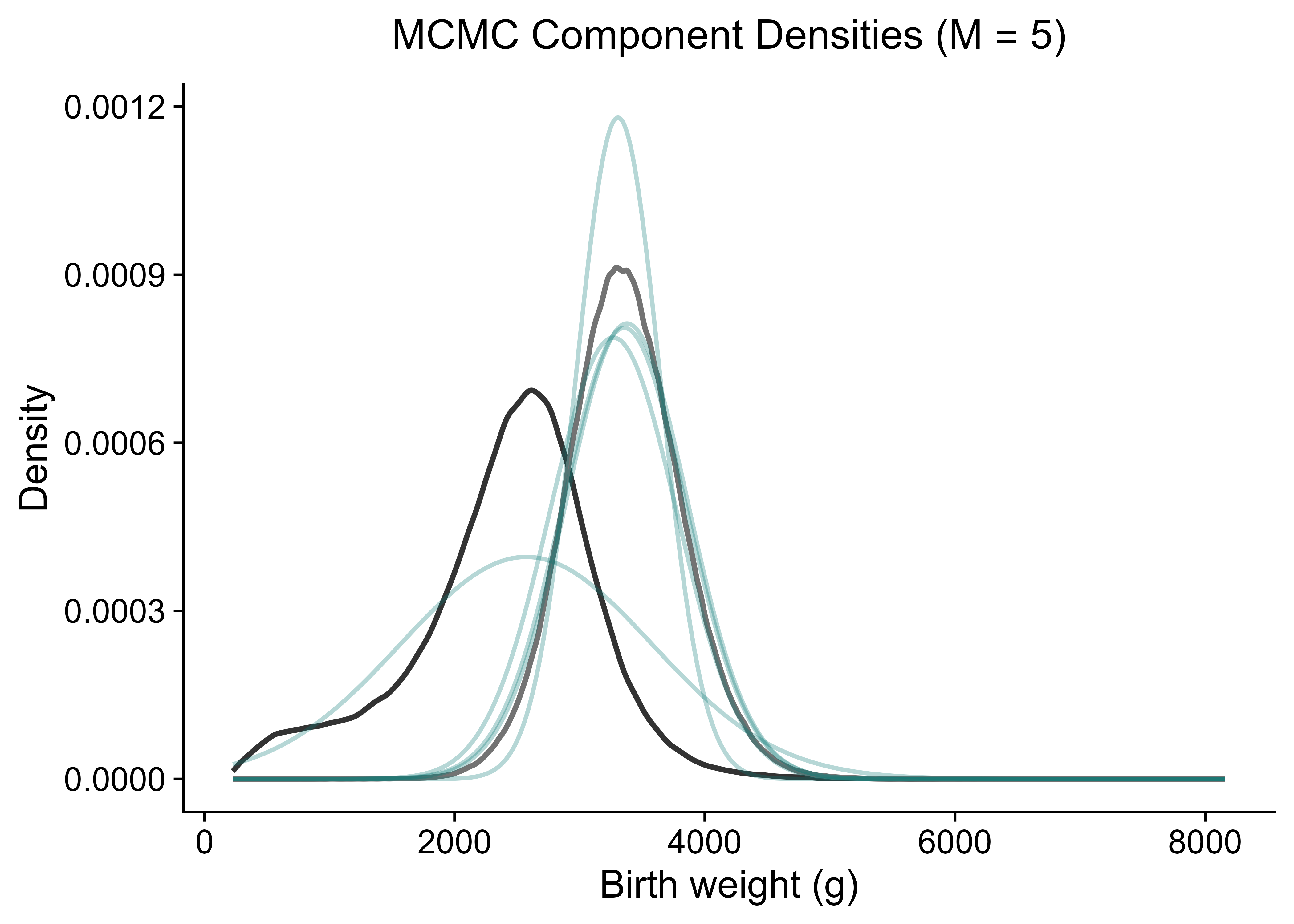}
    \end{subfigure}
    \hfill
    \begin{subfigure}[b]{0.45\textwidth}
        \centering
        \includegraphics[width=\textwidth]{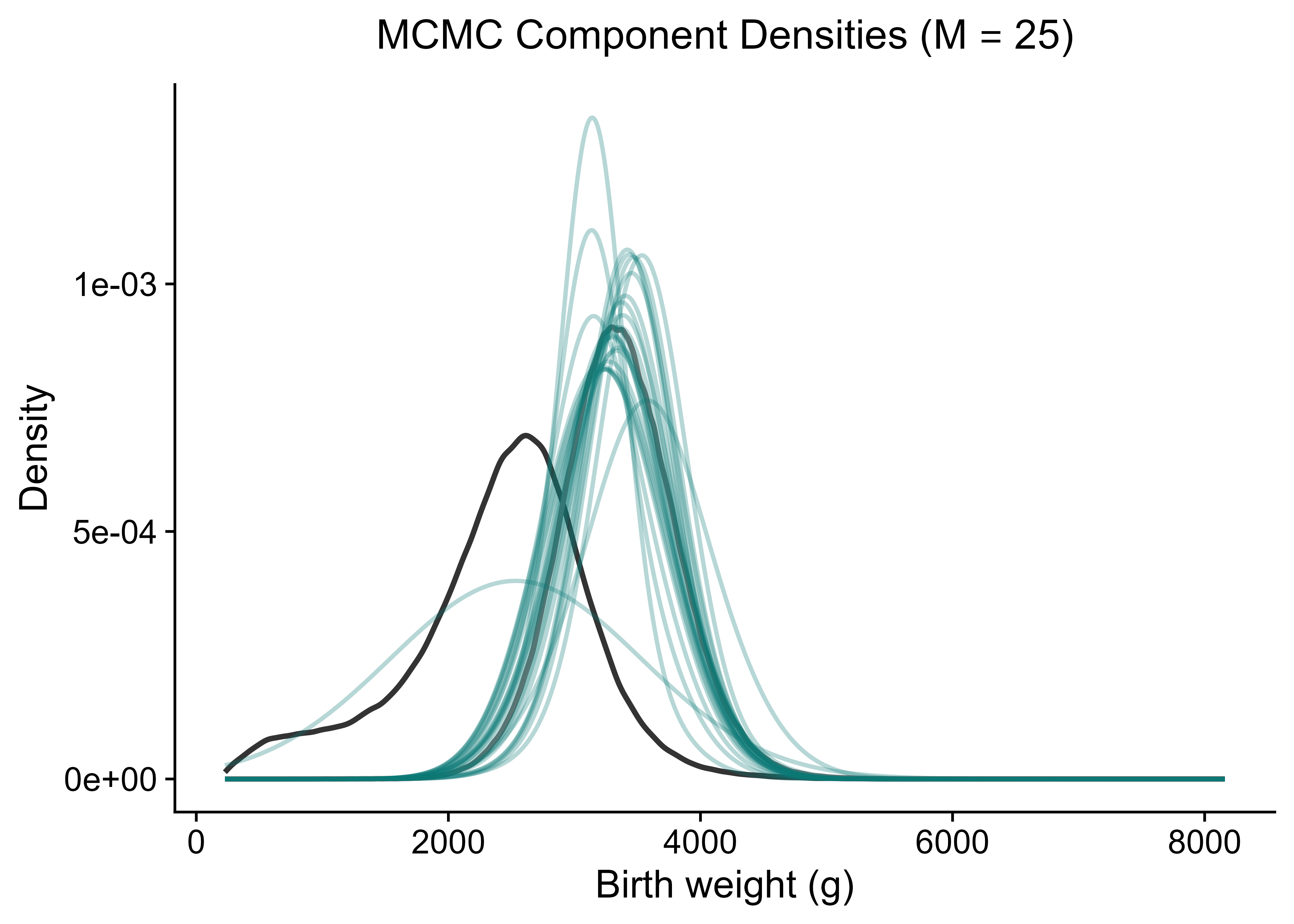}
    \end{subfigure}
    
    \caption{Comparison of with the empirical birth weight densities for the preterm and non-preterm groups. At each posterior draw, the mixture components with nonzero weights are selected, ordered by their means, and their unweighted  densities are computed. The plotted component curves are then obtained by averaging these densities across posterior draws at each grid point.}
    \label{fig:data_anal_int_only}
\end{figure}

Figure \ref{fig:data_anal_int_only} compares the posterior mean unweighted component densities with the empirical birth weight densities for the preterm and non-preterm groups. Across all fitted models, the dominant structure appears to consist of roughly two main components. There is a lower birth weight component associated with the preterm group and a higher birth weight component associated with the non-preterm group. This two group pattern is also visible in the overfitted MCMC models with larger values of \(M\). The overfitted model appears to divide the higher weight group into multiple overlapping components, which is consistent with the lower ARI reported in Table \ref{table:data_anal_int_only}. 

For the regression analysis, we fit DS-ML using various combinations of covariates. Full results, including comparison with MCMC, are provided in Appendix \ref{appen:sensitivity}. Among the different set of covariates we considered, using gestational age alone as a covariate resulted in the strongest recovery of the preterm/non-preterm clusters, so we focus on that specification here. With $M=25$, DS-ML achieved a mean ARI of 0.776 with a CPU time of approximately 4.56 hours. The MCMC sampler in the regression scenario again took several days to run (see Appendix \ref{appen:sensitivity} for the full comparison and discussion). These results further illustrate the computational benefits of using DS-ML in this application. As in the intercept-only scenario, DS-ML recovers interpretable clinical groups in hours rather than days, providing a favorable balance of computational efficiency and clustering performance.

\section{Discussion}\label{gmm:s:discuss}
In this paper, we introduced new sampling strategies for Gaussian FMMs. First, we proposed a method of composition with MCMC where necessary (MC-MCMC) strategy, which samples the mixture labels via a Gibbs sampler and then samples the continuous parameters from their conditional distributions derived under the NIG model. Second, we proposed a constrained approach that constructs a candidate set of plausible label allocations and  performs posterior inference conditional on \(\{Z_{ik}\}\) lying in this reduced set. We compared both methods (and the direct sampler for small \(N\)) to a traditional Gibbs sampler implementation. Across many scenarios, the constrained approach had improved clustering performance and significant computational advantages, while the MC-MCMC performed similarly to the standard MCMC. We further illustrated our constrained sampler using 2024 U.S. natality data from the CDC. In this application with an extremely large sample size, we show the DS-ML approach was computationally efficient compared to a Gibbs sampler, fitting the mixture models in hours instead of days while still being able to recover scientifically meaningful groups in the birth weight data.

\section*{Appendix A: Proof of Proposition 1}
From Proposition 1 in the Main Text, recall the joint posterior distribution of \(\{\boldsymbol{\beta}_k\}, \boldsymbol{\pi}, \{\sigma_k^2\}, \{Z_{ik}\}, K \vert \mathbf{y}\) can be decomposed as follows:   
\begin{align}
	f(\{\boldsymbol{\beta}_k\},& \boldsymbol{\pi}, \{\sigma_k^2\}, \{Z_{ik}\}, K \vert \mathbf{y}) = f(\{\boldsymbol{\beta}_k\}\vert \{\sigma_k^2\}, \boldsymbol{\pi}, \{Z_{ik}\}, K,\mathbf{y})f( \{\sigma_k^2\}\vert \boldsymbol{\pi}, \{Z_{ik}\}, K, \mathbf{y})\notag\\ &\times f(\boldsymbol{\pi}\vert \{Z_{ik}\}, K, \mathbf{y})f(\{Z_{ik}\}\vert K, \mathbf{y}) f( K\vert \mathbf{y}).
	\label{appen:eq:joint}
\end{align}
We can derive each of the conditional distributions on the right hand side of Equation (\ref{appen:eq:joint}). Each of these pdf/pmfs will have a known expression that can be simulated from. Thus, to sample from (\ref{appen:eq:joint}) one simply sequentially draws from each of the conditional distributions on the right-hand-side of (\ref{appen:eq:joint}). First, for simplicity, we re-write the data model from Equation (1) in the Main Text using vectors and matrices.  
\begin{align*}
	f(\mathbf{y}\vert \{\boldsymbol{\beta}_k\}&, \{\sigma_k^2\}, \{Z_{ik}\}, K) \propto \prod_{k = 1, \dots,K, n_k >0}\frac{(\sigma_k^2)^{-n_k/2}}{(2\pi)^{n_k/2}} \exp \left\{- \frac{(\mathbf{y}_k - \mathbf{X}_k\boldsymbol{\beta}_k)^{\prime}(\mathbf{y}_k - \mathbf{X}_k\boldsymbol{\beta}_k)}{2\sigma_k^2} \right\},
\end{align*}
where \(n_k = \sum_{i = 1}^NZ_{ik}\), the \(n_k \times p\) matrix \(\mathbf{X}_k = (\mathbf{x}_i^{\prime}: Z_{ik} = 1, i=1,\dots, N)\), and the \(n_k\)-dimensional vector \(\mathbf{y}_k = (Y_i: Z_{ik} = 1, i = 1,\dots, N)\). Using a proportionality argument and a complete the squares argument, we can derive \(f(\{\boldsymbol{\beta}_k\}\vert \{\sigma_k^2\}, \{Z_{ik}\}, K,\mathbf{y})\) as follows: 
\begin{align*}
	f&(\{\boldsymbol{\beta}_k\}\vert \{\sigma_k^2\}, \{Z_{ik}\}, K,\mathbf{y}) \propto f(\mathbf{y}\vert \{\boldsymbol{\beta}_k\}, \{\sigma_k^2\}, \{Z_{ik}\}, K)f(\{\boldsymbol{\beta}_k\} \vert \{\sigma_k^2\}, \sigma_{\beta}^2) \\ 
	&\propto \prod_{k = 1,\dots, K, n_k = 0}\frac{(\sigma_k^2\sigma_{\beta}^2)^{-p/2}}{(2\pi)^{p/2}}\exp\left(-\frac{\boldsymbol{\beta}_k^{\prime}\boldsymbol{\beta}_k}{2\sigma_k^2\sigma_{\beta}^2}\right) \\ 
	&\times \prod_{k = 1, \dots,K, n_k >0}\frac{(\sigma_k^2)^{-n_k/2}}{(2\pi)^{n_k/2}} \frac{(\sigma_k^2\sigma_{\beta}^2)^{-p/2}}{(2\pi)^{p/2}} \exp \left\{- \frac{(\mathbf{y}_k - \mathbf{X}_k\boldsymbol{\beta}_k)^{\prime}(\mathbf{y}_k - \mathbf{X}_k\boldsymbol{\beta}_k)}{2\sigma_k^2} -\frac{\boldsymbol{\beta}_k^{\prime}\boldsymbol{\beta}_k}{2\sigma_k^2\sigma_{\beta}^2} \right\} \\
	&\propto \prod_{k = 1,\dots, K, n_k = 0}\frac{(\sigma_k^2\sigma_{\beta}^2)^{-p/2}}{(2\pi)^{p/2}}\exp\left(-\frac{\boldsymbol{\beta}_k^{\prime}\boldsymbol{\beta}_k}{2\sigma_k^2\sigma_{\beta}^2}\right)\prod_{k = 1, \dots,K, n_k >0}\frac{(\sigma_k^2)^{-n_k/2}}{(2\pi)^{n_k/2}} \frac{(\sigma_k^2\sigma_{\beta}^2)^{-p/2}}{(2\pi)^{p/2}} \\ 
	&\times\exp \left\{ - \frac{\mathbf{y}^{\prime}_k\mathbf{y}_k}{2\sigma_k^2} - \frac{\boldsymbol{\beta}_k^{\prime}\mathbf{X}_k^{\prime}\mathbf{X}_{k}\boldsymbol{\beta}_k}{2\sigma_k^2} + \frac{\boldsymbol{\beta}_k^{\prime}\mathbf{X}_k^{\prime}\mathbf{y}_k}{\sigma_k^2} - \frac{\boldsymbol{\beta}_k^{\prime}\boldsymbol{\beta}_k}{2\sigma_k^2\sigma_{\beta}^2}\right\} 
\end{align*}
and then collecting the terms in \(\boldsymbol{\beta}_k\) gives
\begin{align*}
	&\propto \prod_{k = 1,\dots, K, n_k = 0}\frac{(\sigma_k^2\sigma_{\beta}^2)^{-p/2}}{(2\pi)^{p/2}}\exp\left(-\frac{\boldsymbol{\beta}_k^{\prime}\boldsymbol{\beta}_k}{2\sigma_k^2\sigma_{\beta}^2}\right)\prod_{k = 1, \dots,K, n_k >0}\frac{(\sigma_k^2)^{-n_k/2}}{(2\pi)^{n_k/2}} \frac{(\sigma_k^2\sigma_{\beta}^2)^{-p/2}}{(2\pi)^{p/2}} \\ 
	&\times\exp \left\{ - \frac{\mathbf{y}^{\prime}_k\mathbf{y}_k}{2\sigma_k^2}  - \frac{1}{2}\boldsymbol{\beta}_k^{\prime} \frac{1}{\sigma_k^2} \left(\mathbf{X}_k^{\prime}\mathbf{X}_k + \frac{1}{\sigma_{\beta}^2}\mathbf{I}_p \right)\boldsymbol{\beta}_k + \frac{\boldsymbol{\beta}_k^{\prime}\mathbf{X}_k^{\prime}\mathbf{y}_k}{\sigma_k^2} \right\} \\
	&\propto \prod_{k = 1,\dots, K, n_k = 0}\frac{(\sigma_k^2\sigma_{\beta}^2)^{-p/2}}{(2\pi)^{p/2}}\exp\left(-\frac{\boldsymbol{\beta}_k^{\prime}\boldsymbol{\beta}_k}{2\sigma_k^2\sigma_{\beta}^2}\right)\prod_{k = 1, \dots,K, n_k >0}\frac{(\sigma_k^2)^{-n_k/2}}{(2\pi)^{n_k/2}} \frac{(\sigma_k^2\sigma_{\beta}^2)^{-p/2}}{(2\pi)^{p/2}} \\ 
	&\times\exp \left\{ - \frac{\mathbf{y}^{\prime}_k\mathbf{y}_k}{2\sigma_k^2}  - \frac{1}{2}\boldsymbol{\beta}_k^{\prime}\boldsymbol{\Sigma}_{P,k}^{-1}\boldsymbol{\beta}_k + \frac{\boldsymbol{\beta}_k^{\prime}\mathbf{X}_k^{\prime}\mathbf{y}_k}{\sigma_k^2} \right\} \\
	&\propto \prod_{k = 1,\dots, K, n_k = 0}\frac{(\sigma_k^2\sigma_{\beta}^2)^{-p/2}}{(2\pi)^{p/2}}\exp\left(-\frac{\boldsymbol{\beta}_k^{\prime}\boldsymbol{\beta}_k}{2\sigma_k^2\sigma_{\beta}^2}\right)\prod_{k = 1, \dots,K, n_k >0}\frac{(\sigma_k^2)^{-n_k/2}}{(2\pi)^{n_k/2}} \frac{(\sigma_k^2\sigma_{\beta}^2)^{-p/2}}{(2\pi)^{p/2}} \\ 
	&\times\exp \left\{ - \frac{\mathbf{y}^{\prime}_k\mathbf{y}_k}{2\sigma_k^2}  - \frac{1}{2}\boldsymbol{\beta}_k^{\prime}\boldsymbol{\Sigma}_{P,k}^{-1}\boldsymbol{\beta}_k +\boldsymbol{\beta}_k^{\prime}\frac{1}{\sigma_k^2}\left(\mathbf{X}_k^{\prime}\mathbf{X}_k + \frac{1}{\sigma_{\beta}^2}\mathbf{I}_p \right)\left(\mathbf{X}_k^{\prime}\mathbf{X}_k + \frac{1}{\sigma_{\beta}^2}\mathbf{I}_p \right)^{-1}\mathbf{X}_k^{\prime}\mathbf{y}_k \right\} \\
	&\propto \prod_{k = 1,\dots, K, n_k = 0}\frac{(\sigma_k^2\sigma_{\beta}^2)^{-p/2}}{(2\pi)^{p/2}}\exp\left(-\frac{\boldsymbol{\beta}_k^{\prime}\boldsymbol{\beta}_k}{2\sigma_k^2\sigma_{\beta}^2}\right)\prod_{k = 1, \dots,K, n_k >0}\frac{(\sigma_k^2)^{-n_k/2}}{(2\pi)^{n_k/2}} \frac{(\sigma_k^2\sigma_{\beta}^2)^{-p/2}}{(2\pi)^{p/2}} \\ 
	&\times\exp \left\{ - \frac{\mathbf{y}^{\prime}_k\mathbf{y}_k}{2\sigma_k^2}  - \frac{1}{2}\boldsymbol{\beta}_k^{\prime}\boldsymbol{\Sigma}_{P,k}^{-1}\boldsymbol{\beta}_k + \boldsymbol{\beta}_k^{\prime}\boldsymbol{\Sigma}_{P,k}^{-1}\boldsymbol{\mu}_{P,k} \right\} 
\end{align*}
Completing the square yields 
\begin{align*}
	&\propto \prod_{k = 1,\dots, K, n_k = 0}\frac{(\sigma_k^2\sigma_{\beta}^2)^{-p/2}}{(2\pi)^{p/2}}\exp\left(-\frac{\boldsymbol{\beta}_k^{\prime}\boldsymbol{\beta}_k}{2\sigma_k^2\sigma_{\beta}^2}\right)\prod_{k = 1, \dots,K, n_k >0}\frac{(\sigma_k^2)^{-n_k/2}}{(2\pi)^{n_k/2}} \frac{(\sigma_k^2\sigma_{\beta}^2)^{-p/2}}{(2\pi)^{p/2}} \\ 
	&\times\exp \left\{ -\frac{1}{2}(\boldsymbol{\beta}_k - \boldsymbol{\mu}_{P,k})^{\prime}\boldsymbol{\Sigma}_{P,k}^{-1}(\boldsymbol{\beta}_k - \boldsymbol{\mu}_{P,k})\right\} 
\end{align*}
where \(\boldsymbol{\mu}_{P,k} = \left(\mathbf{X}_k^{\prime}\mathbf{X}_k + \frac{1}{\sigma_{\beta}^2}\mathbf{I}_p\right)^{-1}\mathbf{X}_k^{\prime}\mathbf{y}_k\) and \(\boldsymbol{\Sigma}_{P,k} = \sigma_k^2\left(\mathbf{X}_k^{\prime}\mathbf{X}_k + \frac{1}{\sigma_{\beta}^2}\mathbf{I}_p \right)^{-1}\). So the known distribution of \(\{\boldsymbol{\beta}_k\}\vert \{\sigma_k^2\}, \boldsymbol{\pi}, \{Z_{ik}\}, K,\mathbf{y}\) is 
\begin{align}
	\boldsymbol{\beta}_k \mid \sigma_k^2,\{Z_{ik}\},K,\mathbf{y}
	\sim
	\begin{cases}
		\text{MVN}\!\left(\boldsymbol{\mu}_{P,k},\,\boldsymbol{\Sigma}_{P,k}\right), 
		& \text{if } n_k > 0,\\[6pt]
		\text{MVN}\!\left(\mathbf{0}_p,\,\sigma_k^2\sigma_{\beta}^2\mathbf{I}_p\right),
		& \text{if } n_k = 0~.
	\end{cases}
	\label{appen:eq:beta_post}
\end{align}
When marginalizing the full likelihood across \(\boldsymbol{\beta}_k\) we obtain the \(f(\mathbf{y}_k\vert\ \{\sigma_k^2\}, \{Z_{ik}\}, K)\). That is for \(n_k > 0\) we have 
\begin{align*}
	\int f&(\mathbf{y}_k \vert \mathbf{X}_{k}\boldsymbol{\beta}_k, \sigma_k^2I_{n_k}) f(\boldsymbol{\beta}_k\vert \mathbf{0}_p, \sigma_k^2\sigma_{\beta}^2\mathbf{I}_p) d\boldsymbol{\beta}_k \\
	&= (2\pi\sigma_k^2)^{-\frac{n_k}{2}}(2\pi\sigma_k^2\sigma_{\beta}^2)^{-\frac{p}{2}}\int \exp \left\{- \frac{1}{2\sigma_k^2}\left(\mathbf{y}_k^{\prime}\mathbf{y}_k  - 2\boldsymbol{\beta}_k^{\prime}\mathbf{X}_k^{\prime}\mathbf{y}_k + \boldsymbol{\beta}_k^{\prime} \left[ \sigma_{\beta}^{-2}(\mathbf{I}_p + \sigma_{\beta}^2\mathbf{X}_k^{\prime}\mathbf{X}_k)\right]\boldsymbol{\beta}_k\right) \right\} d\boldsymbol{\beta}_k \\
	&= (2\pi\sigma_k^2)^{-\frac{n_k}{2}}(2\pi\sigma_k^2\sigma_{\beta}^2)^{-\frac{p}{2}}\int \exp \left\{- \frac{1}{2\sigma_k^2}\left(\mathbf{y}_k^{\prime}\mathbf{y}_k  - 2\boldsymbol{\beta}_k^{\prime}\mathbf{U}_k + \boldsymbol{\beta}_k^{\prime} \mathbf{C}_{k}\boldsymbol{\beta}_k\right) \right\} d\boldsymbol{\beta}_k
\end{align*}
where \(\mathbf{U}_k = \mathbf{X}_{k}^{\prime}\mathbf{y}_k\) and \(\mathbf{C}_k = \mathbf{X}_k^{\prime}\mathbf{X}_k  + \sigma_{\beta}^{-2}\mathbf{I}_p =\sigma_{\beta}^{-2}(\mathbf{I}_p + \sigma_{\beta}^2\mathbf{X}_k^{\prime}\mathbf{X}_{k})\) and \(\mathbf{C}_k^{-1} = \sigma_{\beta}^2(\mathbf{I}_p + \sigma_{\beta}^2\mathbf{X}_k^{\prime}\mathbf{X}_{k})^{-1}\).
\begin{align*}
	&\int f(\mathbf{y}_k \vert \mathbf{X}_{k}\boldsymbol{\beta}_k, \sigma_k^2I_{n_k}) f(\boldsymbol{\beta}_k\vert \mathbf{0}_p, \sigma_k^2\sigma_{\beta}^2\mathbf{I}_p) d\boldsymbol{\beta}_k = (2\pi\sigma_k^2)^{-\frac{n_k}{2}}(2\pi\sigma_k^2\sigma_{\beta}^2)^{-\frac{p}{2}} \\& \times\int \exp \left\{-\frac{1}{2\sigma_k^2} \left[ \mathbf{y}_k^{\prime}\mathbf{y}_k - \mathbf{U}_k^{\prime}\mathbf{C}_k^{-1}\mathbf{U}_k + (\boldsymbol{\beta}_k - \mathbf{C}_k^{-1}\mathbf{U}_k)^{\prime}\mathbf{C}_k(\boldsymbol{\beta}_k - \mathbf{C}_k^{-1}\mathbf{U}_k) \right]\right\} d\boldsymbol{\beta}_k \\
	&= (2\pi\sigma_k^2)^{-\frac{n_k}{2}}(2\pi\sigma_k^2\sigma_{\beta}^2)^{-\frac{p}{2}}\exp\left\{-\frac{1}{2\sigma_k^2}\left(\mathbf{y}_k^{\prime}\mathbf{y}_k -\mathbf{U}_k^{\prime}\mathbf{C}_k^{-1}\mathbf{U}_k\right)\right\} \\
	&\times\int \exp \left\{-\frac{1}{2\sigma_k^2} \left[(\boldsymbol{\beta}_k - \mathbf{C}_k^{-1}\mathbf{U}_k)^{\prime}\mathbf{C}_k(\boldsymbol{\beta}_k - \mathbf{C}_k^{-1}\mathbf{U}_k) \right]\right\} d\boldsymbol{\beta}_k \\
	&= (2\pi\sigma_k^2)^{-\frac{n_k}{2}}(2\pi\sigma_k^2\sigma_{\beta}^2)^{-\frac{p}{2}}\exp\left\{-\frac{1}{2\sigma_k^2}\left(\mathbf{y}_k^{\prime}\mathbf{y}_k -\mathbf{U}_k^{\prime}\mathbf{C}_k^{-1}\mathbf{U}_k\right)\right\} (2\pi)^{\frac{p}{2}}\sigma_k^p\vert\mathbf{C}_k\vert^{-1/2} \\
	&= (2\pi\sigma_k^2)^{-\frac{n_k}{2}}(\sigma_{\beta}^2)^{-\frac{p}{2}}\vert \mathbf{C}_k\vert^{-\frac{1}{2}}\exp\left\{-\frac{1}{2\sigma_k^2}\left(\mathbf{y}_k^{\prime}\mathbf{y}_k -\mathbf{U}_k^{\prime}\mathbf{C}_k^{-1}\mathbf{U}_k\right)\right\}
\end{align*}
Now plugging back in for \(\mathbf{C}_k\) and \(\mathbf{U}_k\) we have 
\begin{align*}
	&\int f(\mathbf{y}_k \vert \mathbf{X}_{k}\boldsymbol{\beta}_k, \sigma_k^2I_{n_k}) f(\boldsymbol{\beta}_k\vert \mathbf{0}_p, \sigma_k^2\sigma_{\beta}^2\mathbf{I}_p) d\boldsymbol{\beta}_k \\
	&= (2\pi\sigma_k^2)^{-\frac{n_k}{2}}(\sigma_{\beta}^2)^{-\frac{p}{2}}\vert \sigma_{\beta}^{-2}(\mathbf{I}_p + \sigma_{\beta}^2\mathbf{X}_k^{\prime}\mathbf{X}_k)\vert^{-\frac{1}{2}} \\
	&\times \exp \left\{-\frac{1}{2\sigma_k^2}\left[\mathbf{y}_k^{\prime}\mathbf{y}_k - \sigma_{\beta}^2\mathbf{y}_k^{\prime}\mathbf{X}_{k}(\mathbf{I}_p + \sigma_{\beta}^2\mathbf{X}_k^{\prime}\mathbf{X}_k)^{-1}\mathbf{X}_k^{\prime}\mathbf{y}_k \right] \right\} \\
	&= (2\pi\sigma_k^2)^{-\frac{n_k}{2}}\vert\mathbf{I}_{n_k} + \sigma_{\beta}^2 \mathbf{X}_k \mathbf{X}_k^{\prime} \vert^{-\frac{1}{2}} \exp \left\{- \frac{1}{2\sigma_k^2}\left[\mathbf{y}_k^{\prime}\left(\mathbf{I}_{n_k} - \sigma_{\beta}^2\mathbf{X}_k(\mathbf{I}_p + \sigma_{\beta}^2\mathbf{X}_k^{\prime}\mathbf{X}_k)^{-1}\mathbf{X}_k^{\prime} \right)\mathbf{y}_k \right] \right\}
\end{align*}
Now, applying the Woodbury identity gives 
\begin{align*}
	&\int f(\mathbf{y}_k \vert \mathbf{X}_{k}\boldsymbol{\beta}_k, \sigma_k^2I_{n_k}) f(\boldsymbol{\beta}_k\vert \mathbf{0}_p, \sigma_k^2\sigma_{\beta}^2\mathbf{I}_p) d\boldsymbol{\beta}_k \\
	&= (2\pi\sigma_k^2)^{- \frac{n_k}{2}} \vert\mathbf{I}_{n_k} + \sigma_{\beta}^2\mathbf{X}_k\mathbf{X}_k^{\prime} \vert^{-\frac{1}{2}} \exp \left\{-\frac{1}{2\sigma_k^2}\left[\mathbf{y}_k^{\prime}(\sigma_{\beta}^2\mathbf{X}_k\mathbf{X}_k^{\prime} + \mathbf{I}_{n_k})^{-1}\mathbf{y}_k\right] \right\} \\
	&= MVN(\mathbf{0}_{n_k}, \sigma_k^2(\mathbf{I}_{n_k} + \sigma_{\beta}^2\mathbf{X}_k\mathbf{X}_k^{\prime}))
\end{align*}
Thus,
\[
\mathbf{y}_k \mid \sigma_k^2,\mathbf{X}_k,\{Z_{ik}\},K
\sim \text{MVN}\!\left(\mathbf{0}_{n_k},\, \sigma_k^2\left(\mathbf{I}_{n_k}+\sigma_\beta^2\mathbf{X}_k\mathbf{X}_k^\top\right)\right),
\]
and the integral above equals the corresponding multivariate normal density evaluated at \(\mathbf{y}_k\). Using this marginal likelihood and a proportionality argument, we have that 
\begin{align*}
	f&(\{\sigma_k^2\}\vert \{Z_{ik}\}, K,\mathbf{y}) \propto f(\mathbf{y}\vert \{\sigma_k^2\}, \{Z_{ik}\}, K)f( \{\sigma_k^2\}\vert \omega, \kappa) \\
	&\propto \prod_{k = 1,\dots, K, n_k = 0} f(\sigma_k^2\vert\omega, \kappa) \prod_{k = 1,\dots, K,n_k > 0} f(\mathbf{y}_k \vert \sigma_k^2, \{Z_{ik}\}, K)f(\sigma_k^2 \vert \omega, \kappa).
\end{align*}
For \(n_k > 0\), we have 
\begin{align*}
	&\prod_{k = 1,\dots, K,n_k > 0} f(\mathbf{y}_k \vert \sigma_k^2, \{Z_{ik}\}, K)f(\sigma_k^2 \vert \omega, \kappa) \\
	&\propto \prod_{k = 1,\dots, K,n_k > 0}\frac{(\sigma_k^2)^{-\frac{n_k}{2}}}{(2\pi)^{\frac{n_k}{2}}} \exp \left\{ - \frac{\mathbf{y}_k^{\prime}(\sigma_{\beta}^2\mathbf{X}_k\mathbf{X}_k^{\prime} + \mathbf{I}_{n_k})^{-1}\mathbf{y}_k}{2\sigma_k^2}\right\} \frac{\kappa^{\omega}}{\Gamma(\omega)}(\sigma_k^2)^{-(\omega + 1)}\exp\left\{ - \frac{\kappa}{\sigma_k^2}\right\} \\
	&\propto \prod_{k = 1,\dots, K,n_k > 0}(\sigma_k^2)^{-(\omega + 1 + \frac{n_k}{2})} \exp \left\{ -\frac{1}{\sigma_k^2}\left(\frac{\mathbf{y}_k^{\prime}(\sigma_{\beta}^2\mathbf{X}_k\mathbf{X}_k^{\prime} + \mathbf{I}_{n_k})^{-1}\mathbf{y}_k}{2} + \kappa \right)\right\} \\
	&\propto IG\left(\omega + \frac{n_k}{2},  \frac{\mathbf{y}_k^{\prime}(\sigma_{\beta}^2\mathbf{X}_k\mathbf{X}_k^{\prime} + \mathbf{I}_{n_k})^{-1}\mathbf{y}_k}{2} + \kappa\right) \\
	&\propto IG(\omega_{P,k}, \kappa_{P,k})
\end{align*}
So, we have the the following known distribution: 
\begin{align}
	\sigma_k^2 \mid \boldsymbol{\pi}, \{Z_{ik}\},K,\mathbf{y}
	\sim
	\begin{cases}
		\text{IG}\!\left(\omega_{P,k},\,\kappa_{P,k}\right),
		& \text{if } n_k>0,\\[6pt]
		\text{IG}\!\left(\omega,\,\kappa\right),
		& \text{if } n_k=0~,
	\end{cases}
	\label{appen:eq:sig_post}
\end{align}
where \(\omega_{P,k} = \omega + \frac{n_k}{2}\) and \(\kappa_{P,k} = {\mathbf{y}_k^{\prime}(\sigma_{\beta}^2\mathbf{X}_k\mathbf{X}_k^{\prime} + \mathbf{I}_{n_k})^{-1}\mathbf{y}_k}/{2} + \kappa\).

Now we have the joint distribution \(f(\boldsymbol{\pi}, \{Z_{ik}\}, K, \mathbf{y}) \) given by 
\begin{align*}
	f(\boldsymbol{\pi}&, \{Z_{ik}\}, K, \mathbf{y}) = \frac{f(\{\sigma_k^2\}, \boldsymbol{\pi}, \{Z_{ik}\}, K, \mathbf{y})}{f(\{\sigma_k^2\}\vert \boldsymbol{\pi}, \{Z_{ik}\}, K, \mathbf{y})} \\
	&= \frac{f(\mathbf{y}\vert \{\sigma_k^2\}, \boldsymbol{\pi}, \{Z_{ik}\}, K)\prod_{i = 1}^Nf(Z_i \vert \boldsymbol{\pi}, K)f(\boldsymbol{\pi}\vert \{\alpha_k\}, K)f(K) \prod_{k = 1}^Kf(\sigma_k^2\vert\omega, \kappa)}{f(\{\sigma_k^2\}\vert\boldsymbol{\pi}, \{Z_{ik}\}, K, \mathbf{y})}.
\end{align*}
Substituting the known expressions into the above equation gives the following: 
\begin{align*}
	&f(\boldsymbol{\pi}, \{Z_{ik}\}, K, \mathbf{y}) = f(\boldsymbol{\pi}\vert \{\alpha_k\}, K)f(K) \prod_{i = 1}^Nf(Z_i\vert\boldsymbol{\pi}, K) \\
	&\times\prod_{k = 1,\dots, K, n_k > 0}(2\pi)^{-\frac{n_k}{2}} \vert \mathbf{I}_{n_k} + \sigma_{\beta}^2\mathbf{X}_k\mathbf{X}_k^{\prime} \vert^{-\frac{1}{2}}\frac{\kappa^{\omega}}{\Gamma(\omega)} \frac{\Gamma(\omega + \frac{n_k}{2})}{\left(\kappa + \frac{1}{2}\mathbf{y}_k^{\prime}(\sigma_{\beta}^2\mathbf{X}_k\mathbf{X}_k^{\prime} + \mathbf{I}_{n_k})^{-1}\mathbf{y}_k \right)^{\omega + \frac{n_k}{2}}}.
\end{align*}
Using a proportionality argument, we obtain
\begin{align*}
	f(\boldsymbol{\pi}\vert \{Z_{ik}\}, K, \mathbf{y}) &\propto f(\boldsymbol{\pi}, \{Z_{ik}\}, K, \mathbf{y}) \\
	&\propto f(\boldsymbol{\pi}\vert \{\alpha_k\}, K)\prod_{i = 1}^Nf(\mathbf{z}_i\vert\boldsymbol{\pi}, K) \\
	&\propto \frac{\Gamma(\sum_{k =1}^K\alpha_k)}{\prod_{k =1}^K\Gamma(\alpha_k)}\prod_{k = 1}^K \pi_k^{\alpha_k -1} \prod_{i = 1}^N \prod_{k = 1}^K\pi_k^{Z_{ik}} \\
	&\propto \prod_{k = 1}^K \pi_k^{\alpha_k - 1}\prod_{k = 1}^K \pi_k^{\sum_{i = 1}^N Z_{ik}} \\
	&\propto \prod_{k = 1}^K \pi_k^{\alpha_k - 1 + \sum_{i = 1}^N Z_{ik}} \\
	&\propto \frac{\Gamma(N + \sum_{k = 1}^K\alpha_k)}{\prod_{k = 1}^K \Gamma(\alpha_k + \sum_{i = 1}^NZ_{ik})}\prod_{k = 1}^K \pi_k^{\sum_{i = 1}^N Z_{ik} + \alpha_k - 1} \\
	&\propto \text{Dirichlet}(\alpha_1+n_1,\dots,\alpha_K+n_K),
\end{align*}
where \(n_k=\sum_{i=1}^N Z_{ik}\) denotes the number of observations allocated to component \(k\).

Now we have the joint distribution \(f(\mathbf{z}_1, \dots, \mathbf{z}_N, K, \mathbf{y})\) is given by 
\begin{align*}
	f(\mathbf{z}_1, \dots, \mathbf{z}_N, K, \mathbf{y}) = \frac{f(\boldsymbol{\pi}, \{Z_{ik}\}, K, \mathbf{y})}{f(\boldsymbol{\pi}\vert \{Z_{ik}\}, K, \mathbf{y})}.
\end{align*}
Substituting in the known expressions into the above equation gives the following: 
\begin{align}
	f(\{Z_{ik}\}, &K, \mathbf{y}) = f(K) \frac{\Gamma(\sum_{k = 1}^K\alpha_k)}{\Gamma(N + \sum_{k = 1}^K\alpha_k)} \frac{\prod_{k = 1}^K\Gamma(\alpha_k + \sum_{i = 1}^N Z_{ik})}{\prod_{k = 1}^K\Gamma(\alpha_k)} \notag \\
	&\times\prod_{k = 1,\dots, K, n_k > 0}(2\pi)^{-\frac{n_k}{2}} \vert \mathbf{I}_{n_k} + \sigma_{\beta}^2\mathbf{X}_k\mathbf{X}_k^{\prime} \vert^{-\frac{1}{2}}\frac{\kappa^{\omega}}{\Gamma(\omega)} \frac{\Gamma(\omega + \frac{n_k}{2})}{\left(\kappa + \frac{1}{2}\mathbf{y}_k^{\prime}(\sigma_{\beta}^2\mathbf{X}_k\mathbf{X}_k^{\prime} + \mathbf{I}_{n_k})^{-1}\mathbf{y}_k \right)^{\omega + \frac{n_k}{2}}}.
	\label{eq:appen:zXy}
\end{align}
Let \(\Delta_K\) be the set of all possible values of \(\{Z_{ik}\}\) for a given \(K\). Then 
\begin{align*}
	f(K,\mathbf{y}) &= \sum_{\{Z_{ik}\}\in \Delta_K} f(K) \frac{\Gamma(\sum_{k = 1}^K\alpha_k)}{\Gamma(N + \sum_{k = 1}^K\alpha_k)} \frac{\prod_{k = 1}^K\Gamma(\alpha_k + \sum_{i = 1}^N Z_{ik})}{\prod_{k = 1}^K\Gamma(\alpha_k)} \\
	&\times\prod_{k = 1,\dots, K, n_k > 0}(2\pi)^{-\frac{n_k}{2}} \vert \mathbf{I}_{n_k} + \sigma_{\beta}^2\mathbf{X}_k\mathbf{X}_k^{\prime} \vert^{-\frac{1}{2}}\frac{\kappa^{\omega}}{\Gamma(\omega)} \frac{\Gamma(\omega + \frac{n_k}{2})}{\left(\kappa + \frac{1}{2}\mathbf{y}_k^{\prime}(\sigma_{\beta}^2\mathbf{X}_k\mathbf{X}_k^{\prime} + \mathbf{I}_{n_k})^{-1}\mathbf{y}_k \right)^{\omega + \frac{n_k}{2}}} \\
	&= f(K) \frac{\Gamma(\sum_{k = 1}^K\alpha_k)}{\Gamma(N + \sum_{k = 1}^K\alpha_k)}\sum_{\{Z_{ik}\}\in \Delta_K} \frac{\prod_{k = 1}^K\Gamma(\alpha_k + \sum_{i = 1}^N Z_{ik})}{\prod_{k = 1}^K\Gamma(\alpha_k)} \\
	&\times\prod_{k = 1,\dots, K, n_k > 0}(2\pi)^{-\frac{n_k}{2}} \vert \mathbf{I}_{n_k} + \sigma_{\beta}^2\mathbf{X}_k\mathbf{X}_k^{\prime} \vert^{-\frac{1}{2}}\frac{\kappa^{\omega}}{\Gamma(\omega)} \frac{\Gamma(\omega + \frac{n_k}{2})}{\left(\kappa + \frac{1}{2}\mathbf{y}_k^{\prime}(\sigma_{\beta}^2\mathbf{X}_k\mathbf{X}_k^{\prime} + \mathbf{I}_{n_k})^{-1}\mathbf{y}_k \right)^{\omega + \frac{n_k}{2}}}, 
\end{align*}
and
\begin{align*}
	f(\mathbf{y}&) = \sum_{K = 1}^M\sum_{\{Z_{ik}\}\in \Delta_K} f(K) \frac{\Gamma(\sum_{k = 1}^K\alpha_k)}{\Gamma(N + \sum_{k = 1}^K\alpha_k)} \frac{\prod_{k = 1}^K\Gamma(\alpha_k + \sum_{i = 1}^N Z_{ik})}{\prod_{k = 1}^K\Gamma(\alpha_k)} \\
	&\times\prod_{k = 1,\dots, K, n_k > 0}(2\pi)^{-\frac{n_k}{2}} \vert \mathbf{I}_{n_k} + \sigma_{\beta}^2\mathbf{X}_k\mathbf{X}_k^{\prime} \vert^{-\frac{1}{2}}\frac{\kappa^{\omega}}{\Gamma(\omega)} \frac{\Gamma(\omega + \frac{n_k}{2})}{\left(\kappa + \frac{1}{2}\mathbf{y}_k^{\prime}(\sigma_{\beta}^2\mathbf{X}_k\mathbf{X}_k^{\prime} + \mathbf{I}_{n_k})^{-1}\mathbf{y}_k \right)^{\omega + \frac{n_k}{2}}} \\
	&= \sum_{K =1}^M f(K) \frac{\Gamma(\sum_{k = 1}^K\alpha_k)}{\Gamma(N + \sum_{k = 1}^K\alpha_k)}\sum_{\{Z_{ik}\}\in \Delta_K} \frac{\prod_{k = 1}^K\Gamma(\alpha_k + \sum_{i = 1}^N Z_{ik})}{\prod_{k = 1}^K\Gamma(\alpha_k)} \\
	&\times\prod_{k = 1,\dots, K, n_k > 0}(2\pi)^{-\frac{n_k}{2}} \vert \mathbf{I}_{n_k} + \sigma_{\beta}^2\mathbf{X}_k\mathbf{X}_k^{\prime} \vert^{-\frac{1}{2}}\frac{\kappa^{\omega}}{\Gamma(\omega)} \frac{\Gamma(\omega + \frac{n_k}{2})}{\left(\kappa + \frac{1}{2}\mathbf{y}_k^{\prime}(\sigma_{\beta}^2\mathbf{X}_k\mathbf{X}_k^{\prime} + \mathbf{I}_{n_k})^{-1}\mathbf{y}_k \right)^{\omega + \frac{n_k}{2}}}.
\end{align*}
Using the expressions above we have 
\begin{align}
	f(\{Z_{ik}\}\vert K, \mathbf{y}) &= \frac{f(\{Z_{ik}\}, K, \mathbf{y})}{f(K, \mathbf{y})} \notag\\
	f(K\vert\mathbf{y}) &= \frac{f(K, \mathbf{y})}{f(\mathbf{y})}.
	\label{appen:eq:z_post}
\end{align}
These expressions define discrete posterior distributions that can be evaluated directly and can be sampled from directly in low-dimensional unconstrained settings.

\section*{Appendix B: Gibbs Sampler with Subset of Data}
\begin{algorithm}[H]
	\footnotesize
	\caption{Construct $\mathcal{C}$ with Gibbs Sampling using subset of Data}
	\begin{algorithmic}[1]
		\State Select subset fraction \(p_{\text{sub}}\in\{0.1,0.25,0.5\}\) and set \(n_{\text{sub}}=p_{\text{sub}}N\), specify number of iterations \(W\), and hyperparameters \(\{\alpha_k\}, \sigma_{\beta}^2, \omega, \kappa\). Select $K = K^{[1]}=\ldots=K^{[W]}$ large.
		\State Sample indices \(S\subset\{1,\ldots,N\}\) uniformly without replacement and order \(S\), \(|S|=n_{\text{sub}}\) and define \((\mathbf y_{\text{sub}},\mathbf X_{\text{sub}})=\{(Y_i,\mathbf x_i): i\in S\}\).
		\Statex \textbf{Notation:} At iteration \(t\), let \(\mathbf{z}_{\text{sub}}^{[t]} = (z_{\text{sub},1}^{[t]}, \dots, z_{\text{sub}, n_{\text{sub}}}^{[t]})^{\prime}\) denote the vector of labels for the subset, where \(z_{\text{sub}, i}^{[t]} \in \{1, \dots, K\}\). Let \(S_k^{[t]} = \{i \in \{1,\dots,n_{\text{sub}}\} : z_{\text{sub}, i}^{[t]} = k\}\), \(\mathbf{X}_{\text{sub},k} = \{\mathbf{x}_{i}^{\prime}: i \in S_k^{[t]}\} \), and \(\mathbf{y}_{\text{sub},k} = \{Y_i: i \in S_k^{[t]}\}\).
		\State Initialize: \(z_{\text{sub},i}^{[0]}\sim \text{Categorical}(1/K,\ldots,1/K)\),
		\(\boldsymbol{\beta}_k^{[0]}=\mathbf{0}_p\), \(\sigma_k^{2[0]}=1\), \(\pi_k^{[0]}=1/K\).
		\For{$t=1,\ldots,W$}
		\State Update mixture weights:
		\(\boldsymbol{\pi}^{[t]}\mid \mathbf{z}_{\text{sub}}^{[t-1]}
		\sim \text{Dirichlet}\!\left(\alpha_1+ n_1^{[t-1]},\dots,\alpha_K+n_K^{[t-1]}\right)\),  \(n_k^{[t-1]} = \sum_{i = 1}^{n_{\text{sub}}}\mathbb{I}(z_{\text{sub},i}^{[t-1]} = k).\)
		
		\State Update component means:
		\[
		\boldsymbol{\beta}_k^{[t]} \vert \sigma_k^{2[t-1]}, \mathbf{z}_{\text{sub}}^{[t-1]}, \mathbf{y}_{\text{sub}}
		\sim MVN\!\left(
		\left(\mathbf{X}_{\text{sub},k}^{\prime}\mathbf{X}_{\text{sub},k} + \frac{1}{\sigma_{\beta}^{2}}\mathbf{I}_p\right)^{-1}\mathbf{X}^{\prime}_{\text{sub},k}\mathbf{y}_{\text{sub},k},
		\ \sigma_k^{2[t-1]}\left(\mathbf{X}_{\text{sub},k}^{\prime}\mathbf{X}_{\text{sub},k} + \frac{1}{\sigma_{\beta}^{2}}\mathbf{I}_p\right)^{-1}
		\right).
		\]
		\State Update the component variances:
		\[
		\sigma_k^{2[t]} \vert \boldsymbol{\beta}_k^{[t]}, \sigma_{\beta}^{2}, \mathbf{y}_{\text{sub}}, \mathbf{z}_{\text{sub}}^{[t-1]}
		\sim IG\!\left(
		\omega + \frac{n_k^{[t-1]} + p}{2},\
		\kappa + \frac{1}{2}(\mathbf{y}_{\text{sub},k}- \mathbf{X}_{\text{sub},k}\boldsymbol{\beta}_k^{[t]})^{\prime}(\mathbf{y}_{\text{sub},k}- \mathbf{X}_{\text{sub},k}\boldsymbol{\beta}_k^{[t]})
		+ \frac{1}{2\sigma_{\beta}^2}\boldsymbol{\beta}^{[t]\prime}_k\boldsymbol{\beta}_k^{[t]}
		\right),
		\]
		\State Update subset labels:
		\(z_{\text{sub},i}^{[t]} \sim \text{Categorical}(p_{i1}, \dots, p_{iK}),\quad i=1,\dots,n_{\text{sub}},\) with
		\[
		p_{ik}=\frac{\pi_{k}^{[t]}\phi(Y_{\text{sub},i}\vert \mathbf{x}_{\text{sub},i}^{\prime}\boldsymbol{\beta}_k^{[t]}, \sigma_k^{2[t]})}{
			\sum_{j =1}^K \pi_j^{[t]}\phi(Y_{\text{sub},i}\vert \mathbf{x}_{\text{sub},i}^{\prime}\boldsymbol{\beta}_j^{[t]}, \sigma_j^{2[t]})},
		\]
		and for numerical stability let \(l_{ik}=\log\pi_k^{[t]}+\log\phi(Y_{\text{sub},i}\vert\mathbf{x}_{\text{sub},i}^\prime\boldsymbol{\beta}_k^{[t]},\sigma_k^{2[t]})\),
		\(m_i=\max_{1\le h\le K}l_{ih}\), and
		\[
		p_{ik} = \frac{\exp(l_{ik}-m_i)}{\sum_{h=1}^K \exp(l_{ih}-m_i)}.
		\]
		\State Store \((\boldsymbol{\pi}^{[t]}, \{\boldsymbol{\beta}^{[t]}_k\}_{k = 1}^K, \{\sigma_k^{2[t]}\}_{k = 1}^K)\)
		\EndFor
		\Statex \textbf{Construct candidate partitions from Gibbs output (full-data prediction):}
		\State Let \(W\) be the number of stored draws and denote \(\{(\boldsymbol{\pi}^{[w]},\boldsymbol{\beta}^{[w]},\boldsymbol{\sigma}^{2[w]})\}_{w=1}^W\).
		\For{\(w=1,\ldots,W\)}
		\State Compute \(l_{ik}^{[w]}=\log \pi_k^{[w]}+\log\phi(Y_i\vert \mathbf{x}_i^\prime\boldsymbol{\beta}_k^{[w]},\sigma_k^{2[w]})\) for \(i=1,\ldots,N\), \(k=1,\ldots,K\).
		\State Let \(m_i^{[w]}=\max_{1\le h\le K} l_{ih}^{[w]}\) and
		\(
		p_{ik}^{[w]}={\exp(l_{ik}^{[w]}-m_i^{[w]})}/{\sum_{h=1}^K \exp(l_{ih}^{[w]}-m_i^{[w]})}.
		\)
		\State Sample full data labels \(\widetilde{z}_i^{[w]}\sim \text{Categorical}(p_{i1}^{[w]},\dots,p_{iK}^{[w]})\) for \(i=1,\ldots,N\).
		\State Form \(\widetilde{\mathbf{z}}^{[w]}=(\widetilde{z}_1^{[w]},\ldots,\widetilde {z}_N^{[w]})^{\prime}\), relabel \(\widetilde{\mathbf{z}}^{[w]}\)
		\State If unique, convert into an \(N \times K\) allocation matrix \(\widetilde{\mathbf{Z}}^{[w]}\) with \(\widetilde{Z}_{ik}^{[w]}= \mathbb{I} \{\widetilde{z}_i^{[w]}=k\}\) and store \(\widetilde{\mathbf{Z}}^{[w]}\) in \(\mathcal{C}\) 
		\EndFor
		\State \Return \(\mathcal{C}\)
	\end{algorithmic}
\end{algorithm}

\section*{Appendix C: Machine Learning Algorithms}
In this appendix, we describe the machine learning clustering algorithms used to construct the candidate set \(\mathcal{C}(\mathbf{y})\) in the constrained model. Specifically, we consider \(k\)-means, \(k\)-medoids, spectral clustering, hierarchical clustering, and the EM algorithm. For each method, we fit the algorithm over a grid of values of \(K\) and record the resulting partitions. Collecting these partitions across methods yields the candidate set \(\mathcal{C}(\mathbf{y})\) used in the main text. We briefly summarize each algorithm below before giving the detailed pseudocode.

The \(k\)-means algorithm clusters the observations into \(k\) groups by repeatedly assigning each observation to the nearest cluster center and then updating centers to minimize within cluster variation, iterating until convergence or a maximum number of iterations is reached \citep{hartigan1979algorithm, ahmed2020k}. The \(k\)-medoids algorithm is similar, but it represents each cluster with an actual data point (a medoid) and iteratively swaps medoids to reduce the total within cluster dissimilarity  \citep{park2009simple}. Spectral clustering starts by measuring how similar each pair of observations is using a similarity graph, then uses an eigenvector based transformation to map the data into a space where the clusters are more easily separated, and finally applies \(k\)-means to that transformed representation \citep{ng2001spectral}. Hierarchical clustering builds clusters step by step repeatedly merging the two most similar groups of observations to form a tree (dendrogram), and a partition with \(k\) clusters is obtained by cutting the tree at the level that yields \(k\) groups \citep{ward1963hierarchical, murtagh2014ward}. The EM algorithm fits a mixture model by iterating between computing component membership probabilities for each component (E-step) and updating the component parameters to maximize the expected log-likelihood of the complete data (M-step), repeating until the fit stabilizes \citep{dempster1977maximum}. 

\begin{algorithm}[H]
	\caption{\(k\)-means candidate partitions for \(\mathcal{C}\)}
	\begin{algorithmic}[1]
		\Require Responses \(\{Y_i\}_{i=1}^N\), covariates \(\{\mathbf x_i\}_{i=1}^N\) (optional), maximum clusters \(K\), random starts \(S\), maximum iterations \(T_{\max}\)
		\State Form feature vectors \(\mathbf v_i\) for \(i=1,\ldots,N\):
		\State \hspace{0.5cm} Intercept-only: \(\mathbf v_i=Y_i \in \mathbb{R}\)
		\State \hspace{0.5cm} Regression mixture: \(\mathbf v_i=(Y_i,\mathbf x_i^{\prime})^{\prime}\in\mathbb R^{p+1}\)
		\For{\(w=2,\ldots,K\)}
		\State Initialize best within cluster sum of squares \(\varepsilon^{*}\leftarrow +\infty\)
		\For{\(s=1,\ldots,S\)}
		\State Initialize centers \(\{\mathbf{m}_k^{(0)}\}_{k=1}^w\) by sampling \(w\) feature vectors from \(\{\mathbf v_i\}_{i=1}^N\)
		\For{\(t=1,\ldots,T_{\max}\)}
		\Statex \textbf{Assignment step}
		\For{\(i=1,\ldots,N\)}
		\State \(z_i^{(t)} \leftarrow \arg\min_{k\in\{1,\ldots,w\}} \|\mathbf v_i-\mathbf m_k^{(t-1)}\|_2^2\)
		\EndFor
		\State Define \(\mathcal I_k^{(t)}=\{i: z_i^{(t)}=k\}\) and \(n_k^{(t)}=|\mathcal I_k^{(t)}|\), for \(k=1,\ldots,w\)
		
		\Statex \textbf{Update step}
		\For{\(k=1,\ldots,w\)}
		\If{\(n_k^{(t)}=0\)}
		\State Reinitialize \(\mathbf{m}_k^{(t)}\) (e.g., set to a random \(\mathbf v_i\))
		\Else
		\State \(\mathbf{m}_k^{(t)} \leftarrow \frac{1}{n_k^{(t)}}\sum_{i\in\mathcal{I}_k^{(t)}} \mathbf{v}_i\)
		\EndIf
		\EndFor
		
		\State Compute \(\varepsilon^{(t)} \leftarrow \sum_{k=1}^w\sum_{i\in \mathcal I_k^{(t)}} \|\mathbf v_i-\mathbf m_k^{(t)}\|_2^2\)
		\If{\(\varepsilon^{(t)} < \varepsilon^{*}\)}
		\State Store \(\mathbf{z}^{*}\leftarrow (z_1^{(t)},\ldots,z_N^{(t)})^{\prime}\) and set \(\varepsilon^{*}\leftarrow \varepsilon^{(t)}\)
		\EndIf
		\If{\(t>1\) \textbf{and} \(\mathbf {z}^{(t)}=\mathbf{z}^{(t-1)}\)}
		\State \textbf{break}
		\EndIf
		\EndFor
		\EndFor
		
		\State Set \(\mathbf{z}^{[w]}\leftarrow \mathbf{z}^{*}\) 
		\State Convert into a \(N \times w\) allocation matrix \(\mathbf{Z}^{[w]}\) where \(Z^{[w]}_{ik}=\mathbb I\{z_i^{[w]}=k\}\)
		\EndFor
		
		\State \Return  \(\mathcal{C} = \{(\mathbf{Z}^{[1]},K), \dots, (\mathbf{Z}^{[W]},K)\}\)
	\end{algorithmic}
\end{algorithm}

\begin{algorithm}[H]
	\caption{Spectral Clustering candidate partitions for \(\mathcal{C}\)}
	\small
	\begin{algorithmic}[1]
		\Require Responses \(\{Y_i\}_{i=1}^N\), covariates \(\{\mathbf x_i\}_{i=1}^N\), number of clusters \(K\), number of scale candidates \(G\) (e.g., \(G=10\)), quantiles $q_{\min},q_{\max}$ (e.g., \(0.10,0.90\))
		\State Form feature vectors \(\mathbf v_i\) for \(i=1,\ldots,N\):
		\State \hspace{0.5cm} Regression mixture: \(\mathbf v_i=(Y_i,\mathbf x_i^{\prime})^{\prime}\in\mathbb R^{p+1}\)
		\State Stack rows into \(\mathbf{V}=[\mathbf{v}_1^\prime;\dots;\mathbf{v}_N^\prime]\in\mathbb{R}^{N\times (p+1)}\). Standardize columns of \(\mathbf{V}\) so \(\mathbf{y}\) and \(\mathbf{X}\) are comparable in scale.
		\State Compute squared distance matrix \(\mathbf D\in\mathbb R^{N\times N}\) with entries \(D_{ij}=\|\mathbf v_i-\mathbf v_j\|_2^2\).
		\State Construct a scale grid \(\{h^{(1)},\ldots,h^{(G)}\}\):
		\State \hspace{0.5cm} Let \(\boldsymbol\phi=\{\sqrt{D_{ij}}:1\le i<j\le N\}\). Set \(h_{\min}=\mathrm{Quantile}(\boldsymbol\phi,q_{\min})\), \(h_{\max}=\mathrm{Quantile}(\boldsymbol\phi,q_{\max})\). Let \(\{h^{(g)}\}\) be \(G\) log-spaced values from \(h_{\min}\) to \(h_{\max}\).
		
		\For{\(w=2,\ldots,K\)} 
		\State Initialize best loss \(\varepsilon^{*}\leftarrow +\infty\)
		\For{\(g=1,\ldots,G\)} 
		\State Set \(h\leftarrow h^{(g)}\)
		\State Compute affinity matrix \(\mathbf A\) with \(A_{ij}=\exp\{-D_{ij}/(2h^2)\}\) and set \(A_{ii}=0\).
		\State Compute matrix \(\mathbf B=\mathrm{diag}(\mathbf A\mathbf 1)\) and normalized affinity \(\mathbf L=\mathbf B^{-1/2}\mathbf A\mathbf B^{-1/2}\).
		\State Compute the eigenvectors of \(\mathbf L\) with the \(w\) largest eigenvalues and form \(\mathbf U\in\mathbb R^{N\times w}\).
		\State Row-normalize \(\mathbf U\) to obtain \(\mathbf U^*\), i.e., \(\mathbf U^*_{i\cdot}\leftarrow \mathbf U_{i\cdot}/\|\mathbf U_{i\cdot}\|_2\) for \(i=1,\ldots,N\).
		\State Run \(w\)-means on the rows of \(\mathbf U^*\) to obtain labels \(z_1^{(g)},\ldots,z_N^{(g)}\).
		\State Let \(\mathcal I_k^{(g)}=\{i:z_i^{(g)}=k\}\) and define centroids
		\(\mathbf m_k=\frac{1}{|\mathcal I_k^{(g)}|}\sum_{i\in\mathcal I_k^{(g)}} \mathbf U^*_{i\cdot}\) for \(k=1,\ldots,w\).
		\State Compute loss
		\( \varepsilon^{(g)}\leftarrow \sum_{k=1}^{w}\sum_{i\in\mathcal I_k^{(g)}} \|\mathbf U^*_{i\cdot}-\mathbf m_k\|_2^2.\)
		\If{\(\varepsilon^{(g)}<\varepsilon^{*}\)}
		\State Store \(\mathbf z^{*}\leftarrow (z_1^{(g)},\ldots,z_N^{(g)})^\top\) and set \(\varepsilon^{*}\leftarrow \varepsilon^{(g)}\).
		\EndIf
		\EndFor
		\State Set \(\mathbf{z}^{[w]}\leftarrow \mathbf{z}^{*}\) 
		\State Convert into a \(N \times w\) allocation matrix \(\mathbf{Z}^{[w]}\) where \(Z^{[w]}_{ik}=\mathbb I\{z_i^{[w]}=k\}\)
		\EndFor
		
		\State \Return  \(\mathcal{C} = \{(\mathbf{Z}^{[1]},K), \dots, (\mathbf{Z}^{[W]},K)\}\)
	\end{algorithmic}
\end{algorithm}

\begin{algorithm}[H]
	\caption{Hierarchical clustering allocation for \(\mathcal{C}\)}
	\begin{algorithmic}[1]
		\Require Responses \(\{Y_i\}_{i=1}^N\), covariates \(\{\mathbf x_i\}_{i=1}^N\), number of clusters \(K\)
		\State Form feature vectors \(\mathbf v_i\) for \(i=1,\ldots,N\):
		\State \hspace{0.5cm} Intercept-only: \(\mathbf v_i=Y_i \in \mathbb{R}\)
		\State \hspace{0.5cm} Regression mixture: \(\mathbf v_i=(Y_i,\mathbf x_i^{\prime})^{\prime}\in\mathbb R^{p+1}\)
		\State \textbf{Step A: Compute initial dissimilarities}
		\State Compute Euclidean Distances:
		\(d(\{i\},\{j\}) \gets \|\mathbf{v}_i-\mathbf{v}_j\|_2 \quad (1\le i<j\le N)\)
		and store them in a symmetric ``dissimilarity matrix'' \(\mathbf{D}\).
		\State \textbf{Step B: Build dendrogram} 
		\State Initialize cluster set \(\mathcal{S}\gets\big\{\{1\},\{2\},\dots,\{N\}\big\}\) and weights \(w_{\{i\}}\gets 1\) for all singletons.
		\State Initialize  \(\mathcal{H}\) which is used to store the merged clusters and heights \(h\). 
		\While{\(|\mathcal{S}|>1\)}
		\State Choose the pair of clusters \((A^*,B^*)\) with smallest Ward dissimilarity \(d(A, B)\):
		\[
		(A^*,B^*)= \underset{A\neq B \in \mathcal{S}}{\text{argmin}} \ d(A,B).
		\]
		\State Create merged cluster \(C^* \gets A^*\cup B^*\) and set weight \(w_{C^*}\gets w_{A^*}+w_{B^*}\)
		\State Record merge in \(\mathcal{H}\): store \((A^*,B^*)\) and height \(h\gets d(A^*,B^*)\)
		\State Update dissimilarities from the new cluster \(C^*\) to every other cluster \(Q\in \mathcal{S}\setminus\{A^*,B^*\}\) using the Ward update:
		\footnotesize{
			\[
			d(C^*,Q)
			\;\gets\;
			\left[
			\frac{w_{A^*}+w_Q}{w_{A^*}+w_{B^*}+w_Q}\,\delta(A^*,Q)^2
			+
			\frac{w_{B^*}+w_Q}{w_{A^*}+w_{B^*}+w_Q}\,\delta(B^*,Q)^2
			-
			\frac{w_Q}{w_{A^*}+w_{B^*}+w_Q}\,\delta(A^*,B^*)^2
			\right]^{1/2}.
			\]}
		\State Remove \(A^*\) and \(B^*\) from \(\mathcal{S}\); add \(C^*\) to \(\mathcal{S}\) 
		\EndWhile
		
		\State \textbf{Step C: Cut the dendrogram for each \(w\)}
		\For{each \(w \in \{2,\dots,K\}\)}
		\State Re-initialize clusters \(\mathcal{S}_{w}\gets\big\{\{1\},\dots,\{N\}\big\}\)
		\State \(\mathcal{H}\) from Step 2 lists the merges in chronological order \((A_1,B_1),\dots,(A_{N-1},B_{N-1})\)
		\For{\(t=1\) to \(N-k\)}
		\State Apply merge \((A_t,B_t)\): replace \(A_t\) and \(B_t\) in \(\mathcal{S}_{w}\) by \(A_t\cup B_t\)
		\EndFor
		\State Now \(|\mathcal{S}_{w}|=w\) clusters remain
		
		\State Enumerate \(\mathcal{S}_{w}=\{C_{1},\dots,C_{w}\}\) and assign labels:
		\For{\(j=1\) to \(w\)}
		\For{each \(i\in C_j\)}
		\State \(z^{(w)}_i \gets j\)
		\EndFor
		\EndFor
		
		\State Store \(\mathbf{z}^{(w)}=(z^{(w)}_1,\dots,z^{(w)}_N)\)
		
		\State Convert \(\mathbf{z}^{(w)}\) into a \(N \times w\) allocation matrix \(\mathbf{Z}^{[w]}\) where \(Z^{[w]}_{iw}=\mathbb I\{z_i^{[w]}=w\}\)
		\EndFor
		
		\State \Return  \(\mathcal{C} = \{(\mathbf{Z}^{[1]},K), \dots, (\mathbf{Z}^{[W]},K)\}\)
	\end{algorithmic}
\end{algorithm}

\begin{algorithm}[H]
	\footnotesize
	\caption{$K$-medoids candidate allocation for $\mathcal{C}$}
	\begin{algorithmic}[1]
		\Require Responses \(\{Y_i\}_{i=1}^N\), covariates \(\{\mathbf x_i\}_{i=1}^N\) (optional), maximum clusters \(K\), random starts \(S\), maximum iterations \(T_{\max}\)
		\State Form feature vectors \(\mathbf v_i\) for \(i=1,\ldots,N\):
		\State \hspace{0.5cm} Intercept-only: \(\mathbf v_i=Y_i \in \mathbb{R}\)
		\State \hspace{0.5cm} Regression mixture: \(\mathbf v_i=(Y_i,\mathbf x_i^{\prime})^{\prime}\in\mathbb R^{p+1}\)
		\State Compute dissimilarities \(d_{ij}=\|\mathbf{v}_i-\mathbf{v}_j\|_2\) for \(1\le i<j\le N\).
		
		\For{\(w=2,\ldots,K\)} 
		\State Set best loss \(\varepsilon \leftarrow +\infty\)
		\For{\(s=1,\ldots,S\)} 
		\State Initialize medoid indices \(\mathcal M^{(0)}=\{m_1^{(0)},\ldots,m_w^{(0)}\}\) by sampling \(w\) distinct indices from \(\{1,\ldots,N\}\)
		\For{\(t=1,\ldots,T_{\max}\)}
		\Statex \textbf{Assignment step}
		\For{\(i=1,\ldots,N\)}
		\State \(z_i^{(t)} \leftarrow \arg\min_{k\in\{1,\ldots,w\}} d_{i\,m_k^{(t-1)}}\)
		\EndFor
		\State Define \(\mathcal I_k^{(t)}=\{i: z_i^{(t)}=k\}\) and \(n_k^{(t)}=|\mathcal I_k^{(t)}|\), for \(k=1,\ldots,w\)
		\Statex \textbf{Medoid update step}
		\For{\(k=1,\ldots,w\)}
		\If{\(n_k^{(t)}=0\)}
		\State Reinitialize \(m_k^{(t)}\) (e.g., set to a random index in \(\{1,\ldots,N\}\))
		\Else
		\State \(m_k^{(t)} \leftarrow \arg\min_{j\in \mathcal I_k^{(t)}} \sum_{i\in \mathcal I_k^{(t)}} d_{ij}\)
		\EndIf
		\EndFor
		\State Set \(\mathcal M^{(t)}=\{m_1^{(t)},\ldots,m_w^{(t)}\}\)
		\State Compute current loss \(\varepsilon^{(t)} \leftarrow \sum_{k=1}^w \sum_{i\in\mathcal I_k^{(t)}} d_{i\,m_k^{(t)}}\)
		\If{\(t>1\) and \(\Big( \mathcal M^{(t)}=\mathcal M^{(t-1)} \ \textbf{or}\  \mathbf z^{(t)}=\mathbf z^{(t-1)} \Big)\)}
		\State \textbf{break}
		\EndIf
		\EndFor
		
		\If{\(\varepsilon^{(t)} < \varepsilon^{*}\)}
		\State Store \(\mathbf z^{*}\leftarrow (z_1^{(t)},\ldots,z_N^{(t)})^{\prime}\) and set \(\varepsilon^{*}\leftarrow \varepsilon^{(t)}\)
		\EndIf
		\EndFor
		
		\State Set $\mathbf z^{[w]}\leftarrow \mathbf z^{*}$ 
		
		\State Convert \(\mathbf z^{[w]}\) into an \(N \times K\) allocation matrix \({\mathbf{Z}}^{[w]}\) with \({Z}_{ik}^{[w]}= \mathbb{I} \{{z}_i^{[w]}=k\}\) 
		\EndFor
		\State \Return  \(\mathcal{C} = \{(\mathbf{Z}^{[1]},K), \dots, (\mathbf{Z}^{[W]},K)\}\)
	\end{algorithmic}
\end{algorithm}

\begin{algorithm}[H]
	\caption{EM candidate partitions for \(\mathcal{C}\)}
	\begin{algorithmic}[1]
		\Require Responses \(\{Y_i\}_{i=1}^N\), covariates \(\{\mathbf{x}_i\}_{i=1}^N\) (optional), maximum clusters \(K\), tolerance \(\tau\), maximum iterations \(T_{\max}\)
		\State \hspace{0.5cm} Intercept-only: \(\mathbf{v}_i=Y_i \in \mathbb{R}\)
		\State \hspace{0.5cm} Regression mixture: \(\mathbf{v}_i=(Y_i,\mathbf x_i^{\prime})^{\prime}\in\mathbb R^{p+1}\)
		\State Let \(\phi_d(\mathbf v\mid\boldsymbol\mu,\boldsymbol\Sigma)\) denote the \(d\)-dimensional multivariate normal density.
		
		\For{\(w=2,\ldots,K\)}
		\State Initialize parameters \(\{\pi_k^{(0)},\boldsymbol{\mu}_k^{(0)},\boldsymbol{\Sigma}_k^{(0)}\}_{k=1}^w\).
		\For{\(t=1,\ldots,T_{\max}\)}
		\Statex \textbf{E-step:}
		\[
		r_{ik}^{(t)} \leftarrow
		\frac{\pi_k^{(t-1)}\,\phi_d(\mathbf v_i\mid\boldsymbol\mu_k^{(t-1)},\boldsymbol\Sigma_k^{(t-1)})}
		{\sum_{\ell=1}^w \pi_\ell^{(t-1)}\,\phi_d(\mathbf v_i\mid\boldsymbol\mu_\ell^{(t-1)},\boldsymbol\Sigma_\ell^{(t-1)})},
		\quad i=1,\ldots,N,\ k=1,\ldots,w.
		\]
		
		\Statex \textbf{M-step:}
		\[
		n_k^{(t)}\leftarrow\sum_{i=1}^N r_{ik}^{(t)},\qquad
		\pi_k^{(t)}\leftarrow n_k^{(t)}/N,\qquad
		\boldsymbol\mu_k^{(t)}\leftarrow \frac{1}{n_k^{(t)}}\sum_{i=1}^N r_{ik}^{(t)}\mathbf v_i.
		\]
		\State Update covariances
		\[
		\boldsymbol\Sigma_k^{(t)}\leftarrow \frac{1}{n_k^{(t)}}\sum_{i=1}^N r_{ik}^{(t)}(\mathbf v_i-\boldsymbol\mu_k^{(t)})(\mathbf v_i-\boldsymbol\mu_k^{(t)})^{\prime}.
		\]
		\State Update log-likelihood
		\[
		\ell^{(t)}\leftarrow \sum_{i=1}^N \log\!\left(\sum_{k=1}^w \pi_k^{(t)}\,\phi_d(\mathbf v_i\mid\boldsymbol\mu_k^{(t)},\boldsymbol\Sigma_k^{(t)})\right).
		\]
		\If{$t>1$ \textbf{and} $|\ell^{(t)}-\ell^{(t-1)}|<\tau$}
		\State \textbf{break}
		\EndIf
		\EndFor
		
		\State Convert probabilities to a hard partition:
		\[
		z_i^{(w)} \leftarrow \underset{k\in\{1,\ldots,w\}}{\text{argmax}} r_{ik}^{(t)},\quad i=1,\ldots,N.
		\]
		\State Form an \(N \times w\) allocation matrix \(\mathbf{Z}^{(w)}\) where
		\[
		Z^{(w)}_{ik}=\mathbb I\{z_i^{(w)}=k\},\quad i=1,\ldots,N,\ k=1,\ldots,w.
		\]
		\EndFor
		
		\State \Return  \(\mathcal{C} = \{\mathbf{Z}^{(2)}, \dots, \mathbf{Z}^{(K)}\}\)
	\end{algorithmic}
\end{algorithm}

\section*{Appendix D: Simulated Data}

In this appendix, we illustrate the different levels of cluster separation considered in the simulation studies. Figure~\ref{fig:gmm:data} displays the separation settings for both the intercept-only and regression scenarios.

\begin{figure}[H]
	\centering
	
	\begin{subfigure}{0.7\textwidth}
		\centering
		\includegraphics[width=\linewidth]{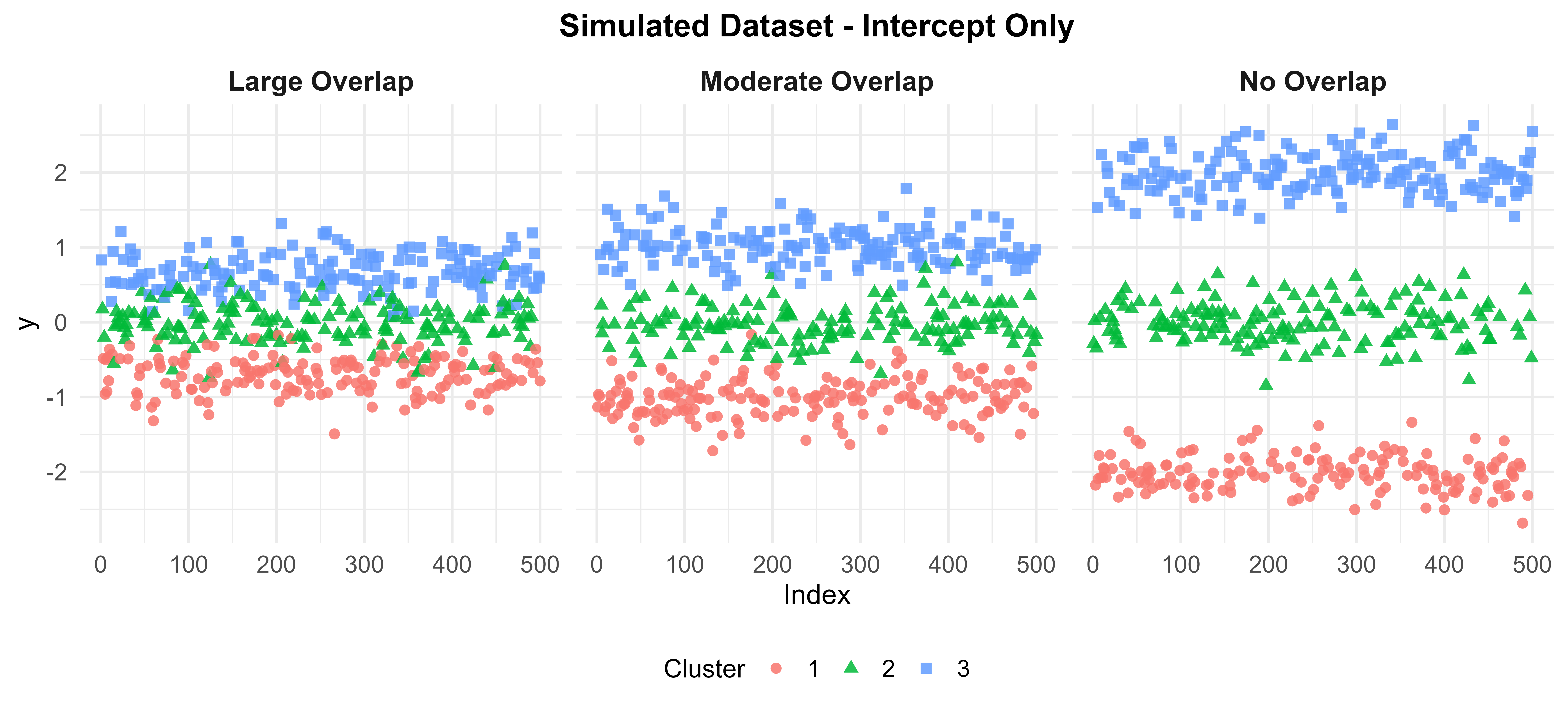}
	\end{subfigure}
	
	\begin{subfigure}{0.7\textwidth}
		\centering
		\includegraphics[width=\linewidth]{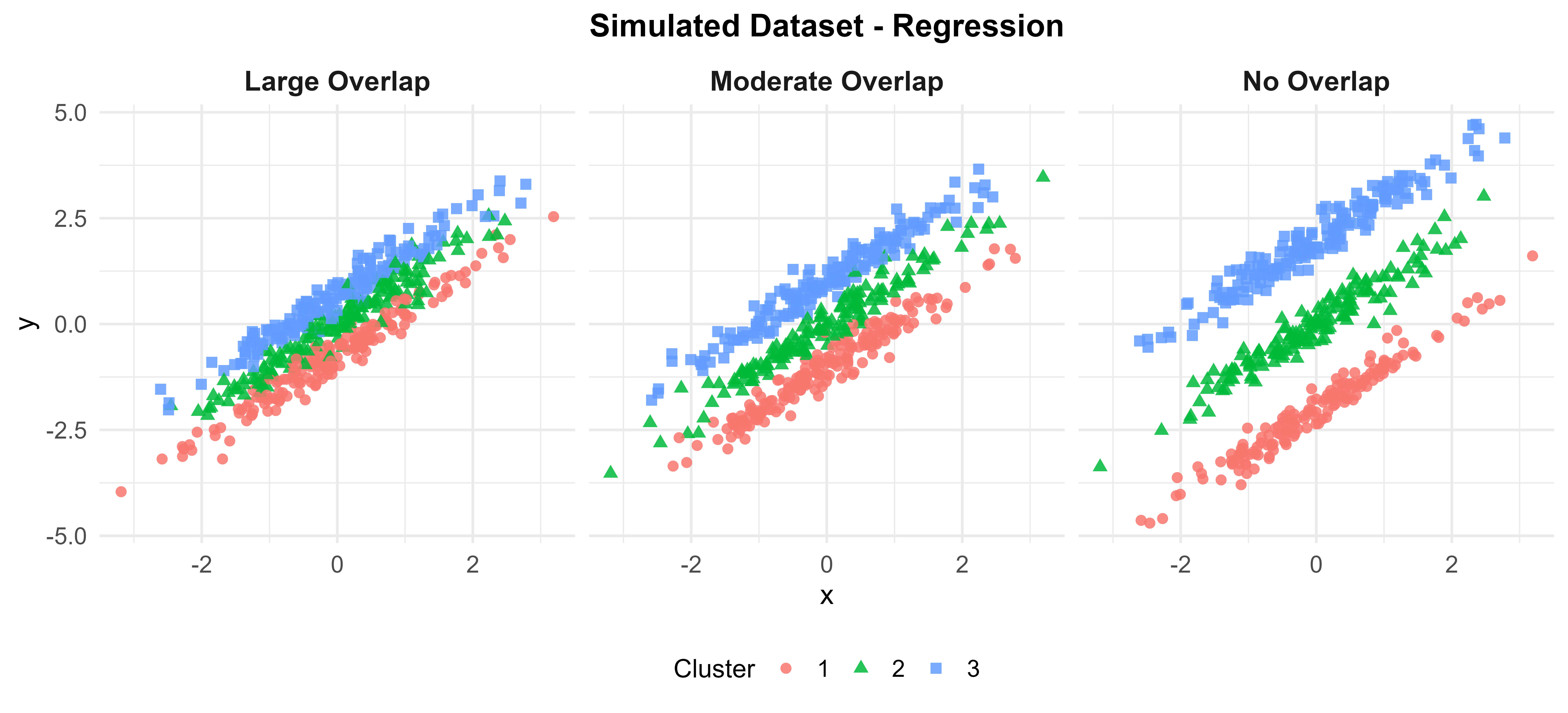}
	\end{subfigure}
	
	\caption{Illustration of the cluster separation settings considered in the simulation studies. The top panel corresponds to the intercept-only scenario, and the bottom panel corresponds to the regression scenario. The data shown were simulated with \(N=500\), although the same separation settings were used for all values of \(N\). From left to right, the panels show large overlap, moderate overlap, and no overlap among \(K=3\) clusters.}
	\label{fig:gmm:data}
\end{figure}

\section*{Appendix E: Detailed Simulation Studies}
\subsection*{Appendix E.1: Intercept Only Simulations}
To evaluate the different approaches in the intercept only scenario, we compare the mean and standard deviation of the adjusted Rand index (ARI), the effective sample size (ESS) for the first and second moments, the Kolmogorov--Smirnov (KS) statistic, the \(L_2\) norm, and central processing unit (CPU) time measured in seconds. For each posterior replicate, we compute the ARI \citep{hubert1985comparing} using the \texttt{ARI()} function in the \texttt{salso} R package \citep{dahl2022search}, and summarize it across replicates by its mean and standard deviation. 

For the KS statistic and \(L_2\) norm, assume we have \(j = 1,\dots,n_{\text{samp}}\) posterior draws, \(k = 1,\dots,K\) mixture components, and a uniform grid \(\{t_g\}_{g=1}^{G}\). For draw \(j\), we have replicates of the component specific parameters \(\pi_{k}^{[j]}, \mu_{k}^{[j]}, \sigma_k^{[j]}\). The mixture cumulative distribution function (CDF) for draw \(j\) is \(F^{[j]}(t_g) = \sum_{k = 1}^K \pi^{[j]}_k\Phi\!\left(\frac{t_g - \mu_{k}^{[j]}}{\sigma_{k}^{[j]}} \right)\), \(g=1,\dots,G\), and the posterior mean CDF is \(\widehat{F}(t_g) = \frac{1}{n_{\text{samp}}}\sum_{j=1}^{n_{\text{samp}}}F^{[j]}(t_g)\). We compute the empirical CDF of the data as \( F_{\text{emp}}(t_g) = \frac{1}{N}\sum_{i=1}^N \mathbb{I}(Y_i \le t_g)\), so the KS distance between the fitted and empirical CDFs is \(KS = \max_g\left|\widehat{F}(t_g) - F_{\text{emp}}(t_g)\right|\). For the \(L_2\) norm, the mixture PDF is \( f^{[j]}(t_g) = \sum_{k=1}^K \pi_k^{[j]}\,\phi\!\left(t_g;\mu_k^{[j]},\sigma_k^{2[j]}\right)\), \(g=1,\dots,G\), where \(\phi(\cdot;\mu,\sigma^2)\) is the Normal density. The posterior mean PDF on the grid is \(\widehat f(t_g)=\frac{1}{n_{\text{samp}}}\sum_{j=1}^{n_{\text{samp}}} f^{[j]}(t_g)\), and the true mixture density is \(f_{\text{true}}(t_g)=\sum_{k=1}^K \pi_{k,\text{true}}\,\phi\!\left(t_g;\mu_{k,\text{true}},\sigma_{k,\text{true}}^{2}\right)\), where the true parameter values are defined in Section \ref{gmm:ss:sim_setup}. The continuous \(L_2\) distance between the fitted and true densities is evaluated on the same uniform grid, with spacing \(\Delta t = t_g - t_{g-1}\), as \( \| \widehat f - f_{\text{true}} \|_2
= \left\{\int \left[\widehat f(t) - f_{\text{true}}(t)\right]^2 dt \right\}^{1/2}
\approx
\left\{\sum_{g=1}^G \left[\widehat f(t_g) - f_{\text{true}}(t_g)\right]^2 \Delta t \right\}^{1/2}.\)

\begin{figure}[H]
	\centering
	\captionsetup{font={stretch=1}}
	\includegraphics[width=0.99\textwidth]{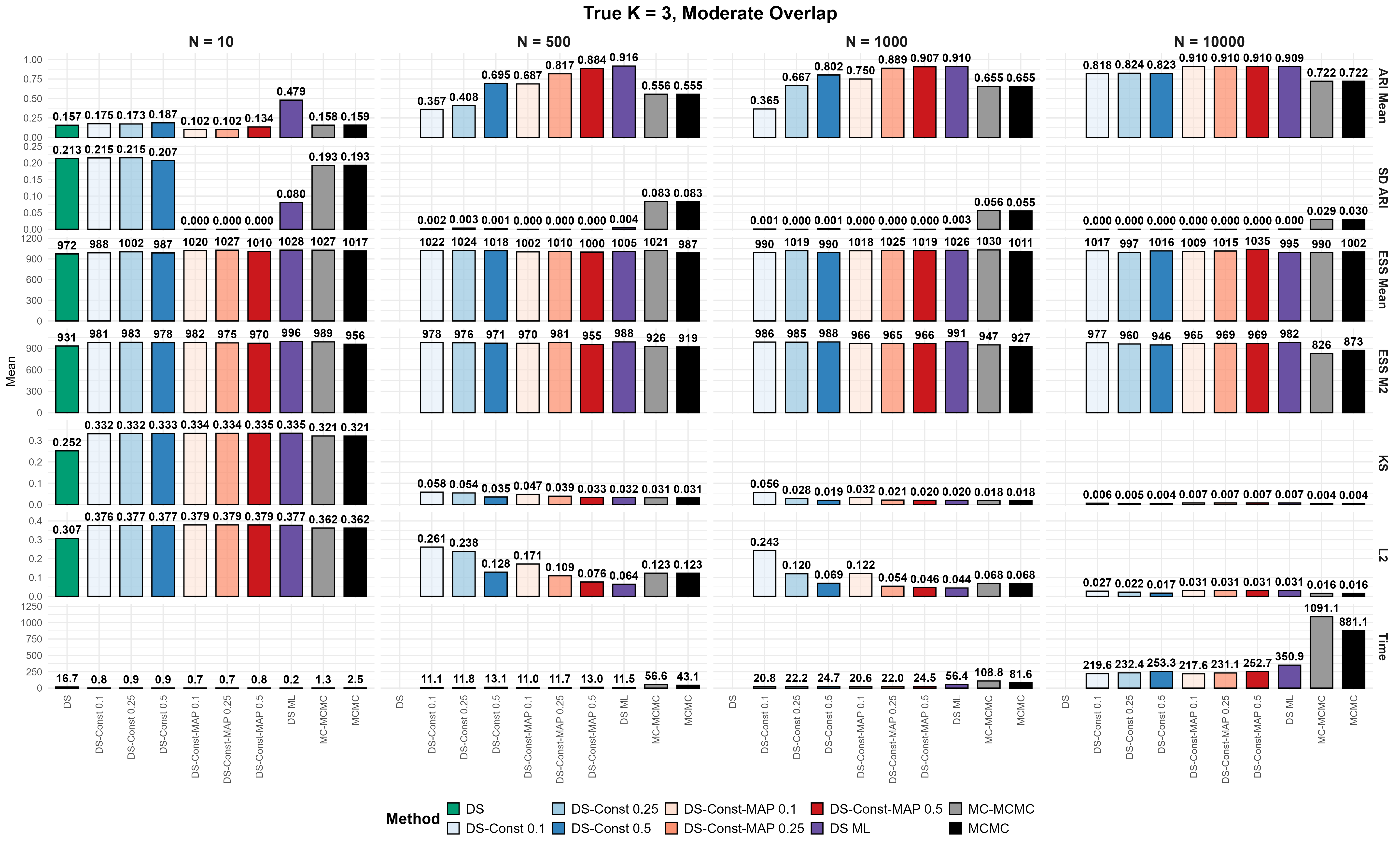}
	\caption{Mean (over 50 replicates) of ARI mean, ARI standard deviation, ESS of the first moment, ESS of the second moment, KS statistic, \(L_2\) norm, and CPU time (seconds) for each approach.}
	\label{fig:gmm:moderate_y}
\end{figure}

Figure \ref{fig:gmm:moderate_y} summarizes the evaluation metrics, averaged over \(50\) independent replicates, for the moderate overlap setting. For the \(N =10\) scenario, we use the direct sampler that was derived in Section \ref{gmm:ss:comp}. Because \(N\) is small, it is computationally feasible to enumerate the label space and sample directly from the posterior. For the DS-constrained approaches using the Gibbs sampler with the subset of data, increasing the subsample size generally improves performance metrics, at the cost of slightly increased CPU time. The DS-ML approach achieves higher ARI than the traditional MCMC while providing comparable density estimation performance. Overall, both the DS-constrained approaches (with the Gibbs subsample step), and the DS-ML approach offer clear computational advantages relative to MCMC.  

\begin{figure}[H]
	\centering
	\captionsetup{font={stretch=1}}
	\includegraphics[width=0.99\textwidth]{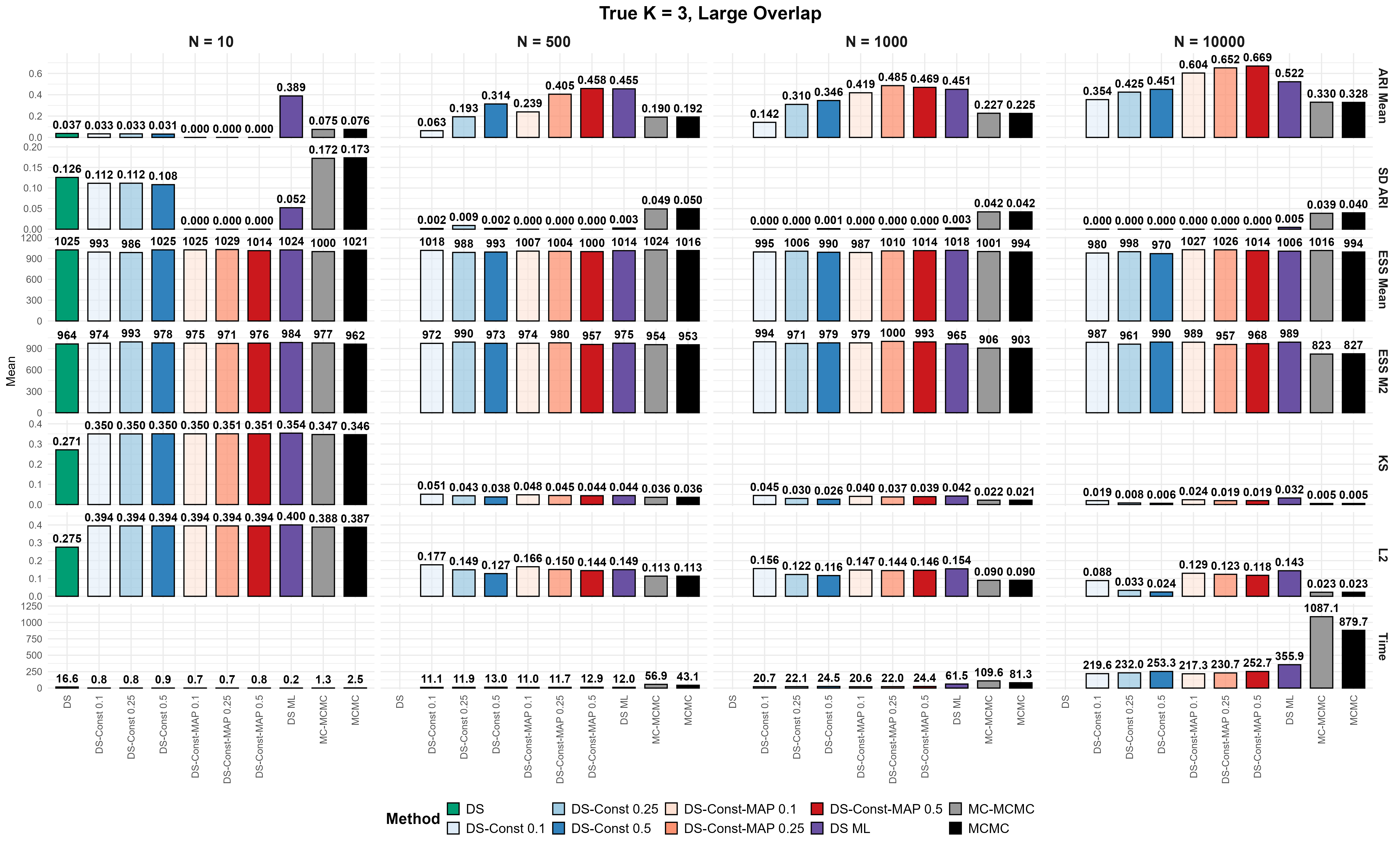}
	\caption{Mean (over 50 replicates) of ARI mean, ARI standard deviation, ESS of the first moment, ESS of the second moment, KS statistic, \(L_2\) norm, and CPU time (seconds) for each approach.}
	\label{fig:gmm:large_y}
\end{figure}

Figure \ref{fig:gmm:large_y} summarizes the evaluation metrics for the large overlap setting. Overall, these results lead to similar conclusions as the moderate overlap setting. The constrained approaches outperform MCMC and MC-MCMC in terms of ARI. Across all methods, ARI values are lower than in the moderate overlap scenario, which is expected because the true clusters are closer together and observations near cluster boundaries are more difficult to label. As before, the constrained approaches provide computational advantages relative to traditional MCMC. 

One difference in this setting is that the constrained methods show slightly worse density estimation performance than the MCMC methods. Since these approaches restrict the label space, it is possible that the candidate set does not always include the best labeling when there is large overlap. This issue was not seen in the moderate and no overlap settings. One way to address this issue is to expand the candidate set, for example by incorporating additional ML algorithms or additional initializations to construct a richer set of candidate partitions.

\begin{figure}[H]
	\centering
	\captionsetup{font={stretch=1}}
	\includegraphics[width=0.99\textwidth]{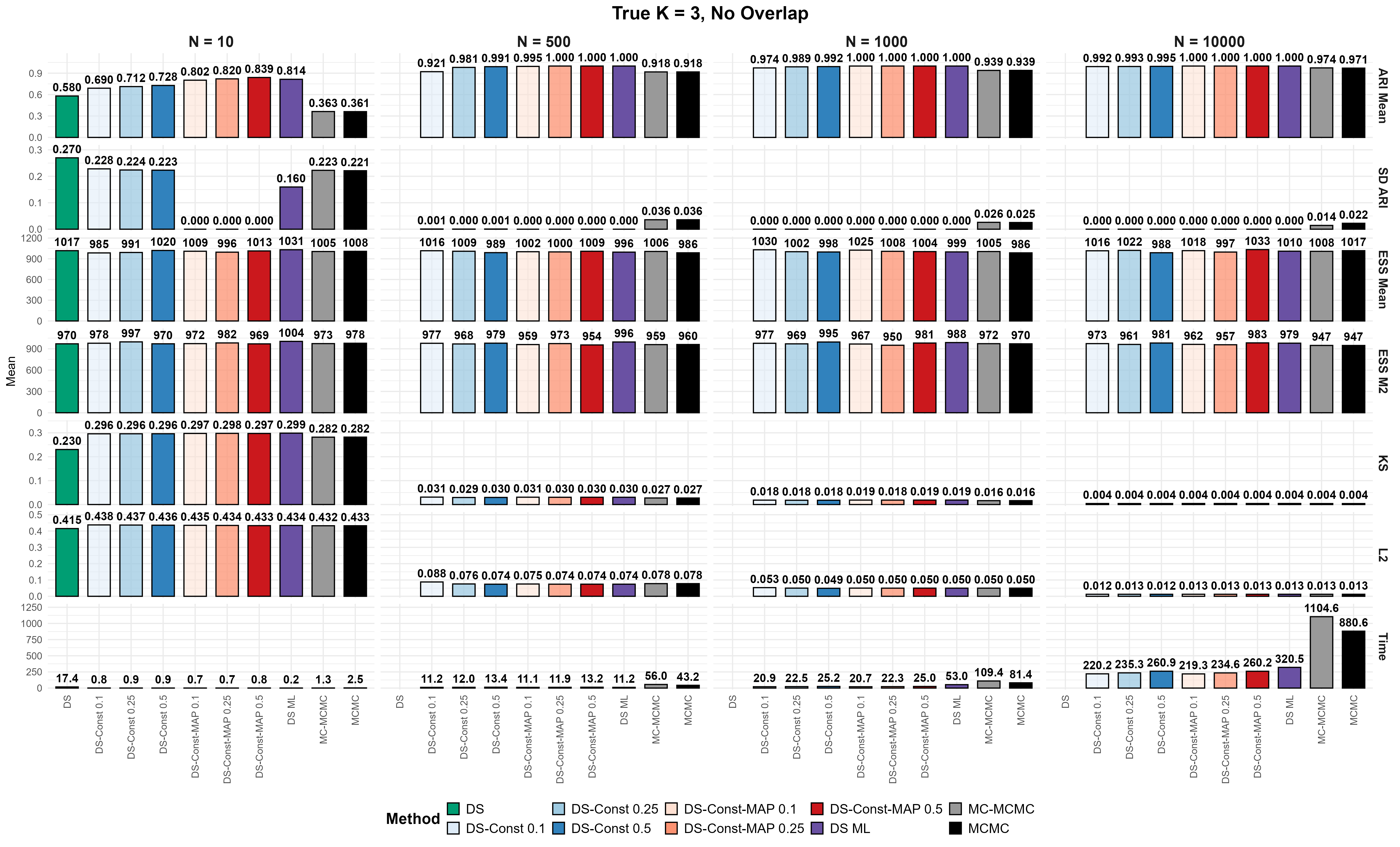}
	\caption{Mean (over 50 replicates) of ARI mean, ARI standard deviation, ESS of the first moment, ESS of the second moment, KS statistic, \(L_2\) norm, and CPU time (seconds) for each approach.}
	\label{fig:gmm:far_y}
\end{figure}

The third scenario, with no overlap between clusters is presented in Figure \ref{fig:gmm:far_y}. This setting is the easiest for clustering because the components are well separated. For larger \(N\), all methods achieve near perfect clustering performance, with ARI values close to \(1\). Density estimation performance is also very similar across methods. Even in this favorable setting, the constrained approaches have computational advantages compared to MCMC especially as \(N\) increases. 

Across all three separation scenarios, we obtain ESS values for the first and second moments close to \(1000\). An advantage of the DS and DS-constrained approaches is that this ESS is achieved using only \(1000\) posterior draws, whereas the MCMC approach typically requires additional iterations (and burn-in/thinning) to have an ESS of similar size. 

\subsection*{Appendix E.2: Simulations with Covariates}
To evaluate the different approaches in the regression scenario, we compare similar metrics as in the intercept only scenario. In this simulation we compute the KS and \(L_2\) in a different way than the previous section because the grid based computation scales with \(N \times G \times K\times n_{samp}\), becoming very slow for large \(N\). Instead, we compute KS and \(L_2\) using an observation based calculation instead of a grid based calculation, similar to \citet{skhosana2024modified}.

Let \(\{(\mathbf{x}_i,Y_i)\}_{i=1}^{N}\) denote the observed sample. For posterior draw \(j=1,\dots,n_{\text{samp}}\) and component \(k=1,\dots,K\), we have \(\pi_k^{[j]}, \boldsymbol{\beta}_k^{[j]}, \sigma_k^{[j]}\), with component mean
\(\mu_{ik}^{[j]} = \mathbf{x}_i^{\prime}\boldsymbol{\beta}_k^{[j]}\). The conditional mixture CDF for draw \(j\) evaluated at \((\mathbf{x}_i,Y_i)\) is \(F^{[j]}(Y_i\mid \mathbf{x}_i) = \sum_{k=1}^K \pi_k^{[j]}\, \Phi\!\left(\frac{Y_i -\mathbf{x}_i^{\prime}\boldsymbol{\beta}_k^{[j]}}{\sigma_k^{[j]}}\right)\) where \(\Phi\) is the standard Normal CDF. Define the posterior mean CDF \( \widehat F(Y_i\mid \mathbf{x}_i)= \frac{1}{n_{\text{samp}}}\sum_{j=1}^{n_{\text{samp}}} F^{[j]}(Y_i\mid \mathbf{x}_i)\).  The true conditional CDF is \(F_{\text{true}}(Y_i\mid \mathbf{x}_i) = \sum_{k=1}^K \pi_{k,\text{true}}\, \Phi\!\left(\frac{Y_i - \mathbf{x}_i^{\prime}\boldsymbol{\beta}_{k,\text{true}}}{\sigma_{k,\text{true}}}\right).\)   The KS statistic is computed as a maximum over observations 
\[KS= \max_{i=1,\dots,N}\left| F_{\text{true}}(Y_i\mid \mathbf{x}_i) - \widehat F(Y_i\mid \mathbf{x}_i)\right|. \]
The conditional mixture pdf is \(
f^{[j]}(Y_i\mid \mathbf{x}_i)= \sum_{k=1}^K \pi_k^{[j]}\,\frac{1}{\sigma_k^{[j]}}\,
\phi\!\left(\frac{Y_i - \mathbf{x}_i^{\prime}\boldsymbol{\beta}_k^{[j]}}{\sigma_k^{[j]}}\right),\)
where \(\phi\) is the standard Normal pdf. The posterior mean pdf is \(\widehat f(Y_i\mid \mathbf{x}_i)= \frac{1}{n_{\text{samp}}}\sum_{j=1}^{n_{\text{samp}}} f^{[j]}(Y_i\mid \mathbf{x}_i)\).
The true conditional pdf is
\(f_{\text{true}}(Y_i\mid \mathbf{x}_i)= \sum_{k=1}^K \pi_{k,\text{true}}\,\frac{1}{\sigma_{k,\text{true}}}\,
\phi\!\left(\frac{Y_i - \mathbf{x}_i^{\prime}\boldsymbol{\beta}_{k,\text{true}}}{\sigma_{k,\text{true}}}\right).\)
The empirical \(L_2\) distance between the fitted and true conditional pdfs is
\(L_2= \left\{\frac{1}{N}\sum_{i=1}^{N}\left[ f_{\text{true}}(Y_i\mid \mathbf{x}_i) - \widehat f(Y_i\mid \mathbf{x}_i)\right]^2\right\}^{1/2}\).

\begin{figure}[H]
	\centering
	\captionsetup{font={stretch=1}}
	\includegraphics[width=0.89\textwidth]{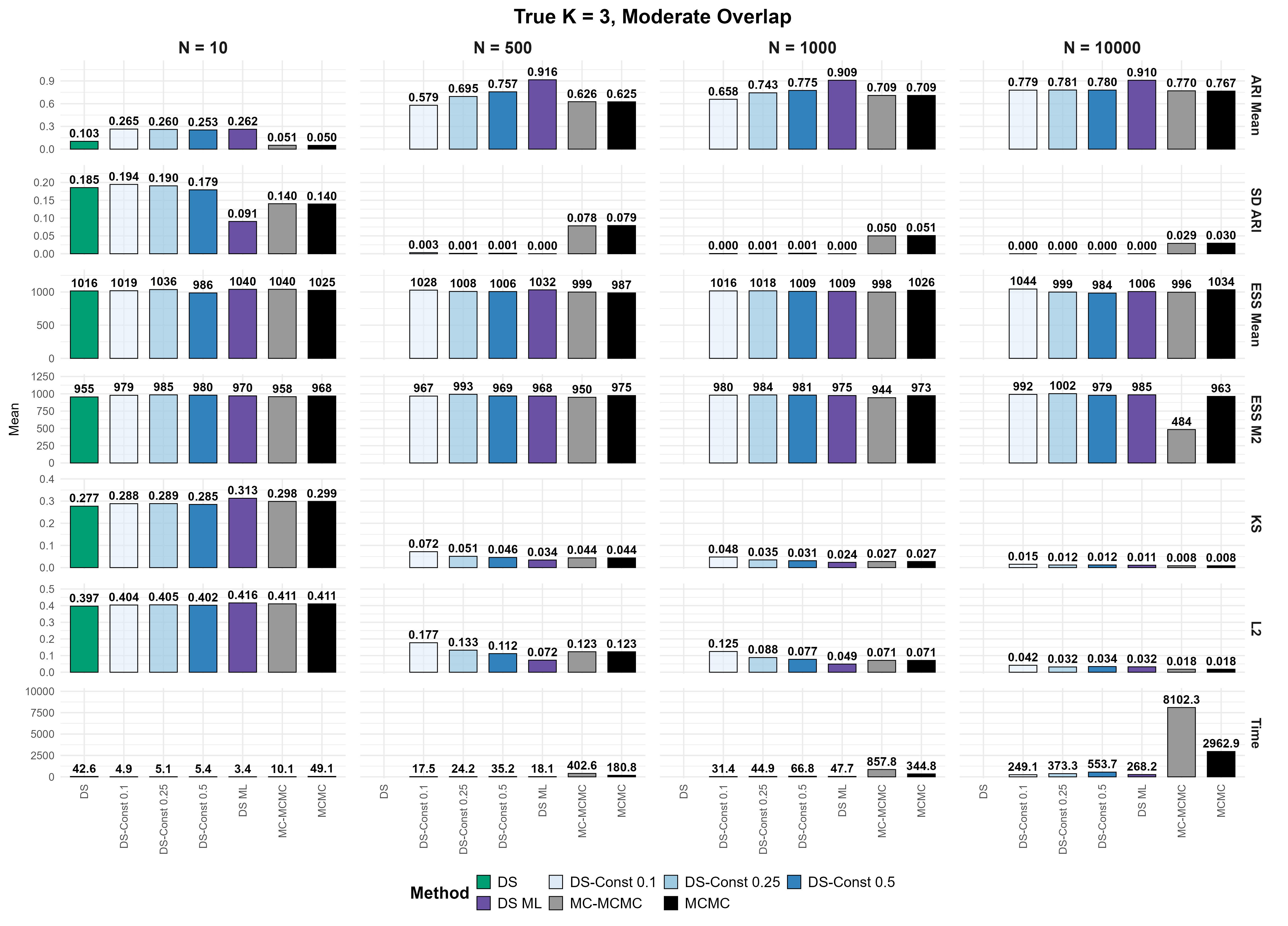}
	\caption{Mean (over 50 replicates) of ARI mean, ARI standard deviation, ESS of the first moment, ESS of the second moment, KS statistic, \(L_2\) norm, and CPU time (seconds) for each approach.}
	\label{fig:gmm:moderate_xy}
\end{figure}

Figure \ref{fig:gmm:moderate_xy} summarizes the evaluation metrics for the moderate overlap setting in the regression simulation. These results tell a similar story to the intercept only scenario. In this simulation, we omit the DS-Const-MAP approach because the \(k\)-means clustering with the MAP initialization performs poorly in the regression scenario. Overall, the constrained approaches, both the Gibbs sampler with the subset of data and the ML algorithms, perform well in the moderate overlap setting. They yield a higher ARI than MCMC while a also requiring significantly less computation time. The density estimation metrics are comparable across all methods in this scenario. 

The constrained approaches have a much smaller variability in clustering performance, as reflected by lower standard deviations of the ARI compared to MCMC. This reduced variability is expected because the constrained methods restrict the posterior label draws to a smaller candidate set. When this candidate set contains partitions close/equal to the truth and all the posterior draws come from this reduced space, then there will be consistently high ARI values across draws. The unconstrained MCMC sampler must explore a much larger label space. So some posterior draws are good partitions (high ARI) while others are further from the truth (low ARI). This is why we see a high variability and lower mean ARI for MCMC compared to the constrained approaches. 

\begin{figure}[H]
	\centering
	\captionsetup{font={stretch=1}}
	\includegraphics[width=0.89\textwidth]{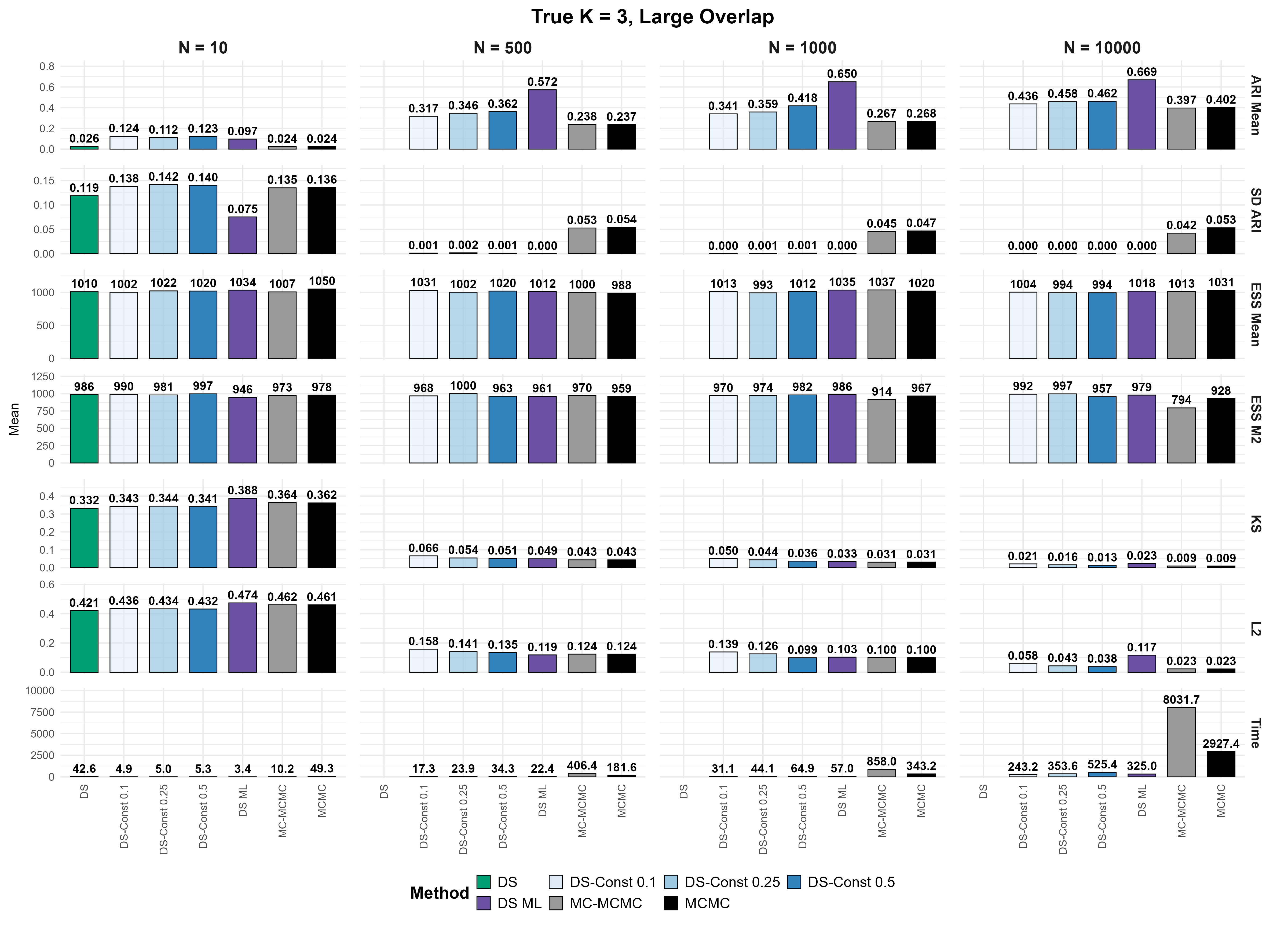}
	\caption{Mean (over 50 replicates) of ARI mean, ARI standard deviation, ESS of the first moment, ESS of the second moment, KS statistic, \(L_2\) norm, and CPU time (seconds) for each approach.}
	\label{fig:gmm:close_xy}
\end{figure}

Figure \ref{fig:gmm:close_xy} summarizes the evaluation metrics for the moderate overlap setting in the regression simulation. Similar to the intercept only simulations, ARI values are lower across all methods than in the moderate overlap setting since more observations fall in the region where the three components overlap, making labeling those observations more challenging. In this setting the constrained approaches still have lower variability in ARI than MCMC, indicating more stable clustering performance across posterior draws. The constrained approaches have density estimation performance similar to the MCMC methods. The constrained approaches also have a clear computational advantage over MCMC, which requires more iterations to achieve a comparable ESS for the first and second moment.

\begin{figure}[H]
	\centering
	\captionsetup{font={stretch=1}}
	\includegraphics[width=0.89\textwidth]{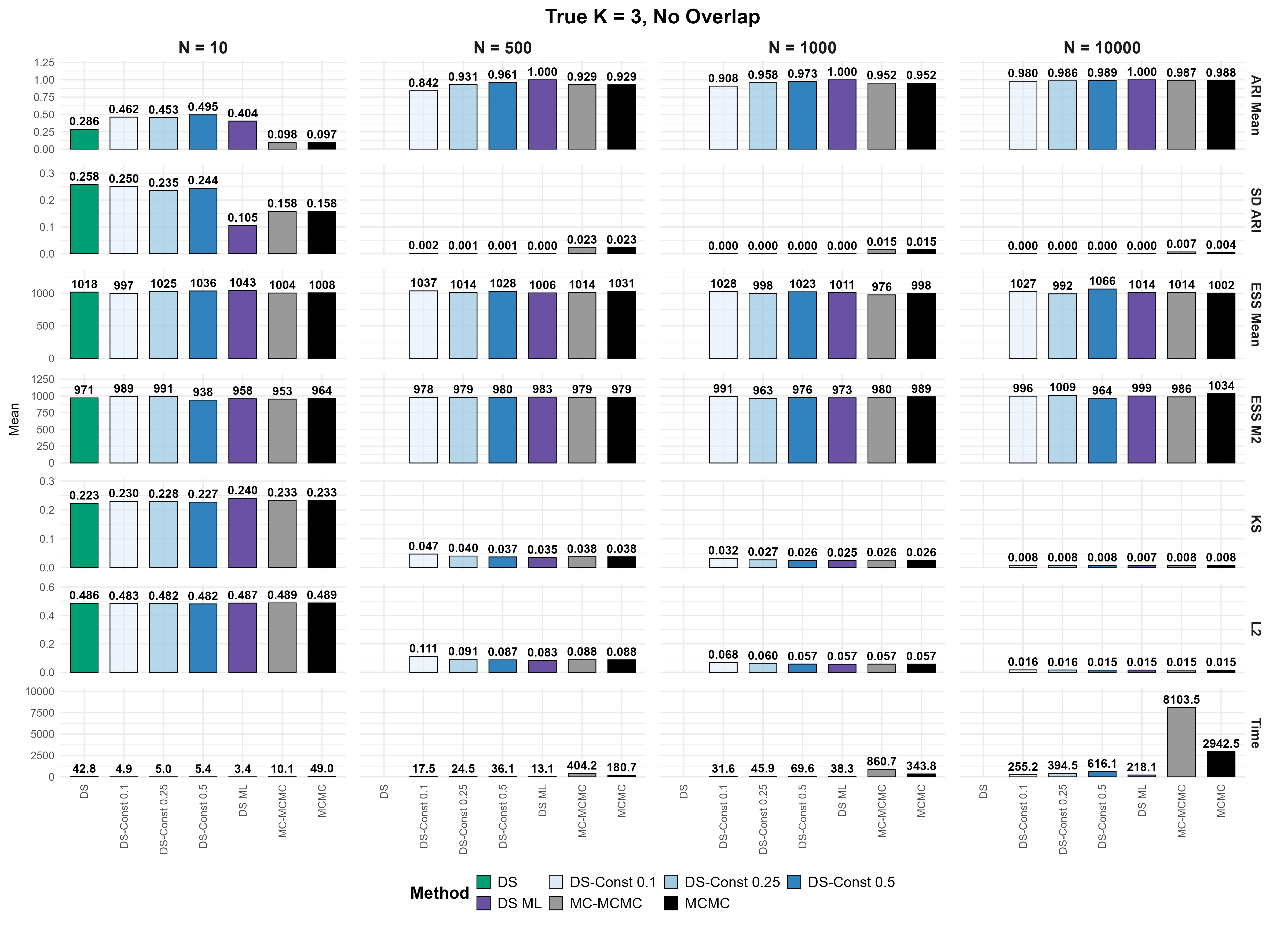}
	\caption{Mean (over 50 replicates) of ARI mean, ARI standard deviation, ESS of the first moment, ESS of the second moment, KS statistic, \(L_2\) norm, and CPU time (seconds) for each approach.}
	\label{fig:gmm:far_xy}
\end{figure}

Figure \ref{fig:gmm:far_xy} summarizes the evaluation metrics for the moderate overlap setting in the regression simulation. All methods achieve near perfect clustering performance with a mean ARI close to \(1\), and density estimation is comparable for all approaches. Even in this favorable setting, the constrained methods still have a computational advantage over MCMC.

\section*{Appendix F: Data Illustration}
We illustrate the proposed methodology using measurements on two rock crab species from the genus Leptograpsus \citep{campbell1974multivariate}, which is widely used as a benchmark example in the finite mixture of regressions literature. The dataset contains \(N = 200\) crabs with variables including frontal lip width (FL), maximum width of the carapace (CW), rear width (RW), length along the midline of the carapace (CL), and body depth (BD). In this illustration, we use FL as the covariate and CW as the response. Each crab is also assigned a color label, either blue or orange, which is treated as the ground truth cluster membership for evaluating the clustering performance via the ARI defined in Section \ref{gmm:ss:y_only}. Table \ref{fig:gmm:crab.table} shows that DS-ML and DS-Const-0.5 achieve comparable ARI values close to one, indicating strong clustering performance, while having a much faster computation time compared the the implementations with MCMC (MCMC and MC-MCMC). 

\begin{table}[H]
	\centering
	\caption{Clustering performance (mean ARI), effective sample size (ESS), and computation time measured in seconds for the crab data illustration.}
	\begin{tabular}{cccc}
		\toprule
		Method & Mean ARI & ESS & CPU Time (s) \\ \midrule
		DS-ML & 0.979 & 1000 & 9.266 \\ 
		DS-Constr-0.1 & 0.831 & 910.912 & 8.676 \\ 
		DS-Constr-0.25 & 0.935 & 1000  & 10.531 \\
		DS-Constr-0.5 & 0.971 & 1000 & 13.771 \\ 
		MC-MCMC & 0.826 & 1032.703 & 81.865\\
		MCMC  & 0.817  & 1000 & 91.052  \\ 
		\bottomrule
	\end{tabular}\label{fig:gmm:crab.table}
\end{table}

\section*{Appendix G: Sensitivity Analysis of Risk Factors}
In this Appendix, we investigate how the choice of covariates affects clustering performance in the regression analysis. Table~\ref{tab:xi_summary} summarizes the candidate covariates, which consist of gestational age and several binary risk factors.

\begin{table}[H]
	\centering
	\caption{Summary of covariates used in analysis. Binary variables are summarized by their sample proportion, and gestational age is summarized by its mean (sd).}
	\label{tab:xi_summary}
	\small
	\begin{tabular}{lll}
		\toprule
		Variable & Type & Summary \\
		\midrule
		No prenatal care & Binary & 0.021 \\
		Smoker & Binary & 0.023 \\
		Previous preterm birth & Binary & 0.039 \\
		Any diabetes & Binary & 0.098 \\
		Any hypertension & Binary & 0.135 \\
		Gestational age (weeks) & Continuous & 38.484 (1.664) \\
		\bottomrule
	\end{tabular}
\end{table}

Because clustering algorithms that rely on Euclidean distance can perform poorly when covariates are binary \citep{huang1998extensions, linzer2011polca, krzanowski1993location, hennig2013find}, we implement the DS-ML approach with algorithms designed for mixed-type (e.g., binary and continuous) data \citep{huang1998extensions, kim2017fast, foss2018kamila, mcparland2016model}. In particular, we use $k$-prototypes \citep{huang1998extensions} and KAMILA \citep{foss2018kamila}. When only the continuous covariate (gestational age) is included, we instead use the EM algorithm and $k$-means. Table~\ref{tab:test_covariates} reports the ARI under several covariate combinations. 

\begin{table}[h!]
	\centering
	\caption{ARI results for the birth weight data under different covariate sets. The unconstrained MCMC implementation ran for 4.402 days, whereas DS-ML with \(K_{\max}=5\) ran for approximately 1--2 hours across all scenarios. All scenarios had an ESS of \(1000\).}
	\label{tab:birth_weight_results_xy3}
	\small
	\setlength{\tabcolsep}{5pt}
	\begin{tabular}{lp{6.2cm}lccc}
		\toprule
		Method & Variables & Algorithms & Mean ARI & $K_{\max}$ & $K_{\text{post}}$ \\ 
		\midrule
		MCMC  & Gestational age, previous preterm, prenatal care & - & -0.024  & 2 & 2 \\
		DS-ML & Gestational age, previous preterm 
		& $k$-prototypes, kamila 
		& 0.379 & 5 & 2 \\
		
		DS-ML & Gestational age, prenatal care, smoking status, previous preterm, diabetes, hypertension 
		& $k$-prototypes, kamila 
		& 0.132  & 5 & 2 \\
		
		DS-ML &  Gestational age, prenatal care, previous preterm 
		& $k$-prototypes, kamila 
		& 0.707  & 5 & 2 \\
		
		DS-ML &  Gestational age, prenatal care 
		& $k$-prototypes, kamila 
		& 0.720  & 5 & 2 \\
		
		DS-ML &  Gestational age, prenatal care, previous preterm, smoking status 
		& $k$-prototypes, kamila 
		& 0.381 & 5 & 2 \\
		
		DS-ML & Gestational age 
		& EM, $k$-means 
		& 0.729  & 5 & 2 \\
		
		DS-ML & Gestational age, diabetes, hypertension 
		& $k$-prototypes, kamila 
		& 0.679  & 5 & 2 \\

		DS-ML & Gestational age, prenatal care, diabetes, hypertension 
		& $k$-prototypes, kamila 
		& 0.674   & 5 & 2 \\

		DS-ML & Gestational age, diabetes
		& $k$-prototypes, kamila 
		& 0.103  & 5 & 2 \\
		
		DS-ML & Gestational age, hypertension
		& $k$-prototypes, kamila 
		& 0.132  & 5 & 2 \\
		
		DS-ML & Gestational age, prenatal care, previous preterm, diabetes, hypertension 
		& $k$-prototypes, kamila 
		& 0.133 & 5 & 2 \\
		\bottomrule
		\label{tab:test_covariates}
	\end{tabular}
\end{table}

The strongest recovery of the preterm/non-preterm groups is achieved by the models including gestational age alone (ARI $=0.729$), gestational age with prenatal care (ARI $=0.720$), and gestational age with prenatal care and previous preterm birth (ARI $=0.707$). These results suggest that gestational age, together with at most one or two well-chosen covariates, is sufficient to recover the clinically relevant classes. We use the term ``well-chosen'' rather than ``informative'' to emphasize that a covariate's value is being judged here against a known reference partition. In a fully unsupervised setting, this benchmark would not be available, and formal variable selection methods for clustering \citep{witten2010framework, dy2004feature} would be more appropriate.

The impact of adding binary covariates varies across covariate sets. Adding prenatal care to gestational age preserves strong recovery (ARI $=0.720$), whereas adding diabetes alone or hypertension alone reduces ARI to $0.103$ and $0.132$, respectively. Smoking status shows a similar pattern where we see by adding it to the otherwise strong combination of gestational age, prenatal care, and previous preterm birth reduces the ARI from $0.707$ to $0.381$. Among the binary covariates considered, prenatal care is the only one that consistently preserves or improves recovery of the preterm partition. The others tend to introduce variation that pulls the clustering toward alternative latent structures. These results align with the clustering literature on variable selection that suggest natural groupings often depend on a small subset of available features, and irrelevant variables can worsen clustering performance \citep{dy2004feature, friedman2004clustering, witten2010framework, wang2008variable}. For this reason, in the main text we focus on the model that includes only gestational age as a covariate.

In Table~\ref{tab:test_covariates}, the ARI of the unconstrained MCMC implementation is near zero. To interpret this result, it is important to note that DS-ML and the MCMC sampler are clustering on different inputs. DS-ML applies clustering algorithms directly to the selected feature vectors. The Gibbs sampler fits a mixture of regressions for birth weight conditional on gestational age and the selected risk factors, and so its components reflect residual heterogeneity in birth weight after accounting for gestational age. Poor agreement with the preterm/non-preterm partition does not necessarily imply poor model fit of the Gibbs sampler. It suggests that the posterior clustering induced by the conditional mixture model targets a different latent structure.

This interpretation is consistent with \citet{schwartz2010joint}, who emphasize that birth weight and gestational age are closely related outcomes and that modeling birth weight conditional on gestational age can change the interpretation of covariate effects. They propose a Bayesian finite mixture model for the joint distribution of birth weight and gestational age instead of treating gestational age only as a covariate in a conditional birth weight model. Motivated by \citet{schwartz2010joint}, we also fit an intercept-only bivariate Gaussian mixture model for the joint distribution of birth weight and gestational age using a Gibbs sampler, with $K=2$. This model produced an ARI of $0.572$, which is much higher than the conditional MCMC result.

Table~\ref{tab:posterior_comparison} compares posterior summaries of the regression coefficients and other mixture components between DS-ML and the conditional MCMC. The two methods produce quite different parameter estimates. For example, the gestational age coefficient in component $1$ is $108.41$ under DS-ML versus $197.78$ under MCMC, the intercepts for component $2$ differ in sign, and the variance of component $2$ is approximately twice as large under MCMC compared to DS-ML. These differences support the interpretation that the two approaches are identifying different mixture components.

\begin{table}[H]
	\centering
	\caption{Posterior means and standard deviations of mixture components for DS-ML vs MCMC.}
	\label{tab:posterior_comparison}
	\begin{tabular}{lrrrr}
		\hline
		Parameter & DS-ML Mean & DS-ML SD & MCMC Mean & MCMC SD \\
		\hline
		$\pi_1$ & 0.8867 & 0.0002 & 0.8702 & 0.0014 \\
		$\pi_2$  & 0.1133 & 0.0002 & 0.1298 & 0.0014 \\
		$\sigma^2_1$ & 170,222.9 & 141.3 & 144,512 & 234.9 \\
		$\sigma^2_2$ & 153,094.3 & 360.8 & 290,337.3 & 1,197.2 \\
		$\beta_{1,1}$ (Intercept) & -824.56 & 8.83 & -4,374.39 & 6.22 \\
		$\beta_{1,2}$ (Gestational age) & 108.41 & 0.23 & 197.78 & 0.16 \\
		$\beta_{1,3}$ (No prenatal care) & -41.82 & 1.9 & -55.85 & 2.08 \\
		$\beta_{1,4}$ (Previous preterm) & 83.88 & 1.65 & -3.73 & 1.39 \\
		$\beta_{2,1}$ (Intercept) & -2,553.74 & 10.64 & 1,284.89 & 52.16 \\
		$\beta_{2,2}$ (Gestational age) & 140.48 & 0.3 & 61.04 & 1.32 \\
		$\beta_{2,3}$ (No prenatal care) & 79.21 & 2.93 & -188.76 & 10.88 \\
		$\beta_{2,4}$ (Previous preterm) & 175.22 & 1.67 & -4.36 & 6.7 \\
		\hline
	\end{tabular}
\end{table}

Overall, the DS-ML approach constructs a constrained candidate set using clustering algorithms applied directly to the selected features. When gestational age is included among these features, the resulting candidate partitions can more directly reflect the preterm/non-preterm structure. The higher ARI values for DS-ML therefore should be interpreted as evidence that the constrained candidate set contains partitions aligned with the clinical reference grouping, rather than as evidence that the unconstrained Gibbs sampler has failed. 

{\renewcommand{\baselinestretch}{1}\selectfont
	\bibliography{bibliography.bib}

@article{pageetal:2025,
  author  = {Page, Garritt L. and Ventrucci, Massimo and Franco-Villoria, Maria and Seeley, Matthew K.},
  title   = {Informed Bayesian Finite Mixture Models via Asymmetric Dirichlet Priors},
  journal = {The Annals of Applied Statistics},
  year    = {2025},
  volume  = {19},
  number  = {3},
  pages   = {2412--2435},
}

@ARTICLE{argiento:2022,
  author = {{Argiento}, Raffaele and {De Iorio}, Maria},
  title = {Is Infinity that Far? A Bayesian Nonparametric Perspective of Finite Mixture Models},
  journal = {Annals of Statistics},
  year = {2022},
  pages = {2641-2663},
  volume= {50},
  number = {5}
}

@Article {RousseauMengersen:2011,
 author = {Rousseau, Judith and Mengersen, Kerrie},
 title = {Asymptotic behaviour of the posterior distribution in overfitted mixture models},
 journal = {Journal of the Royal Statistical Society: Series B (Statistical Methodology)},
 volume = {73},
 number = {5},
 pages = {689-710},
 year = {2011}
}

@book{schervish2012theory,
  title={Theory of statistics},
  author={Schervish, Mark J},
  year={2012},
  publisher={Springer Science \& Business Media}
}

@book{press2009subjective,
	title={Subjective and objective Bayesian statistics: principles, models, and applications},
	author={Press, S James},
	year={2009},
	publisher={John Wiley \& Sons}
}

@article{celeux2000computational,
  title={Computational and inferential difficulties with mixture posterior distributions},
  author={Celeux, Gilles and Hurn, Merrilee and Robert, Christian P},
  journal={Journal of the American Statistical Association},
  volume={95},
  number={451},
  pages={957--970},
  year={2000},
  publisher={Taylor \& Francis}
}

@article{jasra2005markov,
  author    = {Jasra, Ajay and Holmes, Christopher C. and Stephens, David A.},
  title     = {Markov Chain Monte Carlo Methods and the Label Switching Problem in Bayesian Mixture Modeling},
  journal   = {Statistical Science},
  volume    = {20},
  number    = {1},
  pages     = {50--67},
  year      = {2005},
  month     = {February},
}

@article{stephens2000bayesian,
  title={Bayesian analysis of mixture models with an unknown number of components-an alternative to reversible jump methods},
  author={Stephens, Matthew},
  journal={Annals of statistics},
  pages={40--74},
  year={2000},
  publisher={JSTOR}
}

@article{diebolt1994estimation,
  title={Estimation of finite mixture distributions through Bayesian sampling},
  author={Diebolt, Jean and Robert, Christian P},
  journal={Journal of the Royal Statistical Society: Series B (Methodological)},
  volume={56},
  number={2},
  pages={363--375},
  year={1994},
  publisher={Wiley Online Library}
}

@article{richardson1997bayesian,
  title={On Bayesian analysis of mixtures with an unknown number of components (with discussion)},
  author={Richardson, Sylvia and Green, Peter J},
  journal={Journal of the Royal Statistical Society Series B: Statistical Methodology},
  volume={59},
  number={4},
  pages={731--792},
  year={1997},
  publisher={Oxford University Press}
}

@article{stephens2000dealing,
  title={Dealing with label switching in mixture models},
  author={Stephens, Matthew},
  journal={Journal of the Royal Statistical Society: Series B (Statistical Methodology)},
  volume={62},
  number={4},
  pages={795--809},
  year={2000},
  publisher={Wiley Online Library}
}

@article{jain2004split,
  title={A split-merge Markov chain Monte Carlo procedure for the Dirichlet process mixture model},
  author={Jain, Sonia and Neal, Radford M},
  journal={Journal of computational and Graphical Statistics},
  volume={13},
  number={1},
  pages={158--182},
  year={2004},
  publisher={Taylor \& Francis}
}

@article{nobile2007bayesian,
  title={Bayesian finite mixtures with an unknown number of components: The allocation sampler},
  author={Nobile, Agostino and Fearnside, Alastair T},
  journal={Statistics and Computing},
  volume={17},
  number={2},
  pages={147--162},
  year={2007},
  publisher={Springer}
}

@article{viele2002modeling,
  title={Modeling with mixtures of linear regressions},
  author={Viele, Kert and Tong, Barbara},
  journal={Statistics and Computing},
  volume={12},
  number={4},
  pages={315--330},
  year={2002},
  publisher={Springer}
}

@article{banerjee2020modeling,
  title={Modeling massive spatial datasets using a conjugate Bayesian linear modeling framework},
  author={Banerjee, Sudipto},
  journal={Spatial statistics},
  volume={37},
  pages={100417},
  year={2020},
  publisher={Elsevier}
}

@article{finley2019efficient,
  title={Efficient algorithms for Bayesian nearest neighbor Gaussian processes},
  author={Finley, Andrew O and Datta, Abhirup and Cook, Bruce D and Morton, Douglas C and Andersen, Hans E and Banerjee, Sudipto},
  journal={Journal of Computational and Graphical Statistics},
  volume={28},
  number={2},
  pages={401--414},
  year={2019},
  publisher={Taylor \& Francis}
}

@article{zhang2019practical,
  title={Practical Bayesian modeling and inference for massive spatial data sets on modest computing environments},
  author={Zhang, Lu and Datta, Abhirup and Banerjee, Sudipto},
  journal={Statistical Analysis and Data Mining: The ASA Data Science Journal},
  volume={12},
  number={3},
  pages={197--209},
  year={2019},
  publisher={Wiley Online Library}
}

@article{taylor2015statistical,
  title={Statistical learning and selective inference},
  author={Taylor, Jonathan and Tibshirani, Robert J},
  journal={Proceedings of the National Academy of Sciences},
  volume={112},
  number={25},
  pages={7629--7634},
  year={2015},
  publisher={National Academy of Sciences}
}

@article{fithian2014optimal,
  title={Optimal inference after model selection},
  author={Fithian, William and Sun, Dennis and Taylor, Jonathan},
  journal={arXiv preprint arXiv:1410.2597},
  year={2014}
}

@article{gao2024selective,
  title={Selective inference for hierarchical clustering},
  author={Gao, Lucy L and Bien, Jacob and Witten, Daniela},
  journal={Journal of the American Statistical Association},
  volume={119},
  number={545},
  pages={332--342},
  year={2024},
  publisher={Taylor \& Francis}
}

@article{chen2023selective,
  title={Selective inference for k-means clustering},
  author={Chen, Yiqun T and Witten, Daniela M},
  journal={Journal of Machine Learning Research},
  volume={24},
  number={152},
  pages={1--41},
  year={2023}
}

@article{yekutieli2012adjusted,
  title={Adjusted Bayesian inference for selected parameters},
  author={Yekutieli, Daniel},
  journal={Journal of the Royal Statistical Society Series B: Statistical Methodology},
  volume={74},
  number={3},
  pages={515--541},
  year={2012},
  publisher={Oxford University Press}
}

@article{park2020function,
  title={A function emulation approach for doubly intractable distributions},
  author={Park, Jaewoo and Haran, Murali},
  journal={Journal of Computational and Graphical Statistics},
  volume={29},
  number={1},
  pages={66--77},
  year={2020},
  publisher={Taylor \& Francis}
}

@article{zong2023criterion,
  title={Criterion constrained Bayesian hierarchical models},
  author={Zong, Qingying and Bradley, Jonathan R},
  journal={TEST},
  volume={32},
  number={1},
  pages={294--320},
  year={2023},
  publisher={Springer}
}

@article{ng2001spectral,
  title={On spectral clustering: Analysis and an algorithm},
  author={Ng, Andrew and Jordan, Michael and Weiss, Yair},
  journal={Advances in neural information processing systems},
  volume={14},
  year={2001}
}

@article{hartigan1979algorithm,
  title={Algorithm AS 136: A k-means clustering algorithm},
  author={Hartigan, John A and Wong, Manchek A},
  journal={Journal of the royal statistical society. series c (applied statistics)},
  volume={28},
  number={1},
  pages={100--108},
  year={1979},
  publisher={JSTOR}
}

@article{ahmed2020k,
  title={The k-means algorithm: A comprehensive survey and performance evaluation},
  author={Ahmed, Mohiuddin and Seraj, Raihan and Islam, Syed Mohammed Shamsul},
  journal={Electronics},
  volume={9},
  number={8},
  pages={1295},
  year={2020},
  publisher={MDPI}
}

@article{ward1963hierarchical,
  title={Hierarchical grouping to optimize an objective function},
  author={Ward Jr, Joe H},
  journal={Journal of the American statistical association},
  volume={58},
  number={301},
  pages={236--244},
  year={1963},
  publisher={Taylor \& Francis}
}

@article{murtagh2014ward,
  title={Ward’s hierarchical agglomerative clustering method: which algorithms implement Ward’s criterion?},
  author={Murtagh, Fionn and Legendre, Pierre},
  journal={Journal of classification},
  volume={31},
  number={3},
  pages={274--295},
  year={2014},
  publisher={Springer}
}

@article{park2009simple,
  title={A simple and fast algorithm for K-medoids clustering},
  author={Park, Hae-Sang and Jun, Chi-Hyuck},
  journal={Expert systems with applications},
  volume={36},
  number={2},
  pages={3336--3341},
  year={2009},
  publisher={Elsevier}
}

@article{dempster1977maximum,
  title={Maximum likelihood from incomplete data via the EM algorithm},
  author={Dempster, Arthur P and Laird, Nan M and Rubin, Donald B},
  journal={Journal of the royal statistical society: series B (methodological)},
  volume={39},
  number={1},
  pages={1--22},
  year={1977},
  publisher={Wiley Online Library}
}

@article{skhosana2024modified,
  title={A modified EM-type algorithm to estimate semi-parametric mixtures of non-parametric regressions},
  author={Skhosana, Sphiwe B and Millard, Salomon M and Kanfer, Frans HJ},
  journal={Statistics and Computing},
  volume={34},
  number={4},
  pages={125},
  year={2024},
  publisher={Springer}
}

@article{bradley2021empirical,
  title={Empirical Bayesian analysis through the lens of a particular class of constrained Bayesian hierarchical models},
  author={Bradley, Jonathan R and Zong, Qingying},
  journal={Stat},
  volume={10},
  number={1},
  pages={e403},
  year={2021},
  publisher={Wiley Online Library}
}

@article{campbell1974multivariate,
  title={A multivariate study of variation in two species of rock crab of the genus Leptograpsus},
  author={Campbell, NA and Mahon, RJ},
  journal={Australian Journal of Zoology},
  volume={22},
  number={3},
  pages={417--425},
  year={1974},
  publisher={CSIRO Publishing}
}

@article{mcgrory2007variational,
  title={Variational approximations in Bayesian model selection for finite mixture distributions},
  author={McGrory, Clare A and Titterington, DM237087605559631},
  journal={Computational Statistics \& Data Analysis},
  volume={51},
  number={11},
  pages={5352--5367},
  year={2007},
  publisher={Elsevier}
}

@article{fan2012variational,
  title={Variational learning for finite Dirichlet mixture models and applications},
  author={Fan, Wentao and Bouguila, Nizar and Ziou, Djemel},
  journal={IEEE transactions on neural networks and learning systems},
  volume={23},
  number={5},
  pages={762--774},
  year={2012},
  publisher={IEEE}
}

@article{tentoni2004birthweight,
  title={Birthweight by gestational age in preterm babies according to a Gaussian mixture model},
  author={Tentoni, Stefania and Astolfi, Paola and De Pasquale, Antonio and Zonta, Laura A},
  journal={BJOG: An International Journal of Obstetrics \& Gynaecology},
  volume={111},
  number={1},
  pages={31--37},
  year={2004},
  publisher={Wiley Online Library}
}

@article{platt2001detecting,
  title={Detecting and eliminating erroneous gestational ages: a normal mixture model},
  author={Platt, Robert W and Abrahamowicz, Michal and Kramer, Michael S and Joseph, KS and Mery, Les and Blondel, B{\'e}atrice and Br{\'e}art, G{\'e}rard and Wen, SW},
  journal={Statistics in medicine},
  volume={20},
  number={23},
  pages={3491--3503},
  year={2001},
  publisher={Wiley Online Library}
}

@article{urquia2012mixture,
  title={A mixture model to correct misclassification of gestational age},
  author={Urquia, Marcelo L and Moineddin, Rahim and Frank, John W},
  journal={Annals of epidemiology},
  volume={22},
  number={3},
  pages={151--159},
  year={2012},
  publisher={Elsevier}
}

@article{schwartz2010joint,
  title={Joint Bayesian analysis of birthweight and censored gestational age using finite mixture models},
  author={Schwartz, Scott L and Gelfand, Alan E and Miranda, Marie L},
  journal={Statistics in medicine},
  volume={29},
  number={16},
  pages={1710--1723},
  year={2010},
  publisher={Wiley Online Library}
}

@article{charnigo2010thinking,
  title={Thinking outside the curve, part I: modeling birthweight distribution},
  author={Charnigo, Richard and Chesnut, Lorie W and LoBianco, Tony and Kirby, Russell S},
  journal={BMC pregnancy and childbirth},
  volume={10},
  number={1},
  pages={37},
  year={2010},
  publisher={Springer}
}

@article{huang1998extensions,
  title={Extensions to the k-means algorithm for clustering large data sets with categorical values},
  author={Huang, Zhexue},
  journal={Data mining and knowledge discovery},
  volume={2},
  number={3},
  pages={283--304},
  year={1998},
  publisher={Springer}
}

@article{linzer2011polca,
  title={poLCA: An R package for polytomous variable latent class analysis},
  author={Linzer, Drew A and Lewis, Jeffrey B},
  journal={Journal of statistical software},
  volume={42},
  pages={1--29},
  year={2011}
}

@article{krzanowski1993location,
  title={The location model for mixtures of categorical and continuous variables},
  author={Krzanowski, WJ},
  journal={Journal of Classification},
  volume={10},
  number={1},
  pages={25--49},
  year={1993},
  publisher={Springer}
}

@article{hennig2013find,
  title={How to find an appropriate clustering for mixed-type variables with application to socio-economic stratification},
  author={Hennig, Christian and Liao, Tim F},
  journal={Journal of the Royal Statistical Society Series C: Applied Statistics},
  volume={62},
  number={3},
  pages={309--369},
  year={2013},
  publisher={Oxford University Press}
}

@article{foss2018kamila,
  title={kamila: clustering mixed-type data in R and Hadoop},
  author={Foss, Alexander H and Markatou, Marianthi},
  journal={Journal of Statistical Software},
  volume={83},
  pages={1--44},
  year={2018}
}

@article{kim2017fast,
  title={A fast K-prototypes algorithm using partial distance computation},
  author={Kim, Byoungwook},
  journal={Symmetry},
  volume={9},
  number={4},
  pages={58},
  year={2017},
  publisher={MDPI}
}

@article{mcparland2016model,
  title={Model based clustering for mixed data: clustMD},
  author={McParland, Damien and Gormley, Isobel Claire},
  journal={Advances in Data Analysis and Classification},
  volume={10},
  number={2},
  pages={155--169},
  year={2016},
  publisher={Springer}
}

@article{oja1991fitting,
  title={Fitting mixture models to birth weight data: a case study},
  author={Oja, Hannu and Koiranen, Markku and Rantakallio, Paula},
  journal={Biometrics},
  pages={883--897},
  year={1991},
  publisher={JSTOR}
}

@article{gage2002modeling,
  title={Modeling birthweight and gestational age distributions: Additive vs. multiplicative processes},
  author={Gage, Timothy B},
  journal={American Journal of Human Biology},
  volume={14},
  number={6},
  pages={728--734},
  year={2002},
  publisher={Wiley Online Library}
}

@article{slaughter2009bayesian,
  title={A Bayesian latent variable mixture model for longitudinal fetal growth},
  author={Slaughter, James C and Herring, Amy H and Thorp, John M},
  journal={Biometrics},
  volume={65},
  number={4},
  pages={1233--1242},
  year={2009},
  publisher={Oxford University Press}
}

@article{wilcox1983birthweight,
  title={Birthweight and perinatal mortality: I. On the frequency distribution of birthweight},
  author={Wilcox, Allen J and Russell, Ian T},
  journal={International Journal of Epidemiology},
  volume={12},
  number={3},
  pages={314--318},
  year={1983},
  publisher={Oxford University Press}
}

@article{goldenberg2008epidemiology,
  title={Epidemiology and causes of preterm birth},
  author={Goldenberg, Robert L and Culhane, Jennifer F and Iams, Jay D and Romero, Roberto},
  journal={The lancet},
  volume={371},
  number={9606},
  pages={75--84},
  year={2008},
  publisher={Elsevier}
}

@article{hubert1985comparing,
  title={Comparing partitions},
  author={Hubert, Lawrence and Arabie, Phipps},
  journal={Journal of classification},
  volume={2},
  number={1},
  pages={193--218},
  year={1985},
  publisher={Springer}
}

@article{dahl2022search,
  title={Search algorithms and loss functions for Bayesian clustering},
  author={Dahl, David B and Johnson, Devin J and M{\"u}ller, Peter},
  journal={Journal of Computational and Graphical Statistics},
  volume={31},
  number={4},
  pages={1189--1201},
  year={2022},
  publisher={Taylor \& Francis}
}

@article{dy2004feature,
  title={Feature selection for unsupervised learning},
  author={Dy, Jennifer G and Brodley, Carla E},
  journal={Journal of machine learning research},
  volume={5},
  number={Aug},
  pages={845--889},
  year={2004}
}

@article{friedman2004clustering,
  title={Clustering objects on subsets of attributes (with discussion)},
  author={Friedman, Jerome H and Meulman, Jacqueline J},
  journal={Journal of the Royal Statistical Society Series B: Statistical Methodology},
  volume={66},
  number={4},
  pages={815--849},
  year={2004},
  publisher={Oxford University Press}
}

@article{witten2010framework,
  title={A framework for feature selection in clustering},
  author={Witten, Daniela M and Tibshirani, Robert},
  journal={Journal of the American Statistical Association},
  volume={105},
  number={490},
  pages={713--726},
  year={2010},
  publisher={Taylor \& Francis}
}

@article{wang2008variable,
  title={Variable selection for model-based high-dimensional clustering and its application to microarray data},
  author={Wang, Sijian and Zhu, Ji},
  journal={Biometrics},
  volume={64},
  number={2},
  pages={440--448},
  year={2008},
  publisher={Oxford University Press}
}
}

\end{document}